\def\AdvCite{True} 
\renewcommand*{\multicitedelim}{\addcomma\space}
\newcommand{\myhref}[1]{%
	\iffieldundef{doi}
	{\iffieldundef{url}
		{#1}
		{\href{\strfield{url}}{#1}}}
	{\href{http://dx.doi.org/\strfield{doi}}{#1}}%
}
	\newlength{\temp@x}%
	\newlength{\temp@y}%
	\newlength{\temp@w}%
	\newlength{\temp@h}%
	\def\my@coords#1#2#3#4{%
		\setlength{\temp@x}{#1}%
		\setlength{\temp@y}{#2}%
		\setlength{\temp@w}{#3}%
		\setlength{\temp@h}{#4}%
		\adjustlengths{}%
		\my@pdfliteral{\strip@pt\temp@x\space\strip@pt\temp@y\space\strip@pt\temp@w\space\strip@pt\temp@h\space re}}%
	\def\my@pdfliteral#1{\pdfliteral page{#1}}
	\def\adjustlengths{}%
	\def\my@pdfliteral #1{}
	\def\adjustlengths{\setlength{\temp@h}{-\temp@h}\addtolength{\temp@y}{1in}\addtolength{\temp@x}{-1in}}%
	\def\Hy@colorlink#1{%
		\begingroup
		\ifHy@ocgcolorlinks
		\def\Hy@ocgcolor{#1}%
		\my@pdfliteral{q}%
		\my@pdfliteral{7 Tr}
		\else
		\HyColor@UseColor#1%
		\fi
	}%
	\def\Hy@endcolorlink{%
		\ifHy@ocgcolorlinks%
		\my@pdfliteral{/OC/OCPrint BDC}%
		\my@coords{0pt}{0pt}{\pdfpagewidth}{\pdfpageheight}%
		\my@pdfliteral{F}
		%
		\my@pdfliteral{EMC/OC/OCView BDC}%
		\begingroup%
		\expandafter\HyColor@UseColor\Hy@ocgcolor%
		\my@coords{0pt}{0pt}{\pdfpagewidth}{\pdfpageheight}%
		\my@pdfliteral{F}
		\endgroup%
		\my@pdfliteral{EMC}%
		\my@pdfliteral{0 Tr}
		\my@pdfliteral{Q}%
		\fi
		\endgroup
	}%
\g@addto@macro\bfseries{\boldmath}
\g@addto@macro\mdseries{\unboldmath}
\g@addto@macro\normalfont{\unboldmath}
\g@addto@macro\rmfamily{\unboldmath}
\g@addto@macro\upshape{\unboldmath}
\renewcommand{\paragraph}[1]{\medskip\noindent{\bf #1}\xspace}
\colorlet{DarkRed}{red!50!black} 
\colorlet{DarkGreen}{green!50!black}
\colorlet{DarkBlue}{blue!50!black}
\declaretheorem[numberwithin=section]{theorem}
\declaretheorem[numberlike=theorem]{lemma}
\declaretheorem[numberlike=theorem]{proposition}
\declaretheorem[numberlike=theorem]{corollary}
\declaretheorem[numberlike=theorem]{claim}
\crefname{algorithm}{Algorithm}{Algorithms}
\Crefname{algorithm}{Algorithm}{Algorithms}
\theoremstyle{definition}
\newcommand{\ot}{\tilde{O}}
\newcommand{\ignore}[1]{}
\newcommand{\cC}{\mathcal{C}\xspace}
\newcommand{\eps}{\epsilon}
\newcommand{\poly}{\operatorname{poly}} 
\newcommand{\polylog}{\operatorname{polylog}}
\newcommand{\vol}{\operatorname{vol}}
\newcommand{\textlocal}{\operatorname{local}} 
\newcommand{\textpair}{\operatorname{pair}} 
\newcommand{\textout}{\operatorname{out}} 
\newcommand{\logk}{\lfloor \log_2k \rfloor}   
\newcommand{\logeta}{\lfloor \log_2\eta_i\rfloor}
\newcommand{\gap}{\mathtt{gap}}%
 \newcommand{\textbig}{\operatorname{big}} 
 \newcommand{\textsmall}{\operatorname{small}} 
\newcommand{\out}{\operatorname{out}}  
\newcommand{\inn}{\operatorname{in}}  
\newcommand{\textin}{\operatorname{in}}
\newcommand{\earlybotline}{$7 $}   
\global\long\def\localEC{\text{LocalEC}}%
\global\long\def\DFS{\mathtt{DFS}}%
\def\ShowComment{True} 
\def\thatchaphol#1{\marginpar{$\leftarrow$\fbox{T}}\footnote{$\Rightarrow$~{\sf\textcolor{purple}{#1 --Thatchaphol}}}}
\def\danupon#1{\marginpar{$\leftarrow$\fbox{D}}\footnote{$\Rightarrow$~{\sf\textcolor{orange}{#1 --Danupon}}}}
\def\sorrachai#1{\marginpar{$\leftarrow$\fbox{S}}\footnote{$\Rightarrow$~{\sf\textcolor{red}{#1 --Shen}}}}
\def\note#1{#1}
\def\thatchaphol#1{}
\def\danupon#1{}
\def\shen#1{}
\def\sorrachai#1{}
\def\note#1{} 
\title{Computing and Testing Small Vertex Connectivity in Near-Linear Time and Queries}
\author[1]{Danupon Nanongkai}
\author[2]{Thatchaphol Saranurak\thanks{Work partially done while at KTH Royal Institute of Technology, Sweden.}}
\author[3]{Sorrachai Yingchareonthawornchai\thanks{Work partially done while at Michigan State University, USA.}}
\affil[1]{KTH Royal Institute of Technology, Sweden}
\affil[2]{Toyota Technological Institute at Chicago, USA}
\affil[3]{Aalto University, Finland}
\date{} 
\begin{document}

	\begin{titlepage}   
		\maketitle
		\pagenumbering{roman}   
		

\begin{abstract}
We present a new, simple, algorithm for the \textit{local vertex connectivity} problem
(localVC) introduced by Nanongkai~et~al. [STOC'19]. Roughly, given an undirected unweighted graph $G$, a seed
vertex $x$, a target volume $\nu$, and a target separator size $k$, the goal of LocalVC is to 
remove $k$ vertices ``near'' $x$ (in terms of $\nu$)  to disconnect the graph in ``local time'', which depends only on parameters $\nu$ and $k$. Nanongkai~et~al. presented an $O(\nu^{1.5}k\polylog(\nu k))$-time deterministic algorithm for this problem.
In this paper, we present a simple randomized algorithm with running time $O(\nu k^2)$ and correctness probability $2/3$. Our algorithm is faster than the previous one when $k=O(\sqrt{\nu})$. We also can handle directed graphs and achieve $(1+\epsilon)$-approximation with even faster running time. 
	
Plugging our new localVC algorithm in the generic framework of Nanongkai~et~al. immediately lead to a randomized $\tilde O(m+nk^3)$-time algorithm for the classic $k$-vertex connectivity problem on undirected graphs. ($\tilde O(T)$ hides $\polylog(T)$.)
This is the {\em first} near-linear time  algorithm for any $4\leq k \leq \polylog n$. Previously, linear-time algorithms were known only for $k\leq 3$ [Tarjan FOCS’71; Hopcroft, Tarjan SICOMP’73], despite a linear-time algorithm being  postulated since 1974 in the book of Aho, Hopcroft and Ullman. Previous fastest algorithm for small $k$ takes $\tilde O(m+n^{4/3}k^{7/3})$ time [Nanongkai~et~al., STOC'19].  

This work is inspired by the algorithm of Chechik~et~al. [SODA'17] for computing the maximal $k$-edge connected subgraphs. In turn, our algorithms lead to some improvements over the bounds of Chechik~et~al. 
Forster and Yang [arXiv'19] has independently developed local algorithms similar to ours, and showed that they  
lead to an $\tilde O(k^3/\epsilon)$ bound for testing $k$-edge and -vertex connectivity, resolving two long-standing open problems in property testing since the work of Goldreich and Ron [STOC'97] and Orenstein and Ron [Theor. Comput. Sci.'11]. Inspired by this, we use local approximation algorithms to obtain bounds that are near-linear in $k$, namely $\tilde O(k/\epsilon)$ and $\tilde O(k/\epsilon^2)$ for the bounded and unbounded degree cases, respectively. 
For testing $k$-edge connectivity for simple graphs, the bound can be improved to $\tilde O(\min(k/\epsilon, 1/\epsilon^2))$. 
\end{abstract}
\begin{footnotesize}
	\noindent {\bf Independent work:} Independently from our result, Forster and Yang [2019] present local algorithms similar to ours and observed faster algorithms for computing the vertex connectivity and  the maximal $k$-edge connected subgraphs (with lower running time than ours for the second problem).\footnote{\label{foot:independent_result}On April 17, 2019, our result was announced in the TCS+ talk by Thatchaphol Saranurak (\url{https://youtu.be/V1kq1filhjk}) and Forster and Yang announced their result at \url{https://arxiv.org/abs/1904.08382}.}
	They additionally observed that this leads to resolving two open problems in property testing, but did not observe an application to approximating the vertex connectivity. Our bounds for  testing connectivities were inspired by their observation.  
\end{footnotesize}  
		\newpage     
		\setcounter{tocdepth}{2}  
		\tableofcontents    
	\end{titlepage}

	\newpage         
	\pagenumbering{arabic}           
	
	\section{Introduction}\label{sec:intro}

Vertex connectivity is a basic graph-theoretic concept. It concerns the smallest {\em vertex cut} where a 
set  $S$ of vertices is a vertex cut of a graph $G$ if its removal disconnects some vertex $u\notin S$ 
from another vertex $v\notin S$.   (When $G$ is directed, this means that there is no directed path from  $u$ to $v$ in the remaining graph.) 
The {\em vertex connectivity} of $G$, denoted by $\kappa_{G}$ , is the size of the smallest vertex cut. The goal of the vertex connectivity problem is to compute $\kappa_G$ and  the smallest vertex cut. 
In this paper, we present a new, simple, algorithm for the {\em local} version of this problem, leading to almost optimal bounds for computing and approximating $\kappa_G$ when $\kappa_G$ is small, and other applications. 
For simplicity, our discussions below focus on exact algorithms for undirected graphs. 

%
%
%

\paragraph{Local Vertex Connectivity (LocalVC).} This problem concerns finding a vertex cut ``near'' a given vertex $x$. More precisely, for any vertex $v\in V$, define $N(v)$ to the set of neighbors of $v$, $\deg(v)=|N(v)|$, $N(L)=(\bigcup_{v\in L} N(v))\setminus L$, and $\vol(L)=\sum_{v\in L} \deg(v)$ (we call $\vol(L)$ the {\em volume} of $L$).  
Given a vertex $x$ and two integers $\nu$ and $k$, the LocalVC problem concerns the set $L\subseteq V$ such that
\begin{align}\label{eq:intro:LocalVC}
x\in L, \ \mbox{$N(L)$ is a vertex cut of size less than $k$},\ \mbox{and}\ \vol(L)\leq \nu\,.
\end{align} 
In other words, we are interested in a small vertex cut $N(L)$  that is ``near'' $x$ in the sense that $L$ has small volume.
An algorithm for this problem takes as input  $x$, $k$, $\nu$, and a pointer to an adjacency-list representation of $G$, and either
\begin{itemize}[noitemsep]
	\item outputs that  $L\subseteq V$ satisfying \Cref{eq:intro:LocalVC} does not exist, or 
	\item returns a vertex cut $S$ of size less than $k$.
\end{itemize}

Nanongkai~et~al.~\cite{NanongkaiSY19} recently introduced the LocalVC problem and designed a deterministic algorithm that takes $O(\nu^{1.5}k\polylog(\nu k))$ time under mild conditions. 
In this paper, we present a simple  randomized (Monte Carlo)  algorithm that takes $O(\nu k^2)$ time under the same conditions. 
\begin{theorem}[Main Result]
	\label{thm:intro:LocalVC_approx}There is a randomized (Monte Carlo) algorithm
	that takes as input a vertex $x\in V$ of an $n$-vertex $m$-edge
	graph $G=(V,E)$ represented
	as adjacency lists, and integers $k < n/4$ and $\nu<m/(8320k)$ 
	and runs in $O(\nu k^2)$ time to output either
	\begin{itemize}[noitemsep]
		\item the ``$\bot$'' symbol indicating that, with probability at least $1/2$,  $L\subseteq V$ satisfying \Cref{eq:intro:LocalVC} does not exist,  or
		\item a vertex cut $S$ of size less than $k$.
	\end{itemize}
\end{theorem}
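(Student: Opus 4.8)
The plan is to reduce \textsc{LocalVC} to a \emph{local} maximum-flow computation in the standard vertex-capacitated split digraph, and to solve that by a bounded-exploration version of Ford--Fulkerson in which one randomly chosen vertex plays the role of a far sink.

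\emph{Step 1 (reduction).} Form the split digraph $\vec G$: replace each $v\in V$ by a capacity-$1$ arc $v_{\mathrm{in}}\to v_{\mathrm{out}}$, and each edge $\{u,v\}$ (for directed $G$, each arc $(u,v)$) by capacity-$\infty$ arcs $u_{\mathrm{out}}\to v_{\mathrm{in}}$ and $v_{\mathrm{out}}\to u_{\mathrm{in}}$; the directed case needs no change, which is why it is handled at no extra cost. A set $L$ as in \eqref{eq:intro:LocalVC} with $|N(L)|=c<k$ corresponds to $\widehat L=\{v_{\mathrm{in}},v_{\mathrm{out}}:v\in L\}\cup\{v_{\mathrm{in}}:v\in N(L)\}$, whose only outgoing arcs are the $c$ unit arcs $\{v_{\mathrm{in}}\to v_{\mathrm{out}}:v\in N(L)\}$; thus $\widehat L$ is crossed by an edge cut of value $c<k$ and is incident to only $O(\vol(L)+k)=O(\nu)$ arcs (we may assume $\nu\ge k$). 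Conversely, any $R\ni x_{\mathrm{out}}$ that is closed in some residual graph of $\vec G$ and whose fewer than $k$ outgoing arcs are unit arcs (necessarily saturated) translates back to a vertex separator $S$ of $G$ with $|S|<k$, provided some vertex of $G$ has neither split-copy in $R$.

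\emph{Step 2 (algorithm).} Sample one uniformly random half-edge of $G$ and let $y$ be its endpoint, so $\Pr[y=v]=\deg(v)/(2m)$; forbid the search from ever visiting the split-vertex $y_{\mathrm{in}}$. Run at most $k$ augmenting rounds of Ford--Fulkerson from $x_{\mathrm{out}}$ in $\vec G$. Each round searches for an augmenting path by a suitably bounded DFS of the current residual graph that permanently abandons exhausted vertices and stops once its exploration budget (of order $\nu$ arcs/depth not used in earlier rounds) is exhausted. The round \emph{succeeds}, pushing one unit of flow, if the DFS is stopped by the budget --- in which case it pushes along a simple residual path reaching a vertex the DFS has seen outside $\widehat L$, which exists because $\widehat L$ has only $O(\nu)$ arcs. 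The round \emph{fails} if the DFS instead closes off a residual-reachable set $R$; the algorithm then outputs the separator $S$ obtained from $R$ as in Step~1, with $y$ as a witness on its far side, and halts. If all $k$ rounds succeed, output $\bot$.

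\emph{Step 3 (correctness and running time).} When a separator is output, $R$ is closed in a residual graph and crossed by fewer than $k$ saturated unit arcs, so $S$ is a vertex cut of $G$ with $|S|<k$; moreover $y$ lies strictly on its far side, because $y_{\mathrm{in}}$ is forbidden (hence not in $R$) and $y_{\mathrm{out}}$ is unreachable (its only in-arc emanates from the forbidden $y_{\mathrm{in}}$, so it never carries flow and acquires no residual in-arc). Soundness is therefore deterministic. For the $\bot$ branch, assume some $L$ as in \eqref{eq:intro:LocalVC} exists with $|N(L)|=c<k$; since $\vol(L)\le\nu<m/(8320k)$ we get $\Pr[y\in L]=\vol(L)/(2m)<1/2$, so condition on $y\notin L$. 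Then forbidding $y_{\mathrm{in}}$ leaves (a copy of) $\widehat L$ with at most $c$ usable unit exit arcs, every augmenting path leaves this set only through those arcs, and every budget-forced augmentation crosses one of them; hence after at most $c$ successful rounds all are saturated, a flow-conservation count shows no residual arc can thereafter leave the set, and --- unless an earlier round already failed --- round $c{+}1\le k$ is confined to the set's $O(\nu)$ arcs and so fails, again outputting a cut. Thus $\Pr[\bot\mid L\text{ exists}]\le\Pr[y\in L]<1/2$, as required. For the running time, each round adds $O(\nu)$ never-before-explored arcs, so round $i$ costs $O(\nu i)$ and the run costs $\sum_{i\le k}O(\nu i)=O(\nu k^2)$; sampling $y$ and translating $R$ cost $O(\nu)$. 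The hypothesis $k<n/4$ enters only through slack in these estimates.

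\emph{Main obstacle.} The reduction, the sampling, and soundness are routine; the crux is the completeness-and-complexity analysis of the \emph{bounded} search. One must fix an exploration rule and constants for which: (a) permanently abandoning exhausted vertices stays valid across all $k$ rounds of the evolving residual graph, so that a genuinely closed region is exposed whenever the search sticks; and (b) every budget-forced augmentation provably pushes a unit of flow \emph{across} $\partial\widehat L$ --- so that the fewer-than-$k$ bottleneck arcs are saturated after fewer than $k$ rounds, and, via flow conservation, no forward or residual escape route into $\widehat L$'s exterior survives into the next round. Point (b) is the delicate part: the DFS may physically be inside $\widehat L$ when its budget runs out, so one must argue that a stored prefix (or a deepest-reached vertex) yields a path crossing $\partial\widehat L$; this is exactly where the precise rule and the constants $8320$ and the hidden $c_0$ do their work. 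Verifying (a) and tuning the constants is largely bookkeeping by comparison.
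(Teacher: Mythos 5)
Your proposal takes a genuinely different route from the paper's, and the gap you flag yourself as ``the delicate part'' is, I believe, a real obstruction rather than a matter of tuning constants. You fix a single random sink $y$ up front and then run a \emph{deterministic} budget-bounded Ford--Fulkerson, claiming that each budget-forced round augments along ``a simple residual path reaching a vertex the DFS has seen outside $\widehat L$.'' But the algorithm does not know $\widehat L$ and so cannot identify such a vertex; at the moment the budget is hit, the current DFS stack --- and every natural deterministic candidate such as the last or deepest explored vertex --- may lie entirely inside $\widehat L$ even though some visited vertex lies outside. The paper's \Cref{alg:local_ec_new} resolves precisely this by \emph{randomizing the per-round augmentation target}: in each round a fresh BFS is grown, and each newly marked edge $(a,b)$ triggers an early stop (and reversal of the tree path to $a$) independently with probability $(\gap+1)/(8\nu)$. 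Because $\vol^{\out}(S)\le\nu$, the expected number of rounds whose stop lands inside $S$ is at most $(\gap+1)/8$, and Markov's inequality leaves $\gap$ slack (\Cref{lem:reverse_cutsize_new} and the correctness lemmas that follow it). A single up-front random $y$ cannot substitute for this per-round randomness, and I do not see a deterministic in-round rule that makes your point~(b) go through.

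There is a second problem in the up-front sampling itself. You condition on $y\notin L$, but you also need $y\notin N(L)$, and $\Pr[y\in N(L)]=\vol(N(L))/(2m)$ is \emph{not} controlled by the hypothesis $\nu<m/(8320k)$: take $L=\{x\}$ with a single hub neighbor $h$ of degree $\Theta(m)$ and $\Pr[y=h]$ is near $1$. When $y\in N(L)$, forbidding $y_{\mathrm{in}}$ makes your residual-reachable set $R$ closed only in $G'-y_{\mathrm{in}}$, not in $G'$; the translated set is a vertex cut of $G-y$, not of $G$, and making it a cut of $G$ requires adding $y$, which can push the size to $k$. The paper's route avoids this entirely: there is no fixed sink, and in the LocalEC split graph the per-edge stopping probability is charged only against $\vol^{\out}_{G'}(L')\le 2\vol^{\out}_G(L)$, where each separator vertex contributes exactly one unit arc $v_{\mathrm{in}}\to v_{\mathrm{out}}$, so a separator vertex's degree in $G$ is irrelevant to the analysis.
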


%
%

%
%
%
%
%
%
%
%
%
%
%
%
%
%
%
%
%
%
%
%
%
%
%
%
%
%
%
%

%

%
%
%
%
Note that the error probability $1/2$ above can be made arbitrarily small by repeating the algorithm. 
Compared to the previous algorithm of Nanongkai~et~al., our algorithm is faster when $k\leq \sqrt{\nu}$.
It is worth noting that one can also derive an $(\nu k^{O(k)})$-time algorithm from the techniques of Chechik~et~al. \cite{ChechikHILP17} and some slower algorithms in the context of property testing (e.g. \cite{GoldreichR02,OrensteinR11,YoshidaI12,YoshidaI10}). 
Our algorithm is in fact very simple: it simply repeatedly finds a path starting at $x$ and ending at some random vertex. Our analysis is also very simple.

\paragraph{(Global) Vertex Connectivity.} 
The main application of our result is efficient algorithms for the vertex connectivity problem. 
There has been a long line
of research on  this problem since at least five decades ago (e.g. \cite{Kleitman1969methods,Podderyugin1973algorithm,EvenT75,Even75,Galil80,EsfahanianH84,Matula87,BeckerDDHKKMNRW82,LinialLW88,CheriyanT91,NagamochiI92,CheriyanR94,Henzinger97,HenzingerRG00,Gabow06,Censor-HillelGK14}).  (See Nanongkai~et~al.~\cite{NanongkaiSY19} for a more comprehensive literature survey.)
For the undirected case,  Aho,
Hopcroft and Ullman \cite[Problem 5.30]{AhoHU74} asked in their 1974 book for an
$O(m)$-time algorithm for
computing $\kappa_{G}$. 
Prior to our result, $O(m)$-time algorithms were known only when $\kappa_G \leq 3$, due to the classic results of Tarjan~\cite{Tarjan72} and Hopcroft and Tarjan~\cite{HopcroftT73}. In this paper, we present an algorithm that takes near-linear time whenever  $\kappa_G=O(\polylog(n))$. 
In this paper, we obtain the first algorithm in many decades that guarantees a near-linear time complexity for higher values of $\kappa_G$.

\begin{theorem}
	\label{thm:intro:VC_undir}There is a randomized algorithm
	that takes as input an undirected graph $G$ and,  with high probability, in time $\tilde{O}(m+nk^{3})$
	outputs a vertex cut $S$ of size  $k=\kappa_G$.\footnote{As usual, with high probability (w.h.p.) means with probability at least $1-1/n^c$ for arbitrary constant $c\geq 1$.}
\end{theorem}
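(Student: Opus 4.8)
The plan is to plug the new LocalVC algorithm (\Cref{thm:intro:LocalVC_approx}) into the reduction framework of Nanongkai~et~al.~\cite{NanongkaiSY19}, which already shows how a fast LocalVC subroutine yields a fast global vertex connectivity algorithm. Concretely, recall the standard structural fact: if $S$ is a minimum vertex cut with $|S|=\kappa_G=k$, the graph $G\setminus S$ splits into a ``left'' part $L$ and a ``right'' part $R$ (each possibly a union of components), and at least one of $L$, $R$ has volume $\vol(\cdot)\le 2m/\ldots$; more importantly, the \emph{smaller} side has volume at most roughly $m$, but what we really want is a side of \emph{bounded} volume so that LocalVC runs in $\tilde O(\nu k^2)$ time with $\nu$ small. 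The framework of \cite{NanongkaiSY19} handles this by a two-regime argument on $\nu$: for a small cut, either the small side $L^*$ has $\vol(L^*)\le \nu_0$ for some threshold $\nu_0$, in which case a LocalVC call seeded at a vertex of $L^*$ (with parameters $\nu=\nu_0$, this very $k$) finds a cut; or $\vol(L^*)>\nu_0$, in which case a uniform random edge endpoint lands in $L^*$ with decent probability and sampling $\tilde O(m/\nu_0)$ seeds suffices to hit $L^*$.

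The steps, in order, are: (i) First guess $k$ — we do not know $\kappa_G$, so run the whole procedure for $k=1,2,3,\dots$ and stop at the first $k$ for which a cut of size $k$ is found; since the first three values are handled by Tarjan / Hopcroft–Tarjan in $O(m)$ time and for larger $k$ each round costs $\tilde O(m+nk^3)$, and we stop at $k=\kappa_G$, the total is $\tilde O(m+n\kappa_G^3)$. (For large $\kappa_G$ one would fall back to a different algorithm, but the theorem is stated for the regime where this bound is meaningful; one can also binary-search, but geometric increase already suffices up to log factors.) (ii) For a fixed $k$, set the volume threshold $\nu_0 := m/(\text{const}\cdot k)$ so that the precondition $\nu < m/(8320k)$ of \Cref{thm:intro:LocalVC_approx} is met; also note $k<n/4$ is exactly the regime where the theorem applies, and for $k\ge n/4$ one again uses a different (trivial, since then $m=\Theta(n^2)$ or the graph is dense) method. (iii) Handle the ``$L^*$ is large'' case: repeat $\Theta((m/\nu_0)\log n)=\tilde O(k)$ times — sample a uniformly random edge, pick one of its endpoints as the seed $x$, and run LocalVC$(x,k,\nu_0)$ in $\tilde O(\nu_0 k^2)=\tilde O(mk/k)=\tilde O(m k)$ time each; wait — that is $\tilde O(mk^2)$ total, too slow, so actually one chooses $\nu_0$ smaller, balancing $(m/\nu_0)\cdot \nu_0 k^2 = m k^2$ against the other regime. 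The correct balance in \cite{NanongkaiSY19} instead runs LocalVC from \emph{every} vertex with a per-vertex volume budget $\nu_0$, giving $\sum_x \tilde O(\nu_0 k^2)$; to get $\tilde O(m+nk^3)$ one sets $\nu_0=\Theta(k)$, so the per-vertex cost is $\tilde O(k\cdot k^2)=\tilde O(k^3)$ and the total over $n$ vertices is $\tilde O(nk^3)$. (iv) Handle the ``$L^*$ is small'' case, $\vol(L^*)\le \nu_0=\Theta(k)$: then $L^*$ has $O(k)$ vertices, so $|N(L^*)|=k$ forces $L^*$ to be ``locally dense''; running LocalVC from some vertex $x\in L^*$ with $\nu=\nu_0$ succeeds, and since we run it from every $x$, we hit such an $x$. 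But a subtlety: if $\vol(L^*)$ is \emph{very} small relative to $\deg(x)$ for $x\in L^*$, the adjacency-list access model matters — this is exactly why \cite{NanongkaiSY19} needs the mild side conditions and a preliminary step isolating low-degree vertices (vertices of degree $<k$ cannot be in any minimal side of a $k$-cut, or handle them directly). (v) Boost correctness: each LocalVC call errs with probability $\le 1/2$, so repeat each $\Theta(\log n)$ times to drive the failure probability below $n^{-c-1}$, then union-bound over the $O(n\log n)$ total calls; this only adds log factors. (vi) Certify the answer: when a cut $S$ of size $k$ is returned we must verify it is genuinely a cut (LocalVC's second output type is always a valid cut, so this is immediate) and that no smaller cut exists (guaranteed because we tried all smaller $k$ first and LocalVC found nothing — but LocalVC is Monte Carlo one-sided in a subtle way, so we rely on the boosted success probability across all seeds).

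The main obstacle is step (iii)/(iv): making the two-regime volume balance actually yield $\tilde O(m+nk^3)$ rather than something worse, and in particular arguing that running LocalVC from \emph{all} $n$ vertices with budget $\nu_0=\tilde\Theta(k)$ is both (a) cheap enough — $n$ calls times $\tilde O(\nu_0 k^2)=\tilde O(k^3)$ each — and (b) guaranteed to find a minimum cut whenever $\kappa_G=k<n/4$, which requires the structural lemma that a minimum vertex cut always has a side $L^*$ with $\vol(L^*)$ below the threshold \emph{or} is detected by the random-endpoint sampling. Since \cite{NanongkaiSY19} already proves exactly this reduction ("Plugging our new localVC algorithm in the generic framework of Nanongkai~et~al."), the real content here is only verifying that our LocalVC satisfies the interface they assume — same preconditions ($k<n/4$, $\nu<m/(8320k)$), adjacency-list model, Monte Carlo one-sided error, running time $O(\nu k^2)$ — and that the extra $\log$ factors from boosting and from the $O(\log n)$ guesses of $k$ are absorbed into the $\tilde O(\cdot)$. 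Thus the proof is essentially: (1) cite the framework of \cite{NanongkaiSY19}; (2) check the interface match; (3) recompute the running time with $O(\nu k^2)$ in place of their $O(\nu^{1.5}k\,\polylog)$, obtaining $\tilde O(m+nk^3)$; (4) boost and union-bound for the "with high probability" guarantee. I expect the running-time recomputation to require choosing the volume threshold $\nu_0$ and the number of random seeds to rebalance the two regimes under the new LocalVC cost, and that is where I would spend the most care.
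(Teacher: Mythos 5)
Your high-level plan --- plug the $O(\nu k^2)$ LocalVC into the framework of~\cite{NanongkaiSY19} and recompute the running time --- is indeed the paper's approach, but two of the steps you're uncertain about are exactly where the argument actually lives, and as written your proposal does not reach the $\tilde O(m+nk^3)$ bound.

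First, the step you omit is \emph{sparsification}. The paper begins by running Nagamochi--Ibaraki in $O(m)$ time to reduce to a subgraph with $m=O(nk)$ edges that preserves $k$-connectivity. This is what converts the naturally-arising $\tilde O(mk^2)$ into $\tilde O(nk^3)$. Your own computation in step~(iii) correctly lands on $\tilde O(mk^2)$ --- at that point you should have noticed that $m=O(nk)$ gives exactly the claimed bound --- but instead you treated $\tilde O(mk^2)$ as ``too slow'' and pivoted to a different, incorrect scheme. Without the sparsification preprocessing you simply do not get $\tilde O(m+nk^3)$.

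Second, the pivot you proposed (``run LocalVC from \emph{every} vertex with budget $\nu_0=\Theta(k)$'') is not the paper's algorithm and does not work on its own: if the small side $L^*$ of the minimum cut has $\vol(L^*)\gg k$, a LocalVC call seeded inside $L^*$ with budget $\Theta(k)$ is \emph{permitted} to return $\bot$, so no seed helps. The framework's two regimes require \emph{two different subroutines}: when $\vol(L^*)\le\bar\nu$ you sample $\tilde O(m/\bar\nu)$ edge endpoints and run LocalVC with budgets up to $\bar\nu$; when $\vol(L^*)>\bar\nu$ you sample $\tilde O(m/\bar\nu)$ \emph{pairs} of edges (to hit both $L^*$ and $R^*$) and run a pairwise $st$-vertex-min-cut, for which the paper uses Ford--Fulkerson giving $T_{\mathrm{pair}}=O(mk)$. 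You never mention the pair subroutine. The final balance is then $\tilde O(m/\bar\nu)\cdot\bigl(T_{\mathrm{pair}}+T_{\mathrm{local}}(\bar\nu)\bigr)$ with $\bar\nu=\Theta(\epsilon m)$ and the approximate LocalVC cost $O(\bar\nu k/\epsilon)$, giving $\tilde O(mk/\epsilon)$; after sparsification this is $\tilde O(nk^2/\epsilon)$, and setting $\epsilon<1/k$ recovers exactness and yields $\tilde O(m+nk^3)$. Your steps~(v),(vi) on boosting, doubling over guesses of $k$, and the low-degree preprocessing are all fine and match the paper.
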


The above result is near-linear time whenever $k=O(\polylog(n))$. 
By combining with previous results 
(e.g.  \cite{HenzingerRG00,LinialLW88}),
the best running time for solving vertex connectivity is $\tilde{O}(m+\min\{nk^{3}, n^2k, n^{\omega} + nk^{\omega} \})$. Prior to our work, the best running time for $k>3$ was  $\tilde{O}(m+  \min\{n^{4/3}k^{7/3}, n^2k, n^{\omega} + nk^{\omega} \} )$ \cite{NanongkaiSY19, HenzingerRG00, LinialLW88}.  In particular, we have an improved running time when $k\leq O(n^{0.457})$.

This result is obtained essentially by plugging in our LocalVC algorithm to the recent framework of Nanongkai~et~al. \cite{NanongkaiSY19}. The overall algorithm is fairly simple: Let $L$ be such that $N(L)$ is the optimal vertex cut. We guess the values $\nu=\vol(L)$ and $k=\kappa_G$, and run our LocalVC algorithm with parameters $\nu$ and $k$ on $n/\nu$ randomly-selected seed nodes $x$. 

\paragraph{Approximation Algorithms and Directed Graphs.} Results in \Cref{thm:intro:LocalVC_approx,thm:intro:VC_undir} can be generalized to $(1+\epsilon)$-approximation algorithms and to algorithms on directed graphs. The approximation guarantee means that the output vertex cut $S$ is of size less than $\left\lfloor (1+\epsilon)k\right\rfloor$.
The time complexity for LocalVC is $O(\nu k/\epsilon)$. This improves the $\tilde O(\nu^{1.5}/(\sqrt{k}\epsilon^{1.5}))$-time algorithm of Nanongkai et~al. \cite{NanongkaiSY19} when $k\leq (\nu/\epsilon)^{1/3}$.
For approximating $\kappa_G$, the time complexity is $\tilde{O}(\min\{mk/\epsilon , n^{2+o(1)}\sqrt{k}/\poly(\epsilon)\})$ where $k=\kappa_G$. 

Observe that the time complexities for exact algorithms in \Cref{thm:intro:LocalVC_approx,thm:intro:VC_undir} can be obtained by setting $\epsilon=1/(2k)$ and using the fact that for undirected graphs we can ensure in $O(m)$ time that $m=O(nk)$ \cite{NagamochiI92}.%

\paragraph{Maximal $k$-Edge Connected Subgraphs.}
For any set of vertices $C\subseteq V$, its induced subgraph $G[C]$ is a maximal $k$-edge-connected subgraph of $G$ if $G[C]$ is a $k$-edge-connected graph and no superset of $C$ has this property.\footnote{Recall that a graph is $k$-edge connected we we need to remove at least $k$ edges so that there is no path from some node $u$ to another node $v$.} Chechik~et~al. \cite{ChechikHILP17} presented deterministic algorithms that can compute {\em all} maximal $k$-edge-connected subgraphs in $k^{O(k)}m\sqrt{n}\log(n)$ time on undirected graphs and  $k^{O(k)}m\sqrt{m}\log(n)$ time on directed graphs.

Our result is mainly inspired by a part of Chechik~et~al.'s algorithms which runs some algorithm as a subroutine. Note that it is not hard to adapt their techniques to solve LocalVC in $k^{O(k)}\nu$ time. Our result is an improvement over this, and in turn implied an improved running time for computing the maximal $k$-edge-connected subgraphs.
We improve the dependency on $k$ in Chechik~et~al.'s result from $k^{O(k)}$ to $\poly(k)$.

\paragraph{Independent work by Forster and Yang \cite{ForsterY19}.} Independently from this paper (see \Cref{foot:independent_result}), Forster and Yang present
results similar to the above-mentioned results, except that 
(i) they show additional steps that lead to a better time complexity for computing the maximal $k$-edge connected subgraphs on undirected graphs, namely $O(k^4n^{3/2}\log n + km\log^2 n)$,
(ii) they show some applications in graph property testing, and
(iii) they do not consider approximation algorithms.
Inspire by their property testing results, and by using approximation LocalVC algorithms, we obtain new bounds for testing vertex- and edge-connectivity below.

\paragraph{Testing Vertex- and Edge- Connectivity.} The study of testing graph properties, initiated by Goldreich et~al. \cite{GoldreichGR98}, concerns the number of {\em queries} made to answer a question about graph properties.
In the {\em (unbounded-degree) incident-lists model} \cite{GoldreichR02,OrensteinR11}, it is assumed that there is a list $L_v$ of edges incident to each node $v$ (or lists of outgoing and incoming edges for directed graphs), and  an algorithm can make a query $q(v, i)$ for the $i^{th}$ edge in the list $L_v$ (if $i$ is bigger than the list size, the algorithm receives a special symbol in return). For any $\epsilon>0$, we say that an $m$-edge graph $G$ is {\em $\epsilon$-far} from having a property $P$ if the number of edge insertions and deletions to make $G$ satisfies $P$ is at least $\epsilon m$. {\em Testing $k$-vertex connectivity} is a problem where we want to distinguish between when $G$ is $k$-vertex connectivity and when it is $\epsilon$-far from having such property. Testing $k$-edge connectivity is defined analogously. It is assumed that the algorithm receives $n$, $\epsilon$, and $k$ in the beginning. We show the following. 
\begin{theorem}\label{thm:intro:vertex-testing}
In the unbounded-degree incident-list model, $k$-vertex (where $k < n/4$) and -edge connectivity for directed graphs can be tested in  $\ot(k/\epsilon^2)$ queries with probability at least $2/3$. %
 Further, $k$-edge connectivity for {\em simple} directed graphs can be tested in $\ot(\min\{k/\epsilon^2, 1/\epsilon^3 \}) $ queries. 
\end{theorem}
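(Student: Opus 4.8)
The plan is to reduce testing to many calls of the approximate LocalVC algorithm from \Cref{thm:intro:LocalVC_approx}, using the standard observation that a graph which is $\epsilon$-far from being $k$-connected must contain ``many'' small cuts that are each ``locally witnessed''. Concretely, suppose $G$ is not $k$-vertex connected; then there is a cut $S$ with $|S| < k$ and a side $L$ with $x \in L$, $N(L) = S$, and $L$ on the smaller side. If $G$ is $\epsilon$-far from $k$-vertex connectivity, I claim one can show that a random seed vertex lands in a small side $L$ with $\vol(L) = O(m\epsilon^{-1}/n)$ — or more precisely in one of many vertex-disjoint (or volume-disjoint) such sides — with probability $\Omega(1/k)$ or so; this is the kind of ``many disjoint small cuts'' argument that appears in Forster--Yang \cite{ForsterY19} and earlier in Goldreich--Ron \cite{GoldreichR02} and Orenstein--Ron \cite{OrensteinR11}. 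The key quantitative point is: being $\epsilon$-far forces roughly $\epsilon m / k$ ``defect'' to be fixed, each small cut can be repaired with $O(k)$ edge modifications, hence there are $\Omega(\epsilon m / k^2)$ disjoint offending regions, and a random seed hits one with probability $\Omega(\epsilon / k^2)$ scaled appropriately. Then I run approximate LocalVC (running time $O(\nu k/\epsilon')$, i.e.\ $\ot(k/\epsilon^2)$ queries after setting $\nu \approx m\epsilon^{-1}/n$, $\epsilon' = \Theta(1)$ and a seed budget of $\ot(1/\epsilon)$ seeds) from each seed; if any call returns a cut of size $< \lfloor (1+\epsilon')k \rfloor$ we reject, else we accept. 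Since we only need a constant-factor approximation to distinguish ``$k$-connected'' from ``$\epsilon$-far,'' the $(1+\epsilon')$-approximate version with $\epsilon' = \Theta(1)$ suffices, which is exactly why the bound is near-linear in $k$ rather than in $k^3$.

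The first step I would carry out carefully is the structural lemma: if $G$ is $\epsilon$-far from $k$-vertex (resp.\ $k$-edge) connected, then there is a collection of disjoint vertex sets $L_1,\dots,L_t$, each with $\vol(L_i) \le \nu := O(m/(\epsilon n))$ and $|N(L_i)| < k$ (resp.\ $|\partial L_i| < k$), with $\sum_i \vol(L_i) = \Omega(\epsilon m / k)$ — so that sampling $\ot(1/\epsilon)$ seeds, each of which we then ``run LocalVC from with volume bound $\nu$,'' finds a violating cut with constant probability. For the unbounded-degree model, seeds should be chosen proportionally to degree (equivalently, sample a random edge endpoint), which is the standard trick to control the $\epsilon^{-2}$ versus $\epsilon^{-1}$ discrepancy. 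Next I would set parameters: with $\nu = O(m/(\epsilon n))$, one LocalVC call costs $O(\nu k / \epsilon') = O(mk/(\epsilon n))$ time, hence $\ot(k/(\epsilon n))$ ``per-seed'' and, crucially, only $\ot(\nu) = \ot(m/(\epsilon n))$ \emph{queries} per seed since LocalVC is a local algorithm touching $O(\nu)$ of the graph; multiplying by $\ot(1/\epsilon)$ seeds gives $\ot(m/(\epsilon^2 n))$ — wait, that must be re-expressed: the correct accounting is that the number of queries per seed is $\ot(\nu k) = \ot(k/\epsilon)$ per seed (volume $\nu$, but with the $k^2$ in the exact bound replaced by $k/\epsilon'$ in the approximate bound and $\epsilon' = \Theta(1)$), and over $\ot(1/\epsilon)$ seeds this is $\ot(k/\epsilon^2)$ queries, matching the claimed bound. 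For $k$-edge connectivity one runs the edge-version of LocalVC (split each vertex into an in/out pair, or directly use the edge-local procedure), with the same accounting.

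For the improved bound $\ot(\min\{k/\epsilon^2, 1/\epsilon^3\})$ for \emph{simple} directed graphs and edge connectivity, the second term comes from a different regime: in a simple graph a vertex of out-degree $< k$ is itself a trivial certificate of non-$k$-edge-connectivity, so one only needs to find a small cut when all degrees are $\ge k$, i.e.\ $m \ge nk/2$; then $\nu = O(m/(\epsilon n)) = O(\dots)$ and one can afford to lower the target to roughly $1/\epsilon^2$ queries per seed by noting the offending side has volume only $O(m\epsilon^{-1}/n)$ while still containing $\ge k$ boundary edges, trading the $k$ factor against $1/\epsilon$. I would take whichever of the two parameter settings gives the smaller query count, hence the $\min$.

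The main obstacle I expect is the structural/disjointness lemma — showing that $\epsilon$-farness genuinely yields $\Omega(\epsilon m/k)$ total volume concentrated in regions each of small volume and small boundary, and that these regions are essentially disjoint so that a modest number of random (degree-weighted) seeds hits one. This requires a careful ``repair'' argument: take an optimal sequence of $\le \epsilon m$ edge modifications making $G$ $k$-connected, and argue that the violated cuts in $G$ localize around these modifications, giving many disjoint witnesses of bounded volume. Bounding the volume of each witness by $O(m/(\epsilon n))$ rather than something larger is the delicate part and is where the one-sided LocalVC guarantee (``with probability $\ge 1/2$ no small-volume $L$ exists'') must be matched to the sampling probability; handling directed graphs and the in/out-degree asymmetry adds bookkeeping but no new ideas. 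Everything else — plugging in \Cref{thm:intro:LocalVC_approx}, the union bound over seeds, and the query accounting — is routine.
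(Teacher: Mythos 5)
Your reduction to local cut-finding is the right shape and the final accounting $\ot(k/\epsilon)$ queries per call times $\ot(1/\epsilon)$ seeds is where the paper lands, but there are two genuine gaps. First, plugging in the $(1+\epsilon')$-approximate LocalVC/LocalEC from \Cref{thm:intro:LocalVC_approx} (or \Cref{cor:localVC_approx}) with $\epsilon'=\Theta(1)$ and rejecting whenever a cut of size $<\lfloor(1+\epsilon')k\rfloor$ is returned does not give a sound tester: those procedures explicitly allow, even when no cut of size $<k$ exists, returning a legitimate cut of size between $k$ and $k+\gap$ (their soundness bullet says ``outputs either such a set or $\bot$''), so a $k$-connected graph can be rejected. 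The paper avoids this by using a different subroutine, the \emph{gap} version of LocalEC/LocalVC derived from \cite{NanongkaiSY19} (\Cref{lem:gaplocalec}, \Cref{lem:gaplocalvc}), whose contract is strictly stronger in the direction needed for testing: it always returns a cut of size strictly less than $k$, and it always returns $\bot$ whenever there is no cut of size $<k$. Without a guarantee of this form the tester has two-sided error, not one-sided, and the amplification to probability $2/3$ no longer works by a simple union bound.

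Second, the quantitative accounting requires a multi-scale argument that your single budget $\nu=O(m/(\epsilon n))$ with a fixed $\epsilon'=\Theta(1)$ does not deliver. The paper partitions the Orenstein--Ron family of violating sets (\Cref{thm:eps-far-edge}, \Cref{thm:eps-far-vertex}) by deficiency, $\mathcal{C}_i=\{X\colon k-d^{\out}(X)\in[2^i,2^{i+1})\}$, shows (\Cref{lemma:many-small-edge}, \Cref{lemma:many-small}) that some class contains $\Omega(\epsilon n\bar d/(k\log k))$ disjoint small-volume sets, and for each scale $i$ runs the gap algorithm with $\nu_i\propto 2^i\epsilon^{-1}\log k$ and $\gap_i=2^i-1$, so that $\nu_i k/\gap_i=\ot(k/\epsilon)$ uniformly. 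Without this, the regime where all deficiencies equal $1$ forces $\gap=0$ and the per-call cost $\nu k/\gap$ blows up unless $\nu$ is simultaneously scaled down by the same $2^i$ factor --- which is exactly the coordinated choice the multi-scale decomposition makes. Your disjointness claim also needs to be instantiated more carefully: for vertex connectivity it is Frank--Jordan Lemma~7 (\Cref{lem:LR-disjoint}) applied to the small sides of the pairwise-\emph{independent} (not disjoint) separation triples from Orenstein--Ron, not a generic ``many disjoint small cuts'' statement. The simple-graph $\ot(1/\epsilon^3)$ term is handled in the paper by a separate low-degree-vertex argument (\Cref{lem:lots-singleton}), which is in the spirit of what you sketch but again relies on $m\ge nk/4$ and a case split at $\epsilon\gtrless 4/k$.
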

In particular, our  $\ot(k/\epsilon^2)$  bound is {\em linear} in $k$, and it can be independent of $k$ for testing $k$-edge connectivity on simple graphs. 
In the  {\em bounded-degree} incident-list model, the maximum degree $d$ is assumed to be given to the algorithm and a graph is said to be $\epsilon$-far from a property $P$ if it needs at least $\epsilon nd$ edge modifications to have such property. We show the following. 
\begin{theorem} \label{thm:intro:vertex-testing-bounded}
In the bounded-degree incident-list model, $k$-vertex (where $k < n/4$) and -edge connectivity for directed graphs can be tested in  $\ot(k/\epsilon)$ queries with probability at least $2/3$. %
 Further, $k$-edge connectivity for {\em simple} directed graphs can be tested in $\ot(\min\{k/\epsilon,  1/\epsilon^2 \}) $ queries. 
\end{theorem}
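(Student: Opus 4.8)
The plan is to run the sublinear-query template for testing connectivity established by Goldreich--Ron \cite{GoldreichR02} and Orenstein--Ron \cite{OrensteinR11} and recently revisited by Forster--Yang \cite{ForsterY19}, but with the \emph{approximate, directed} local routines underlying \Cref{thm:intro:LocalVC_approx} (and the analogous local edge-connectivity routine) in place of exact ones --- this substitution is what drives the near-linear dependence on $k$. The tester fixes a volume budget $\nu$, samples $\ot(1/\epsilon)$ seed vertices (uniformly, in the bounded-degree model), and from each seed runs the local vertex- (resp.\ edge-) connectivity routine with budget $\nu$; on a directed input it runs the routine both from the seed and to the seed, so as to detect out- and in-cuts. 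It rejects iff some invocation reports a bona fide cut of size $<k$, and accepts otherwise. Soundness is immediate and needs no probabilistic argument: a $k$-vertex- (resp.\ $k$-edge-) connected graph has no cut of size $<k$, so no invocation can report one. (The constraint $k<n/4$ is inherited from \Cref{thm:intro:LocalVC_approx}.)

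The rejection guarantee for $\epsilon$-far inputs reduces to a density-of-local-witnesses lemma in the spirit of \cite{GoldreichR02,OrensteinR11,ForsterY19}: if the $d$-bounded-degree graph $G$ is $\epsilon$-far from being $k$-vertex- (resp.\ $k$-edge-) connected, then a family of sets $L$, each with $|N(L)|<k$ (resp.\ $|\partial L|<k$) and $\vol(L)\le\nu$ for an appropriate budget $\nu$, covers an $\Omega(\epsilon)$-fraction of the vertices, so that $\ot(1/\epsilon)$ uniform seeds hit such an $L$ with high constant probability. I would prove this by the usual charging argument: repeatedly peel off an inclusion-minimal small-boundary set of volume $\le\nu$, and argue that if few vertices were covered this way then $G$ could be repaired with fewer than $\epsilon n d$ edge modifications. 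Because the bounded-degree yardstick $\epsilon n d$ is a factor $d$ stronger than the $\epsilon m$ yardstick of the unbounded model, this argument affords a smaller effective budget $\nu$ than in \Cref{thm:intro:vertex-testing}, and since the approximate local routine with a constant approximation factor runs in $\ot(\nu)$ time and queries, one gains a $1/\epsilon$ factor overall, for $\ot(k/\epsilon)$ queries. For the simple-graph refinement one additionally uses that in a simple graph a minimal set with boundary $<k$ and large volume must contain many internal edges and is therefore more expensive to repair; this caps the relevant budget at $\ot(\min\{k/\epsilon,1/\epsilon^2\})$ (resp.\ $\ot(\min\{k/\epsilon,1/\epsilon\})$ in the bounded-degree model), and feeding this budget into the same framework gives the stated bound.

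I expect the two main obstacles to be: (i) pinning down the structural lemma with the right budget $\nu$ --- both the general $\ot(k/\epsilon)$ case and the min-of-two form for simple graphs --- in the \emph{directed} setting, where the ``small side'' of a cut and the charging must be carried out for out-cuts and in-cuts separately; and (ii) reconciling the approximation slack of the local routine with the fact that testing asks about the \emph{exact} threshold $k$. For (ii) I would first reduce to $m=O(nk)$ via \cite{NagamochiI92}, so that an $\epsilon$-far input is not merely below connectivity $k$ but far enough below it that the structural lemma can supply witnesses whose boundary is comfortably under $k$; running the local routine with a correspondingly chosen target then guarantees that anything it reports really is a cut of size $<k$, preserving soundness while retaining the rejection guarantee. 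The remaining steps --- the calculation that $\ot(1/\epsilon)$ independent seeds suffice, standard probability amplification, and the implementation details of seeding and of running the routine in both directions for the directed case --- are routine.
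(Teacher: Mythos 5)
Your high-level template (seed + gap-local-routine + density lemma) is the right family, and your soundness observation is exactly what the paper uses. But there is a genuine gap in the rejection argument: you propose a single volume budget $\nu$ and a \emph{constant} approximation slack (so the local routine costs $\ot(\nu)$), yet the Orenstein--Ron characterization (the paper's Theorem~6.5 / 6.13) only guarantees that the \emph{total} deficiency $\sum_i (k - d^{\out}(X_i))$ exceeds $\epsilon m$; it says nothing about individual deficiencies. In the extremal case every witness has boundary exactly $k-1$ (deficiency $1$), which a local routine run with slack $\Theta(k)$ --- i.e.\ one that only certifies cuts of size $< (1-c)k$ --- would miss entirely. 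The paper's fix is a two-level bucketing that your sketch does not contain: bucket witnesses by deficiency range $[2^i,2^{i+1})$ and by volume range $[2^j,2^{j+1})$, run the gap local routine (the paper's Lemma~6.4 for edges, Lemma~6.12 for vertices, both quoted from \cite{NanongkaiSY19}) with slack $\gap = 2^i - 1$ and budget $\nu = 2^{j+1}$ \emph{matched to the bucket}, and sample $\tilde\Theta\bigl(1/(\epsilon\, 2^{\,j-i})\bigr)$ seeds for the $(i,j)$ pair; then the per-pair cost $\ot(\nu k/\gap)\cdot\ot\bigl(1/(\epsilon 2^{\,j-i})\bigr) = \ot(k/\epsilon)$ is uniform over the $\ot(1)$ pairs. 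The structural input to this is not your single ``$\Omega(\epsilon)$-fraction of vertices covered'' statement, but a per-bucket volume bound (Lemmas 6.11 and 6.15) derived by first fixing the deficiency scale $i$ (Lemmas 6.8, 6.15), then fixing the volume scale $j$; for the vertex case it also needs the Frank--J\'ordan disjointness of small left-partitions (Lemma~6.14), which your ``peel off a minimal set'' charging does not establish.

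Two secondary issues. First, invoking Nagamochi--Ibaraki to ``reduce to $m=O(nk)$'' cannot work here: sparsification reads the whole graph and so is incompatible with a sublinear query budget, and it is an undirected construction while the theorem is about directed graphs; the paper instead handles directedness simply by running the local routine from the seed on both $G$ and $G^R$. Second, the exact-threshold concern you flag is resolved not by making witnesses ``comfortably under $k$'' but by the gap-local-routine contract itself: it is promised a witness of boundary $< k - \gap$ and certifies a cut of boundary $< k$, so the per-bucket slack $\gap = 2^i - 1$ is exactly what makes the output threshold hit $k$ on the nose. The $\min\{k/\epsilon, 1/\epsilon^2\}$ refinement for simple graphs is essentially as you describe (Lemma~6.10: for $\epsilon > 4/k$ an $\epsilon$-far simple graph has $\Omega(\epsilon n)$ low-degree vertices, detected in step~2), so that part of your sketch is fine.
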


It has been open for many years whether the bounds from \cite{GoldreichR02,OrensteinR11,YoshidaI10,YoshidaI12} which are exponential in $k$ can be made polynomial (this was asked in e.g. \cite{OrensteinR11}). Forster and Yang \cite{ForsterY19} answered this using the same result as our local algorithms. The dependence on $k$ in their bounds is at least $k^3$,  
even on bounded-degree graphs. We can improve the dependence on $k$ essentially by using approximation local algorithms.

\begin{table}[]
\centering

\caption{Comparison of property testing algorithms.}
\begin{tabular}{|c|c|c|}
\hline
Unbounded-Degree                      & From              \cite{ForsterY19}                                                           & This paper                                                    \\ \hline
$k$-edge-connectivity                 & $\ot( k^4 / (\epsilon^2 \bar{d}^2) ) = \ot(k^2/\epsilon^2)$ & $\ot(k^2/(\epsilon^2 \bar{d})) = \ot(k/\epsilon^2) $    \\
$k$-edge-connectivity on simple graphs & $\ot( k^4 / (\epsilon^2 \bar{d}^2) ) = \ot(k^2/\epsilon^2)$ & $\ot(\min\{ k^2/(\bar{d}\epsilon^2), k/(\bar{d}\epsilon^3) \})$ \\
\multicolumn{1}{|l|}{}                & \multicolumn{1}{l|}{}                                       & $= \ot(\min\{k/\epsilon^2, 1/\epsilon^3 \})$                                         \\
$k$-vertex-connectivity               & $\ot( k^5 / (\epsilon^2 \bar{d}^2) ) = \ot(k^3/\epsilon^2)$ & $\ot(k^2/(\epsilon^2 \bar{d})) = \ot(k/\epsilon^2) $          \\ \hline
Bounded-Degree                        & From         \cite{ForsterY19}                                                               & This paper                                                    \\ \hline
$k$-edge-connectivity                 & $\ot(k^3/\epsilon)$                                         & $\ot(k/\epsilon)$                                             \\
$k$-edge-connectivity on simple graphs & $\ot(k^3/\epsilon)$                                         & $\ot(\min\{k/\epsilon,  1/\epsilon^2 \})$      \\
$k$-vertex-connectivity               & $\ot(k^3/\epsilon)$                                         & $\ot(k/\epsilon)$                                             \\ \hline
\end{tabular}
\label{tb:comparison}
\end{table}
 
\medskip \noindent {\em Detailed comparisons:} 
To precisely compare our bounds with the previous ones, note that there are two sub-models: (i) In the  {\em unbounded-degree} incident-list model, previous work assumes that $\bar d=m/n$ is known to the algorithm in the beginning. (ii) in the  {\em bounded-degree} incident-list model, the maximum degree $d$ is assumed to be given to the algorithm and a graph is said to be $\epsilon$-far from a property $P$ if it needs at least $\epsilon nd$ edge modifications to have such property.
Our $\ot(k/\epsilon^2)$  bound can be generalized to $\ot(k^2/(\epsilon^2 \bar d))$  bound in the unbounded-degree model. Similarly, our  $\ot(\min\{k/\epsilon^2, 1/\epsilon^3 \})$ bound can be generalized to $\ot(\min\{ k^2/(\bar{d}\epsilon^2), k/(\bar{d}\epsilon^3) \})$  bound in the unbounded-degree model.
\footnote{\Cref{thm:intro:vertex-testing} can be obtained simply from the fact that $k\leq \bar d, d \leq k/\epsilon$ can be assumed without loss of generality.}
The bounds that are exponential in $k$ by \cite{OrensteinR11,YoshidaI10,YoshidaI12} are $\tilde O((\frac{ck}{\epsilon \bar d})^{k+1})$ and $\tilde O((\frac{ck}{\epsilon d})^{k}d)$ in the unbounded- and bounded-degree models, respectively, for testing both directed $k$-vertex and -edge connectivity. The bounds that are polynomial in $k$ by Forster and Yang \cite{ForsterY19} are (i) $\tilde O(k^5/(\epsilon \bar d)^2)$ for $k$-vertex connectivity in the unbounded-degree model,  (ii) $\tilde O(k^4/(\epsilon \bar d)^2)$ for $k$-edge connectivity in the unbounded-degree model, and (iii) $\tilde O(k^3/\epsilon)$ for both $k$-vertex and -edge connectivity in the bounded-degree model. Table \ref{tb:comparison} details comparisons between our results, and those from \cite{ForsterY19}.

\section{Preliminaries}\label{sec:prelim}

Let $G=(V,E)$ be a \emph{directed} graph. For any $S,T\subseteq V$,
let $E(S,T)=\{(u,v)\mid u\in S,v\in T\}$. For each vertex $u$, we
let $\deg^{\out}(u)$ denote the out-degree of $u$ respectively.
For a set $S\subseteq V$, the \emph{out-volume} of $S$ is $\vol^{\out}(S)=\sum_{u\in S}\deg^{\out}u$.
The a set of \emph{out-neighbors} of $S$ is $N^{\out}(S)=\{v\mid(u,v)\in E(S,V-S)\}$.
We can define in-degree $\deg^{\inn}(u)$, in-volume $\vol^{\inn}(S)$,
and a set of in-neighbors $N^{\inn}(S)$ analogously. We add a subscript
$G$ to the notation when it is not clear which graph we are referring
to. 

We say that $(L,S,R)$ is a \emph{separation triple} of $G$ if $L,S,R$
partition $V$ where $L,R\neq\emptyset$, and $E(L,R)=\emptyset$.
We also say that $S$ is a \emph{vertex cut} of $G$ of size $|S|$.
$S$ is an \emph{$st$-vertex cut} if $s\in L$ and $t\in R$. We
say that $s$ and $t$ is $k$-connected (or $k$-vertex-connected)
if there is no $st$-vertex cut of size less than $k$. $G$ is $k$-connected
if $s$ and $t$ is $k$-connected for every pair $s,t\in V$.

\section{Local edge connectivity}
\label{sec:localEC} 

In this section, we show a local algorithm for detecting an edge cut
of size $k$ and volume $\nu$ containing some seed node in $O(\nu
k^{2})$ time. The algorithm accesses (i.e. make queries
for) $O(\nu k)$ edges (this is needed for our property
testing results). Roughly, it outputs either a ``small'' cut or a
symbol ``$\bot$'' indicating that there is no small cut containing
$x$. The algorithm makes errors (i.e. it might be wrong when it
outputs $\bot$) %
with probability at most $1/4$. By standard arguments, we can make
the error probability arbitrarily small. 
Both the algorithm description and the analysis are very simple.

\begin{theorem}
\label{thm:localEC_approx_new}
There exists the following randomized algorithm. It takes as inputs, 
\begin{itemize}[noitemsep,nolistsep]
\item a pointer to  an adjacency list representing  an $n$-vertex $m$-edge graph
$G=(V,E)$, 
\item a seed vertex $x \in V$, 
\item a volume parameter (positive integer) $\nu$, 
\item a cut-size parameter (positive integer) $k$, and 
\item a slack parameter (non-negative integer) $\gap$, where
\end{itemize}
\begin{align} \label{eq:precon_ec}
k\geq 1, \quad \nu > k, \quad  \gap \leq k \quad \mbox{ and } \quad \nu < m(\gap+1)/(130k).
\end{align}
It accesses (i.e. makes queries for) $O(\nu k /(\gap+1))$ edges and runs in $O(\nu k^2/ (\gap+1))$ time. It then outputs 
in the following manner. 
\begin{itemize}[noitemsep,nolistsep]
\item If there exists a vertex-set $S'$ such that $S' \ni x, \vol^{\out}(S')
  \leq \nu, $ and $ |E(S',V-S')| < k$, then with probability at least $3/4$, the algorithm
  outputs $S$ a non-empty vertex-set $S\subsetneq V$
  such that $|E(S,V-S)| < k+\gap$ and $\vol^{\out}(S) \leq 130\nu k/(\gap+1)$  (otherwise it outputs $\bot$).  
\item Otherwise (i.e., no such $S'$ exists), the algorithm outputs either
  the set $S$ as above or $\bot$. 
\end{itemize}
\end{theorem}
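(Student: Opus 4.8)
The plan is to design the local algorithm as a randomized repeated path-finding procedure, following the "sampling + DFS" paradigm: sample a small multiset of target vertices proportional to out-degree, and try to route edge-disjoint paths from $x$ to these targets using a bounded-depth DFS that explores at most $O(\nu k/(\gap+1))$ edges in total. If the DFS ever gets stuck without reaching the requested number of targets, the set of vertices reachable by the partial search certifies a small out-cut, which the algorithm returns; otherwise it declares $\bot$. Concretely, I would run roughly $k+\gap$ rounds of path-search; in round $i$ the algorithm holds a current edge-disjoint path packing of value $i-1$ from $x$ into the rest of the graph, samples a target $t$ with probability $\deg^{\out}(t)/2m$ (or uniformly at random over the incident-list query model, which amounts to the same sampling bias), and attempts to augment the packing by a DFS that respects the already-used edges. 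Each DFS is aborted once it has scanned more than, say, $c\nu k/(\gap+1)$ edges; if \emph{any} DFS aborts early we stop and output the current reachable set $S$, which then has $\vol^{\out}(S)\le 130\nu k/(\gap+1)$ and $|E(S,V-S)| < k+\gap$ because fewer than $k+\gap$ augmenting paths have been found and the search exhausted everything inside $S$. If all $k+\gap$ augmentations succeed, then by Menger's theorem $x$ cannot be separated from the sampled target $t$ by fewer than $k+\gap$ edges, and we output $\bot$.

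The key steps, in order, are: (1) state the precise sampling distribution and the DFS-with-edge-budget subroutine, and verify that the total number of edge accesses is $O(\nu k/(\gap+1))$ and total time $O(\nu k^2/(\gap+1))$ — the extra factor of $k$ coming from the (up to) $k+\gap$ augmentation rounds, each touching at most $O(\nu k/(\gap+1))$ edges; (2) prove the "no false small cut": whenever the algorithm outputs a set $S$, it genuinely has out-cut size $< k+\gap$ and out-volume $\le 130\nu k/(\gap+1)$ — this is a deterministic structural fact about the reachable set of a stuck bounded-DFS in a residual-style graph, using $\nu < m(\gap+1)/(130k)$ to guarantee $S \subsetneq V$; (3) prove the correctness probability: conditioned on the existence of a witness set $S'$ with $x\in S'$, $\vol^{\out}(S')\le\nu$, $|E(S',V-S')| < k$, show that with probability $\ge 3/4$ at least one of the sampled targets $t$ falls outside $S'\cup N^{\out}(S')$ — or more precisely falls in a region that forces the DFS packing to get stuck before reaching $k+\gap$ paths — so that the algorithm outputs some $S$ rather than $\bot$. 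The probability analysis rests on the fact that $\vol^{\out}(S')\le\nu \ll m$, so a random out-edge lands inside $S'$ with probability at most $\nu/m$, which is tiny by the precondition; taking $\Theta(k/(\gap+1))$ samples (balanced against the budget) makes the failure probability a small constant.

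The main obstacle I expect is step (3): getting the reduction "a sampled target outside $S'$ $\Rightarrow$ the algorithm detects a cut" to go through cleanly with the edge budget. The subtlety is that even if the target $t$ lies outside $S'$, the DFS could in principle find $k+\gap > k$ edge-disjoint paths using vertices far from $x$, never noticing the small cut $N^{\out}(S')$ near $x$ — because the cut has size only $< k$, so after $k-1$ augmentations along edges of $E(S',V-S')$ the graph really is "stuck" at the $S'$-boundary, but we are asking for $k+\gap$ paths. So the right claim is: if there is a witness $S'$, then \emph{no matter which target we pick}, after at most $k-1 < k+\gap$ augmentations every further DFS from $x$ stays confined to $S'$ (since $E(S',V-S')$ is saturated) and hence aborts by the volume bound $\vol^{\out}(S')\le\nu \le \nu k/(\gap+1)$ — meaning the algorithm always outputs a set, never $\bot$, whenever a witness exists. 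That observation actually removes the need for a probabilistic argument on the $\bot$-side and shows the randomness is only needed for the \emph{running time / budget} to be respected in expectation, or alternatively for boosting. I would nail down this confinement argument first, then calibrate the constants $130$ and the $1/4$ error to the edge budget $O(\nu k/(\gap+1))$, and finally derive \Cref{thm:intro:LocalVC_approx} by the standard splitting-graph reduction from vertex cuts to edge cuts (replacing each vertex by an in-node/out-node pair joined by a unit-capacity edge) together with the relation $\epsilon = 1/(2k)$ and $m = O(nk)$ sparsification noted after the theorem.
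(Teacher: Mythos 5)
Your augmenting-path / residual-graph framework is the right skeleton, and the path-reversal lemma (reversing a path from $x$ to $y$ decreases $|E(S,V-S)|$ by one iff $y\notin S$) is exactly the paper's Lemma~3.6. But two linked steps in the proposal do not go through, and they are precisely where the paper's design diverges from yours.

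First, you sample the round-$i$ target $t$ from the \emph{global} graph (with probability proportional to $\deg^{\out}(t)$) and then run a DFS from $x$ to $t$ with a per-round edge budget. The trouble is that the size of such a DFS is completely uncontrolled: if $t$ is far from $x$ (or simply unreachable in the current residual graph but $x$ still reaches a large component), the DFS will scan arbitrarily many edges before either reaching $t$ or exhausting the reachable set. There is no useful upper bound on the \emph{expected} number of new edges scanned per round, so a Markov-style argument bounding the global edge budget is unavailable. The paper instead never picks a global target: it runs BFS from $x$ and, at each freshly visited edge, stops with probability $(\gap+1)/(8\nu)$, then reverses the path to the tail of that edge. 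This makes the per-round number of newly marked edges a geometric random variable with mean $8\nu/(\gap+1)$, so the expected total over $k+\gap$ rounds is $O(\nu k/(\gap+1))$, and Markov caps the global budget at $128\nu k/(\gap+1)$ with probability $7/8$. Your global-target sampling simply doesn't give this control.

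Second, and consequently, your handling of budget-abort is incorrect. You write that when a DFS ``aborts early'' you output the current reachable set $S$, and assert that $|E(S,V-S)| < k+\gap$ ``because\ldots the search exhausted everything inside $S$.'' But a DFS aborted by an edge budget has \emph{not} exhausted its reachable set: there remain unexplored residual out-edges from $S$, so $|E_{\mathrm{res}}(S,V-S)|>0$ and the path-reversal lemma gives you nothing about $|E(S,V-S)|$. The paper is careful here: a set is returned only when the BFS genuinely gets stuck (no residual out-edges left and no random stop happened), in which case $|E_{\mathrm{res}}(S,V-S)|=0$ and after fewer than $k+\gap$ reversals the original cut has size $<k+\gap$; when the global budget is exceeded, the algorithm returns $\bot$, not a set, and that event is absorbed into the $1/8$ error from Markov. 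Your proposed ``deterministic no-$\bot$ whenever a witness exists'' conclusion is therefore wrong: even with a witness, the algorithm can hit the budget and must be allowed to fail (or, in your version, would output a set that doesn't certify a cut). The remaining $1/8$ of the paper's error budget comes from a second Markov bound showing that, over $k+\gap$ rounds, the number of random stops landing on edges of $E(S',V)$ is at most $\gap$ with probability $7/8$; conditioned on that, at least $k-1$ rounds decrease the cut, so the final BFS is confined to $S'$ and returns $V(T)$. You would need an analogue of this two-event decomposition for your sampling scheme, but step one above blocks the first half of it.
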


In particular, we obtain exact and $(1+\epsilon)$-approximate local
algorithms for \Cref{thm:localEC_approx_new} as follows. We set $\gap
= 0$ in \Cref{thm:localEC_approx_new} for the exact local algorithm.
For $(1+\epsilon)$-approximation, we set $\gap = \lfloor \epsilon k
\rfloor.$

\begin{corollary}
\label{cor:localEC_exact_new}
There exists the following randomized algorithm. It takes as the same inputs as
in \Cref{thm:localEC_approx_new} where $\gap = 0$.  
It accesses (i.e. makes queries for) $O(\nu k)$ edges and
runs in $O(\nu k^2)$ time.  It then outputs 
in the following manner.  
\begin{itemize}[noitemsep,nolistsep]
\item If there exists a vertex-set $S'$ such that $S' \ni x, \vol^{\out}(S')
  \leq \nu, $ and $ |E(S',V-S')| < k$, then with probability at least $3/4$, the algorithm
  outputs $S$ a non-empty vertex-set $S\subsetneq V$
  such that $|E(S,V-S)| < k $  (otherwise it outputs $\bot$).  
\item Otherwise (i.e., no such $S'$ exists), the algorithm outputs either
  the set $S$ as above or $\bot$. 
\end{itemize}
\end{corollary}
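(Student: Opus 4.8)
The plan is to obtain \Cref{cor:localEC_exact_new} as the immediate specialization of \Cref{thm:localEC_approx_new} to the slack value $\gap = 0$. First I would verify that the hypotheses of \Cref{thm:localEC_approx_new}, listed in \Cref{eq:precon_ec}, are all implied by those of the corollary: the conditions $k \geq 1$ and $\nu > k$ are assumed verbatim; the condition $\gap \leq k$ holds trivially since $\gap = 0 \leq 1 \leq k$; and the condition $\nu < m(\gap+1)/(130k)$ collapses to $\nu < m/(130k)$, which is precisely the hypothesis the corollary inherits (the only point that needs a moment's care is this last rewriting, together with checking $\gap=0$ respects $\gap \le k$).

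Next I would substitute $\gap = 0$ into the resource bounds of \Cref{thm:localEC_approx_new}: the number of queried edges $O(\nu k/(\gap+1))$ becomes $O(\nu k)$, and the running time $O(\nu k^2/(\gap+1))$ becomes $O(\nu k^2)$, matching the corollary.

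Finally I would translate the output guarantee. In the case that some $S' \ni x$ with $\vol^{\out}(S') \leq \nu$ and $|E(S',V-S')| < k$ exists, \Cref{thm:localEC_approx_new} returns, with probability at least $3/4$, a non-empty $S \subsetneq V$ with $|E(S,V-S)| < k + \gap = k$ (and in fact $\vol^{\out}(S) \leq 130\nu k/(\gap+1) = 130\nu k$, although this extra bound is not asserted in the corollary's statement); otherwise the algorithm may output such an $S$ or the symbol $\bot$. This is exactly the claimed behavior, so no separate argument is needed — the statement is a direct instantiation and there is no real obstacle.
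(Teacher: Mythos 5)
Your proof is correct and takes essentially the same approach as the paper, which obtains the corollary by directly instantiating \Cref{thm:localEC_approx_new} with $\gap = 0$. Your verification that the preconditions in \Cref{eq:precon_ec} specialize as claimed and that the bounds $O(\nu k/(\gap+1))$, $O(\nu k^2/(\gap+1))$, and $k+\gap$ reduce to $O(\nu k)$, $O(\nu k^2)$, and $k$ is exactly what is needed.
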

 
\begin{corollary}
\label{cor:localEC_approx_new}
There exists the following randomized algorithm. It takes as the same inputs as
in \Cref{thm:localEC_approx_new} with additional parameter $\epsilon
\in (0,1]$ where $\gap$ is set to be equal to $ \lfloor \epsilon k \rfloor.$  
It accesses (i.e. makes queries for) $O(\nu /\epsilon)$ edges and
runs in $O(\nu k/ \epsilon)$ time.  It then outputs 
in the following manner.  
\begin{itemize}[noitemsep,nolistsep]
\item If there exists a vertex-set $S'$ such that $S' \ni x, \vol^{\out}(S')
  \leq \nu, $ and $ |E(S',V-S')| < k$, then with probability at least $3/4$, the algorithm
  outputs $S$ a non-empty vertex-set $S\subsetneq V$
  such that $|E(S,V-S)| < \lfloor (1+\epsilon) k \rfloor $  (otherwise it outputs $\bot$).  
\item Otherwise (i.e., no such $S'$ exists), the algorithm outputs either
  the set $S$ as above or $\bot$. 
\end{itemize}
\end{corollary}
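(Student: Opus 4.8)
The plan is to derive \Cref{cor:localEC_approx_new} simply by instantiating \Cref{thm:localEC_approx_new} with the slack parameter $\gap$ fixed to $\lfloor \epsilon k \rfloor$, and then verifying that this choice is admissible and tracing how each guarantee of the theorem specializes. Since all the real work is already contained in \Cref{thm:localEC_approx_new}, the proof is essentially a routine verification rather than a new argument.

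First I would check the precondition~\eqref{eq:precon_ec}. The conditions $k \geq 1$, $\nu > k$, and $\nu < m(\gap+1)/(130k)$ are inherited from the input assumptions of the corollary, so the only additional thing to confirm is $\gap \leq k$; this is immediate because $\epsilon \in (0,1]$ and $k$ is a positive integer, giving $\gap = \lfloor \epsilon k \rfloor \leq \epsilon k \leq k$. Hence \Cref{thm:localEC_approx_new} applies with this choice of $\gap$.

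Next I would specialize the resource bounds. The key inequality is $\gap + 1 = \lfloor \epsilon k \rfloor + 1 > \epsilon k > 0$ (using $\epsilon>0$ and $k\geq 1$), which holds whether or not $\lfloor \epsilon k \rfloor$ vanishes. Substituting $1/(\gap+1) < 1/(\epsilon k)$ into the theorem's bounds turns the query count $O(\nu k/(\gap+1))$ into $O(\nu/\epsilon)$ and the running time $O(\nu k^2/(\gap+1))$ into $O(\nu k/\epsilon)$, matching the corollary.

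Finally I would translate the output guarantee. In the case that a witness set $S'$ with $S' \ni x$, $\vol^{\out}(S') \leq \nu$, and $|E(S',V-S')| < k$ exists, \Cref{thm:localEC_approx_new} returns, with probability at least $3/4$, a nonempty proper $S \subsetneq V$ with $|E(S,V-S)| < k + \gap$; the only arithmetic fact needed is that, since $k$ is an integer, $k + \gap = k + \lfloor \epsilon k\rfloor = \lfloor (1+\epsilon)k\rfloor$, so this reads exactly $|E(S,V-S)| < \lfloor(1+\epsilon)k\rfloor$ as required (the accompanying volume bound $\vol^{\out}(S)\leq 130\nu k/(\gap+1)$ is not part of the corollary's statement and may be discarded). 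In the complementary case the theorem's dichotomy --- output such an $S$ or $\bot$ --- transfers verbatim, as does the error analysis. I do not expect any genuine obstacle here: the substance lies entirely in \Cref{thm:localEC_approx_new}, and the only points requiring (trivial) care are the floor identity $k+\lfloor\epsilon k\rfloor=\lfloor(1+\epsilon)k\rfloor$ and the strict inequality $\lfloor\epsilon k\rfloor+1>\epsilon k$ used for the complexity bounds.
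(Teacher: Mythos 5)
Your proposal is correct and follows essentially the same route as the paper: set $\gap=\lfloor\epsilon k\rfloor$, invoke \Cref{thm:localEC_approx_new}, and use the floor-arithmetic identity $k+\lfloor\epsilon k\rfloor=\lfloor(1+\epsilon)k\rfloor$ (the paper's \Cref{eq:gap-to-approx} states this as a $\leq$, which suffices; you correctly note it is in fact an equality because $k$ is an integer). The only thing you add beyond the paper's one-line proof is an explicit check of the precondition $\gap\leq k$ and of the bound $\gap+1>\epsilon k$ used to specialize the query/time complexities, both of which are routine and correct.
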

\begin{proof}
The results follow from \Cref{thm:localEC_approx_new} where we set
$\gap = \lfloor \epsilon k \rfloor$, and the following fact: 
\begin{align} \label{eq:gap-to-approx}
k+\gap = k + \lfloor \epsilon k \rfloor  =\lfloor k \rfloor + \lfloor
  \epsilon k \rfloor  \leq \lfloor k+\epsilon k\rfloor = \lfloor (1+\epsilon) k\rfloor.
\end{align}\qedhere
\end{proof}
 
\begin{algorithm}
\SetKwFor{RepTimes}{repeat}{times}{end}
\KwIn{$\text{Seed node } x \in V, \text{target volume } \nu \geq k,
  \text{target cut-size } k \geq 1,  \text{slack }\gap \leq k.$} 
\KwOut{a vertex-set $S$ or the symbol $\perp$ with as in \Cref{thm:localEC_approx_new}.}     
\BlankLine

\RepTimes{$k + \gap$}
{
$y \gets \mathtt{NIL}.$ \;
Grow a BFS tree $T$ starting from $x$ (where every edges point towards leaves) as follows: \;
\While{\normalfont{the BFS algorithm still has an edge $e = (a,b)$ to explore}} 
{
  \If{$e$ \normalfont{is} not marked} {
       Mark $e$.\;  
       \lIf{\normalfont{the} algorithm marks $\geq 128\nu k/(\gap+1)$
         edges  \label{line:earlybot}}{
       	\Return{$\bot$. } }  
       With probability $(\gap+1)/(8\nu)$, set $y \gets a$ and \textbf{break the while-loop}.\;
    }
}
\lIf{$y = \mathtt{NIL}$} { 
  \Return $V(T)$. 
} \lElse  {Reverse the direction of edges in the path $P_{xy}$ in $T$ from $x$ to $y$.}
}
\Return{$\bot$.}

\caption{$\localEC(x,\nu,k,\gap)$\label{alg:local_ec_new}}
\end{algorithm}

The algorithm for \Cref{thm:localEC_approx_new} is described in
\Cref{alg:local_ec_new}.  
Roughly, in each iteration of the
repeat-loop the algorithm runs a standard breadth-first search (BFS)
algorithm, but randomly stop before finishing the BFS algorithm; this
means we break the while-loop in \Cref{alg:local_ec_new} . (If the BFS algorithm finishes first, it outputs all nodes found in such iteration.) If an edge $(a,b)$ is the last edge explored by the algorithm before the random stop happens, we flip the direction of all edges on the unique path from $x$ to $b$ in the BFS tree. We repeat this for $k+\gap$ iterations.

In the analysis below, we assume that every node accessed by
\Cref{alg:local_ec_new} has degree at least $k$. Otherwise, an
edge-cut is found and we are done.  We start with the following important observation.

\begin{lemma} \label{lem:reverse_cutsize_new} Let $S\subset V$ be any set where $x\in S$.
Let $P_{xy}$ be a path from $x$ to $y$. Suppose we reverse the
direction of edges in $P_{xy}$. Then, we have $|E(S,V-S)|$ and $\vol^{\out}(S)$
are both decreased exactly by one if $y\notin S$. Otherwise, $|E(S,V-S)|$
and $\vol^{\out}(S)$ stay the same.
\end{lemma}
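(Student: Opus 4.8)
The plan is to track the change to the two quantities edge by edge along the path $P_{xy}$, using the orientation of $G$ as a flow-like structure. Write $P_{xy} = (x = v_0, v_1, \dots, v_\ell = y)$, so before reversal the edges $(v_0, v_1), (v_1, v_2), \dots, (v_{\ell-1}, v_\ell)$ are all present and directed forward, and after reversal they become $(v_1, v_0), (v_2, v_1), \dots, (v_\ell, v_{\ell-1})$. Fix the set $S$ with $x = v_0 \in S$. The key observation is that reversing a single directed edge $(a,b)$ changes $|E(S, V-S)|$ by exactly $+1$, $-1$, or $0$ depending on which of $a,b$ lie in $S$: it decreases by one iff $a \in S$ and $b \notin S$ (an out-edge becomes an in-edge), increases by one iff $a \notin S$ and $b \in S$, and is unchanged if $a,b$ are on the same side. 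Exactly the same case analysis applies to $\vol^{\out}(S) = \sum_{u \in S} \deg^{\out}(u)$, since reversing $(a,b)$ decrements $\deg^{\out}(a)$ and increments $\deg^{\out}(b)$; summing the contributions only over $u \in S$ gives the identical $\{+1,-1,0\}$ outcome with the same conditions. So $|E(S,V-S)|$ and $\vol^{\out}(S)$ change by the \emph{same} amount under reversal of any single edge.

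Next I would sum these per-edge changes over the path. For each $i \in \{0,\dots,\ell-1\}$ the reversed edge goes from $v_{i+1}$ to $v_i$, so its contribution to the change is $-1$ if $v_{i+1} \in S, v_i \notin S$; $+1$ if $v_{i+1} \notin S, v_i \in S$; and $0$ otherwise. This is a telescoping-type sum: letting $\chi_i = \mathbf{1}[v_i \in S]$, the contribution of edge $i$ equals $\chi_i - \chi_{i+1}$ in all three cases (check: $v_i\in S, v_{i+1}\notin S$ gives $1-0=1$; $v_i\notin S, v_{i+1}\in S$ gives $0-1=-1$; both in gives $0$; both out gives $0$). Hence the total change is $\sum_{i=0}^{\ell-1}(\chi_i - \chi_{i+1}) = \chi_0 - \chi_\ell = \mathbf{1}[x\in S] - \mathbf{1}[y \in S] = 1 - \mathbf{1}[y \in S]$, since $x \in S$ by hypothesis. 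Thus the change is $0$ when $y \in S$ and $-1$ when $y \notin S$, and this holds for both $|E(S,V-S)|$ and $\vol^{\out}(S)$, which is exactly the claim.

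One technical point worth checking, rather than a genuine obstacle: the path $P_{xy}$ in the BFS tree is simple, so the vertices $v_0, \dots, v_\ell$ are distinct and no edge is reversed twice; moreover the tree edges of $P_{xy}$ are genuinely edges of the current (partially reversed) graph, so "reverse the direction" is well-defined and the per-edge accounting above is valid. I expect the main subtlety, if any, is just being careful that the statement is about the current orientation at the moment of reversal (the graph having possibly been modified by earlier iterations of the repeat-loop), not the original graph — but since the lemma is stated for an arbitrary current graph and arbitrary path, the telescoping argument goes through verbatim. No heavy computation is needed.
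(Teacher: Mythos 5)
Your approach is essentially the same as the paper's: both count how the path $P_{xy}$ crosses the boundary of $S$. The paper phrases it as ``out-crossings exceed in-crossings by exactly one when $y\notin S$ (and are equal when $y\in S$)''; you formalize the same counting via a telescoping sum of indicator functions. That formalization is a perfectly legitimate write-up of the same idea, and your remark that one must work with the current (partially reversed) orientation rather than the original graph is a fair point to make explicit.

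However, there is a sign error in your per-edge accounting that makes your written formula contradict your own conclusion. You correctly state the one-edge rule: reversing a directed edge $(a,b)$ decreases $|E(S,V-S)|$ (and $\vol^{\out}(S)$) by one iff $a\in S$ and $b\notin S$. But the edge being reversed on the path is the \emph{original} edge $(v_i,v_{i+1})$, so the rule should be applied with $a=v_i$, $b=v_{i+1}$: contribution $-1$ when $v_i\in S, v_{i+1}\notin S$, contribution $+1$ when $v_i\notin S, v_{i+1}\in S$, i.e.\ contribution $\chi_{i+1}-\chi_i$. You instead applied the rule to the already-reversed orientation $(v_{i+1},v_i)$, which gives $\chi_i-\chi_{i+1}$ --- that is the contribution of reversing the edge \emph{back}, hence the opposite sign. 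Your telescoping sum then yields $\chi_0-\chi_\ell = 1-\mathbf{1}[y\in S]$, which equals $+1$ when $y\notin S$; yet you conclude the change is $-1$ in that case, silently flipping the sign to match the lemma. The fix is straightforward: use $\chi_{i+1}-\chi_i$ as the per-edge contribution, which telescopes to $\chi_\ell-\chi_0 = \mathbf{1}[y\in S]-1$, giving $0$ if $y\in S$ and $-1$ if $y\notin S$, exactly the claim.
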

\begin{proof}
We fix the set $S$ and the path $P_{xy}$ where $x \in S$. If $y \notin
S$, then $P_{xy}$ crosses the edges between $S$ and $V-S$ back and forth
so that the number of out-crossing
edges (i.e., from $S$ to $V-S$) is one plus the number of in-crossing
edges (i.e., from $V-S$ to $S$). Therefore, reversing the direction of
edges in $P_{xy}$ decreases both $|E(S,V-S)|$ and $\vol^{\out}(S)$
exactly by one.  If $y \in S$, then the number out-crossing edges is
equal to the number of in-crossing edges. Thus, reversing the
direction of edges in $P_{xy}$ does not change $|E(S,V-S)|$ and
$\vol^{\out}(S)$. 
\end{proof}

It is easy to see \Cref{alg:local_ec_new} accesses  $O(\nu
k/(\gap+1))$ edges, and runs in time $O(\nu k^2/(\gap+1))$. 

\begin{lemma} \label{lem:running-time-main}
\Cref{alg:local_ec_new} runs in time $O(\nu k^2/(\gap+1))$,
and  accesses $O(\nu k/(\gap+1))$ edges. 
\end{lemma}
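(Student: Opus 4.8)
The plan is to bound separately (i) the number of edge accesses and (ii) the work done per accessed edge, then multiply, and finally account for the bookkeeping outside the BFS loops. First I would observe that the only place where \Cref{alg:local_ec_new} touches (queries for) an edge is inside the \textbf{while}-loop, when the BFS algorithm is handed an edge $e=(a,b)$ to explore. Each such edge either was already marked (in which case no new access is charged beyond the constant-time check) or is freshly marked. The crucial point is the early-termination test on \Cref{line:earlybot}: the algorithm returns $\bot$ as soon as it has marked $\ge 128\nu k/(\gap+1)$ edges, so across the entire execution — summing over all $k+\gap$ iterations of the \textbf{repeat}-loop — the total number of distinct marked edges, hence the total number of genuinely new edge accesses, is $O(\nu k/(\gap+1))$. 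Revisits of already-marked edges within a single BFS are also bounded, since a standard BFS explores each edge a constant number of times; but here I should be a little careful, because after a path reversal the BFS of the \emph{next} iteration is restarted from scratch, so marks persist but the BFS tree is rebuilt. I would argue that the number of edges scanned in one iteration is $O(1)$ times the number of edges incident to the vertices reached, and since the BFS of that iteration halts (via \Cref{line:earlybot} or the random break or exhaustion) after marking at most $128\nu k/(\gap+1)$ edges \emph{in total over all iterations so far}, the per-iteration scanning cost telescopes and the grand total of edge-exploration steps is still $O(\nu k/(\gap+1))$.

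Next I would bound the non-BFS work, which consists of the path reversals. In each iteration at most one path $P_{xy}$ in the BFS tree $T$ is reversed, and $|P_{xy}| \le |V(T)|-1$; since $|V(T)|$ is at most the number of vertices touched, which (as every touched vertex has out-degree $\ge k$, by the standing assumption) is at most $(128\nu k/(\gap+1))/k = 128\nu/(\gap+1)$, each reversal costs $O(\nu/(\gap+1))$ time. Over $k+\gap = O(k)$ iterations this contributes $O(\nu k/(\gap+1))$ time — subsumed by the BFS cost. Likewise, rebuilding the BFS tree pointers and initializing $y$ each iteration is dominated by the exploration cost of that iteration.

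Finally, to get the $O(\nu k^2/(\gap+1))$ \emph{time} bound (as opposed to the $O(\nu k/(\gap+1))$ \emph{access} bound), I note the discrepancy factor of $k$: with a naive adjacency-list BFS, accessing $O(\nu k/(\gap+1))$ edges takes only $O(\nu k/(\gap+1))$ time, so the extra $k$ must come from somewhere. I expect it comes from the cost of maintaining, per BFS and per touched vertex, the data needed to recover the path $P_{xy}$ and to decide which edges ``point toward leaves'' after the reversals have scrambled orientations — i.e. each of the $O(k)$ iterations does $O(\nu k/(\gap+1))$ work independently, without sharing marks-induced savings in the \emph{time} accounting, so the iterations multiply to $O(\nu k^2/(\gap+1))$. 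The main obstacle will be pinning down exactly this: making precise why the marking budget gives a global $O(\nu k/(\gap+1))$ \emph{access} bound (shared across iterations) yet the \emph{time} per iteration is only individually bounded by $O(\nu k/(\gap+1))$, so that the correct product is $k \cdot O(\nu k/(\gap+1))$ for time but a single $O(\nu k/(\gap+1))$ for accesses. I would resolve it by charging edge \emph{accesses} to the global mark counter (which never exceeds $128\nu k/(\gap+1)$) while charging \emph{time} within an iteration to the at most $128\nu k/(\gap+1)$ marks that may exist during that iteration, losing a factor of $k+\gap \le 2k$ in the latter.
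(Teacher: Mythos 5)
Your final resolution --- charging \emph{accesses} to the global mark counter (capped at $\lceil 128\nu k/(\gap+1)\rceil$ by \Cref{line:earlybot}) and charging \emph{time} to a per-iteration BFS pass, multiplied by $k+\gap \le 2k$ iterations --- is exactly the paper's argument, and it is correct. However, two of your intermediate claims are wrong and would mislead a reader. The assertion that ``the per-iteration scanning cost telescopes and the grand total of edge-exploration steps is still $O(\nu k/(\gap+1))$'' is false: marks persist across iterations, but each iteration's BFS is rebuilt from scratch and re-traverses the already-marked (cached) edges, so the total exploration work is $\Theta(k)$ times the mark budget, i.e.\ the $O(\nu k^2/(\gap+1))$ time bound, not $O(\nu k/(\gap+1))$. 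Likewise, the extra factor of $k$ does not come from ``the cost of maintaining $\ldots$ the data needed to recover the path $P_{xy}$'' or from reversed orientations; those are handled within the cached adjacency structure and are dominated by the scanning cost, as you yourself observe when bounding the reversal cost. The factor of $k$ comes from the single observation the paper uses in its one-line proof: each of the $k+\gap\le 2k$ BFS passes runs in time linear in the number of edges scanned, which is at most the running mark total $\le 128\nu k/(\gap+1)$, while a query to the oracle is charged only the first time an edge is marked. If you strip out the telescoping claim and the bookkeeping speculation and keep only the final charging scheme, you have the paper's proof.
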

\begin{proof}
By \label{line:earlybot}, the algorithm accesses at most $\lceil 128
\nu k /(\gap +1) \rceil = O(\nu k/(\gap+1)) $ edges.   The running
time follows since BFS runs linear in number of edge accesses, and we repeat at most $k+\gap \leq 2k$ iterations. 
\end{proof}

The following two lemmas prove the correctness of \Cref{alg:local_ec_new}.
\begin{lemma}
If a vertex set $S$ is returned, then  $\emptyset \neq S \subset V,
|E(S,V-S)|< k + \gap$ and $\vol^{\out}(S) \leq 130\nu k/(\gap+1)$. 
\end{lemma}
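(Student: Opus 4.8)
The plan is to notice that \Cref{alg:local_ec_new} returns a vertex set only at the line ``\Return $V(T)$'', which is reached precisely when, in some iteration $i\le k+\gap$ of the repeat-loop, the BFS from $x$ runs to completion in the \emph{current} orientation of $G$ without ever being randomly stopped (so $y$ stays $\mathtt{NIL}$). Hence $S=V(T)$ for the BFS tree $T$ of that iteration, and two facts hold in the orientation of $G$ at that moment. First, a completed BFS has explored every out-edge of every vertex it reached and it reached exactly $V(T)$, so $S$ is closed under out-edges in the current orientation, i.e. $|E(S,V-S)|=0$ there. Second, every such explored edge is marked by the time the BFS finishes, so the current-orientation out-volume $\vol^{\out}(S)$ is at most the total number of marked edges, which is strictly less than $128\nu k/(\gap+1)$ — otherwise the early-termination check of \Cref{alg:local_ec_new} would have returned $\bot$ rather than $S$. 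The whole argument then consists of transporting these two facts back to the original graph $G$ using \Cref{lem:reverse_cutsize_new}.

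For $\emptyset\neq S\subsetneq V$: $x\in V(T)=S$ gives $S\neq\emptyset$. If $S=V$, then iteration $i$'s BFS reached every vertex and therefore explored and marked every one of the $m$ edges of $G$ (each edge is an out-edge of exactly one vertex), so the total marked count would be $m$; but the precondition $\nu<m(\gap+1)/(130k)$ yields $128\nu k/(\gap+1)<m$, so the algorithm would have returned $\bot$ before finishing this BFS. Hence $S\neq V$.

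For the cut-size bound: every iteration before $i$ must have performed a path reversal, since it cannot have returned $V(T)$ (the algorithm would have halted earlier) and cannot have triggered the early-termination $\bot$ (it would have returned $\bot$). So at most $i-1\le k+\gap-1$ reversals have been applied, along paths ending at vertices $y_1,\dots,y_{i-1}$. Applying \Cref{lem:reverse_cutsize_new} to the fixed set $S$ (note $x\in S$), each such reversal either decreases $|E(S,V-S)|$ by exactly one (if $y_t\notin S$) or leaves it unchanged; since the cut is $0$ in the current orientation, $|E_G(S,V-S)|=|\{t:y_t\notin S\}|\le i-1<k+\gap$.

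For the volume bound: by the same lemma each of those reversals changes $\vol^{\out}(S)$ in exactly the same way, so $\vol^{\out}_G(S)=\vol^{\out}_{\mathrm{cur}}(S)+|\{t:y_t\notin S\}|<128\nu k/(\gap+1)+(k+\gap)$, using the second fact above. It remains to verify $k+\gap\le 2\nu k/(\gap+1)$, i.e. $(k+\gap)(\gap+1)\le 2\nu k$; this follows from $\gap\le k$ and $\nu>k$ (so $\nu\ge k+1$ and $2\nu k\ge 2k(k+1)\ge(k+\gap)(\gap+1)$), giving $\vol^{\out}_G(S)<130\nu k/(\gap+1)$. The step I expect to be the most delicate is the bookkeeping across iterations: both the marks and the edge orientations persist and interact, so one must argue carefully that ``number of marked edges $<128\nu k/(\gap+1)$'' is a genuine global invariant whenever a set is returned, and that the current-orientation out-volume of $V(T)$ is really captured by this count (which relies on the BFS traversing edges regardless of whether they are marked, and on \Cref{lem:reverse_cutsize_new} accounting for each path flip exactly rather than only up to $\pm1$).
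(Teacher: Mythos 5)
Your proof is correct and follows essentially the same route as the paper's: observe that $S=V(T)$ with $|E(S,V-S)|=0$ and current-orientation out-volume bounded by the marked-edge cap, note that fewer than $k+\gap$ reversals occurred before the returning iteration, and transport both bounds back to the original orientation via \Cref{lem:reverse_cutsize_new}. The only difference is that you explicitly verify $k+\gap \le 2\nu k/(\gap+1)$ from the preconditions $\gap\le k$ and $\nu>k$, a step the paper leaves implicit.
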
 
\begin{proof}
If $S$ is returned, then the BFS tree $T$ gets stuck at $S=V(T)$.
That is, at the end of the algorithm, we have  $|E(S,V-S)|=0$ and $
\vol^{\out}(S) \leq 128\nu k/(\gap+1) \stackrel{(\ref{eq:precon_ec})}  < m .$  We first show that $S \neq
\emptyset$ and $S \neq V$. Note that by design either $S$ contains a
single node or $x \in S$. The fact that $S \neq V$ follows since
$\vol^{\out}(S)   < m$ and  BFS tree $T$ gets stuck with a spanning tree, $V(T) = V$, only if all edges are
marked.   The algorithm has reversed strictly less than $k+\gap $ many paths $P_{xy}$ because
the algorithm did not reverse a path in the iteration that $S$ is returned. Therefore,
\Cref{lem:reverse_cutsize_new} implies that, initially, $|E(S,V-S)|< k
+ \gap,$ and $\vol^{\out}(S)  < 128\nu k/(\gap+1)  + k+\gap \leq 130\nu k/(\gap+1) .$

\end{proof} %

\begin{lemma}
If there is $S$  where $x \in S, |E(S,V-S)|<k$ and
$\vol^{\out}(S)\le\nu$, then $\bot$ is returned with probability at
most $1/4$. 
\end{lemma}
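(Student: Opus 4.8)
The plan is to condition on the existence of such a set $S$ and bound the probability that \Cref{alg:local_ec_new} returns $\bot$. The algorithm returns $\bot$ in exactly one of two (disjoint) ways: (F2) the early-termination line fires, i.e.\ once $128\nu k/(\gap+1)$ edges have been marked; or (F1) the repeat-loop runs to completion, i.e.\ every one of the $k+\gap$ iterations ends with a random stop (hence with a path reversal) rather than returning $V(T)$. I will show $P(F2)\le 1/8$ and $P(F1)\le 1/8$ and finish with a union bound.

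First I would isolate the one structural fact the whole argument rests on: \emph{no edge is ever reoriented before it is marked}. Every reversal flips the edges along a path of the current BFS tree, and every tree edge is marked at the moment it joins the tree; so only marked edges ever change direction, and therefore at the instant an edge $e$ is first marked — which is exactly the instant its stop-coin (the probability-$(\gap+1)/(8\nu)$ event) is tossed — $e$ still carries its orientation in the input graph $G$. Consequently the only edges whose marking can be accompanied by ``the tail of $e$ lies in $S$'' are the out-edges of $S$-vertices in $G$, of which there are $\vol^{\out}_G(S)\le\nu$, each marked (hence each coin tossed) at most once. I would also record that, since the residual value of $|E(S,V-S)|$ never increases and never goes negative, \Cref{lem:reverse_cutsize_new} forces at most $|E_G(S,V-S)|\le k-1$ reversals to be ``progressing'' (destination $y\notin S$).

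For $P(F2)$: in a single iteration the number of newly-marked edges before the random stop is stochastically dominated by a geometric variable with parameter $p=(\gap+1)/(8\nu)$ (getting stuck, or hitting the mark budget, only makes it smaller), so over the at most $k+\gap$ iterations the expected number of marked edges is at most $8\nu(k+\gap)/(\gap+1)$, which, using $\gap\le k$, is at most one eighth of the threshold $128\nu k/(\gap+1)$; Markov's inequality gives $P(F2)\le 1/8$. For $P(F1)$: on $F1$ there are exactly $k+\gap$ reversals, of which at most $k-1$ are progressing, so at least $\gap+1$ are ``wasted'' ($y\in S$). Each wasted reversal is triggered by a \emph{distinct} freshly-marked edge whose $G$-tail lies in $S$ and whose stop-coin came up $1$; by the structural fact there are at most $\nu$ candidate edges and their coins are independent with probability $p$ each, so the number of wasted reversals is stochastically dominated by a binomial of mean $\nu p=(\gap+1)/8$, and Markov gives $P(\#\text{wasted}\ge\gap+1)\le 1/8$. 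Adding the two bounds yields $P(\bot)\le 1/4$.

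The step I expect to be the real obstacle is precisely this structural fact and the bookkeeping around it: one has to see that reversals act only on marked edges, so the relevant ``targets'' for a wasted reversal are the (at most $\nu$) out-edges of $S$ in the \emph{fixed} graph $G$, not in the ever-changing residual graph, and then that distinct wasted reversals use distinct such edges. Once that is in place, the remainder is two applications of Markov's inequality plus \Cref{lem:reverse_cutsize_new}. The one care-point is setting up the stochastic dominations cleanly — for instance by imagining all stop-coins pre-sampled and noting the algorithm consults an edge's coin only when it marks that edge, so that deterministically the number of wasted reversals is at most the number of $G$-out-edges of $S$ whose pre-sampled coin is $1$.
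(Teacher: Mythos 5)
Your proof is correct and takes essentially the same route as the paper's: the same decomposition into the early-termination event and the all-$k{+}\gap$-iterations-complete event, the same two Markov bounds on the number of marked edges and on the number of ``wasted'' stops, and the same use of \Cref{lem:reverse_cutsize_new} to cap the progressing reversals at $k-1$. The one place you are more explicit than the paper is in justifying that the at most $\nu$ relevant coin tosses are attached to out-edges of $S$ in the \emph{original} graph $G$ — observing that only tree (hence already-marked) edges are ever reversed, so every edge is still in its $G$-orientation at the instant its stop-coin is first consulted — whereas the paper compresses this to ``since we never unmark edges.''
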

\begin{proof}

Let $\tau = 128\nu k/(\gap+1)$, and $R_{\bot}$ be an event that $\bot$ is
returned.  For the purpose of analysis, let $X$ be a random
variable denoting the total number of edges that the algorithm marked
assuming line~\earlybotline~ is ignored,  which is the same as
the total number of edges accessed by the algorithm.   Our goal is to show that $\Pr[R_{\bot}] \leq 1/4$. It is
enough to show that $\Pr[X \geq \tau] \leq 1/8$ and $\Pr[R_{\bot} \mid
X < \tau] \leq 1/8$. If this is true, then we have the following. 
\begin{align*}
\Pr[R_{\bot}] &= \Pr[R_{\bot} \cap (X < \tau) ] + \Pr[R_{\bot} \cap (X \geq
  \tau)]\\ 
&= \Pr[R_{\bot} \mid  X < \tau]\Pr[ X < \tau]  + \Pr[R_{\bot} \mid X
  \geq \tau] \Pr[X  \geq \tau]\\
&\leq  \Pr[R_{\bot} \mid  X < \tau] + \Pr[X  \geq \tau] \\
& \leq 1/8+1/8 = 1/4.
\end{align*}
The first two equalities follow from conditional probability. The
third inequality follows since $\Pr[X < \tau] \leq 1, $ and if $X
\geq \tau$, the the algorithm always outputs $\bot$. 

We show that $\Pr[X \geq \tau] \leq 1/8$. In other words, we show  the
probability that the algorithm returns $\bot$ at
line~\earlybotline~ is at most $1/8$. 
We first show $E[X] \leq 16 \nu k/(\gap+1)$.   For $i \in \{1,\ldots,
k+\gap\}$, let $X_i$ be a random variable denoting  the number of edges that are marked at the
$i$th iteration of the repeat-loop. Then $E[X_i] = 8\nu/(\gap+1)$. Let
$X = \sum_{i =1}^{k+\gap}X_i$. By linearity of expectation,   we have 
\begin{align} \label{eq:expected-read-edges}
E[X] = \sum_{i=1}^{k+\gap}E[X_i] = (k+\gap)(8\nu)/(\gap+1)  \leq 16\nu k/(\gap+1).
\end{align}

It remains to show that $\Pr[X <  \tau
] \geq 1-1/8$. By \Cref{eq:expected-read-edges}, $E[X] \leq 16\nu
k/(\gap+1)$, so $8\cdot E[X] \leq 128\nu k/(\gap+1) = \tau.$
By Markov's inequality,  $\Pr[X < 8\cdot E[X]] \geq 1-1/8$. Therefore,
$\Pr[X < \tau] \geq 1-1/8$, and  we have $\Pr[X \geq \tau] \leq 1/8$.

Next, we show that $\Pr[R_{\bot} \mid X < \tau] \leq 1/8$. In other
words, given that $\bot$ is not returned at line~\earlybotline,
we compute the probability that $\bot$ is returned at the last
line. Suppose that no set $S'$ is returned before the last
iteration. We will show that $\Pr[\bar{R_{\bot}} \mid X < \tau] \geq 1-1/8$. It suffices to show that the number of stops at edge
$(a,b)$ where $a \in S$ (i.e., an edge $(a,b) \in E(S,V)$) is at most $\gap$ after $k+\gap$ iterations with probability at least $1-1/8$. If this is
true, and such an event happens,
then there are at least $k-1$ iterations where we stop at edges
 in the set $E(V-S,V)$, and we do not stop at any edge in the set
 $E(S,V)$ %
 in the final iteration. \Cref{lem:reverse_cutsize_new}
implies that $|E(S,V-S)| = 0$ at the beginning of the last iteration.  Therefore, the the BFS tree $T$ will
return the set $V(T)$ since we do not stop at any edge in $E(S,V)$ in
the final iteration.  

Let $Z$ be a random variable denoting the number of stops at
an edge in $E(S,V)$ after $k+\gap$ iterations.  We show
that $\Pr[Z \leq \gap] \geq 1-1/8$. Since we never unmark edges, and
we stop with probability $(\gap+1)/(8\nu)$ for each first visited edge
by  linearity of expectation, we have $E[Z] \leq \sum_{e \in E(S,V)}
(\gap+1)/(8\nu) \leq  \nu (\gap+1)/(8\nu)  = (\gap+1)/8$.   Therefore,
by Markov's inequality, we have $P(Z < 8E[Z]) \geq 1-1/8$. Note that
$Z < 8E[Z] \leq \gap$. Hence, $P(Z \leq \gap) \geq 1-1/8$ as $\gap$ is
an integer.  
\end{proof}

\section{Local vertex connectivity}

\label{sec:localVC}

In this section, we show the vertex cut variant of the local algorithms
from \Cref{sec:localEC}.

\begin{theorem}
\label{thm:localVC_gap}
There exists the following randomized algorithm. It takes as inputs, 
\begin{itemize}[noitemsep,nolistsep]
\item a pointer to  an adjacency list representing  an $n$-vertex $m$-edge graph
$G=(V,E)$, 
\item a seed vertex $x \in V$, 
\item a volume parameter (positive integer) $\nu$, 
\item a cut-size parameter (positive integer) $k$, and 
\item a slack parameter (non-negative integer) $\gap$, where
\end{itemize}
\begin{align} \label{eq:precon_vc}
k\geq 1, \quad \nu > k, \quad  \gap \leq k \quad \mbox{ and } \quad \nu < m(\gap+1)/(8320k).
\end{align}
It accesses (i.e. makes queries for) $O(\nu k /(\gap+1))$ edges and runs in $O(\nu k^2/ (\gap+1))$ time. It then outputs 
in the following manner. 
\begin{itemize}[noitemsep,nolistsep]
\item If there exists a separation triple $(L',S',R')$ such that $L' \ni x, \vol^{\out}(L')
  \leq \nu, $ and $|S'| < k$, then with probability at least $3/4$, the algorithm
  outputs a vertex-cut of size at most $k+\gap$ (otherwise it outputs $\bot$).  
\item Otherwise (i.e., no such separation triple $(L',S',R')$ exists), the algorithm outputs either
  a vertex-cut of size at most $k+\gap$ or $\bot$. 
\end{itemize}
\end{theorem}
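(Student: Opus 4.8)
The plan is to reduce LocalVC to the LocalEC algorithm of \Cref{thm:localEC_approx_new} through the standard vertex-splitting gadget. Build the split graph $G'=(V',E')$: replace each $v\in V$ by an in-copy $v_{\textin}$ and an out-copy $v_{\textout}$ joined by an internal edge $v_{\textin}\to v_{\textout}$, and replace each $(u,v)\in E$ by $u_{\textout}\to v_{\textin}$; then $|V'|=2n$, $|E'|=m+n$, $\deg^{\out}_{G'}(v_{\textin})=1$, $\deg^{\out}_{G'}(v_{\textout})=\deg^{\out}_G(v)$, and $G'$ is queried lazily at the cost of querying $G$ (the out-list of $v_{\textout}$ is that of $v$; the lone out-edge of $v_{\textin}$ is free). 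Run \Cref{alg:local_ec_new} on $G'$ from the seed $x_{\textout}$ with the same $k$ and $\gap$ and with a scaled volume parameter $\nu':=3\nu$; if it returns $\bot$, return $\bot$, and otherwise post-process its output as described below.

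\emph{Completeness.} Suppose a separation triple $(L',S',R')$ with $x\in L'$, $\vol^{\out}_G(L')\le\nu$, $|S'|<k$ exists. First replace $L'$ by the set $L''\subseteq L'$ of vertices reachable from $x$ in $G$ without passing through $S'$; since $E(L',R')=\emptyset$ such walks stay in $L'$, so $(L'',S',V\setminus(L''\cup S'))$ is again a good triple with $N^{\out}_G(L'')\subseteq L''\cup S'$, and a spanning out-arborescence of $L''$ rooted at $x$ yields $|L''|\le\vol^{\out}_G(L'')+1\le\nu+1$. Set $S_0:=\{v_{\textin},v_{\textout}:v\in L''\}\cup\{v_{\textin}:v\in S'\cap N^{\out}_G(L'')\}$. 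Then $x_{\textout}\in S_0$, only the internal edges of $S'\cap N^{\out}_G(L'')$ leave $S_0$, so $|E_{G'}(S_0,V'\setminus S_0)|\le|S'|<k$, and $\vol^{\out}_{G'}(S_0)=\vol^{\out}_G(L'')+|L''|+|S'\cap N^{\out}_G(L'')|\le 2\nu+k+1\le 3\nu=\nu'$ (using $\nu>k$). One verifies from \eqref{eq:precon_vc}, with $m'=m+n\ge m$ and the large constant $8320$ absorbing the factor-$3$ blow-up, that $\nu'>k$ and $\nu'<m'(\gap+1)/(130k)$, i.e.\ \eqref{eq:precon_ec} holds for $(G',x_{\textout},\nu',k,\gap)$. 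Hence \Cref{thm:localEC_approx_new} applies: with probability $\ge3/4$ the run returns $\hat S$ with $x_{\textout}\in\hat S$, $\emptyset\neq\hat S\subsetneq V'$, and $|E_{G'}(\hat S,V'\setminus\hat S)|<k+\gap$; moreover, exactly as in the proof of \Cref{thm:localEC_approx_new}, conditioned on this success the path reversals drive the cut across $S_0$ to $0$, forcing $\hat S\subseteq S_0$.

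\emph{Extraction and soundness.} Given any returned $\hat S$, set $L:=\{v\in V:v_{\textout}\in\hat S\}$ and $S_{\mathrm{cut}}:=N^{\out}_G(L)\setminus L$, obtained by scanning the out-lists of the out-copies in $\hat S$ ($O(\vol^{\out}_{G'}(\hat S))$ work). I claim $|S_{\mathrm{cut}}|<k+\gap$: each $w\in S_{\mathrm{cut}}$ has an edge $u_{\textout}\to w_{\textin}$ with $u\in L$ (so $u_{\textout}\in\hat S$); if $w_{\textin}\notin\hat S$ this edge lies in $E_{G'}(\hat S,V'\setminus\hat S)$, and if $w_{\textin}\in\hat S$ then, since $w\notin L$ means $w_{\textout}\notin\hat S$, the internal edge $w_{\textin}\to w_{\textout}$ lies in $E_{G'}(\hat S,V'\setminus\hat S)$; distinct $w$ give distinct edges, so $|S_{\mathrm{cut}}|\le|E_{G'}(\hat S,V'\setminus\hat S)|<k+\gap$. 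If $V\setminus(L\cup S_{\mathrm{cut}})\neq\emptyset$ we output $S_{\mathrm{cut}}$, otherwise $\bot$; since $x_{\textout}\in\hat S$ we always have $x\in L\neq\emptyset$, and $N^{\out}_G(L)\subseteq L\cup S_{\mathrm{cut}}$, so a non-$\bot$ output is a genuine vertex cut of size $<k+\gap$ separating $L$ from $V\setminus(L\cup S_{\mathrm{cut}})$; this meets the requirement in both bullets. For the first bullet, in the good case $\hat S\subseteq S_0$ gives $L\subseteq L''$, hence $N^{\out}_G(L)\subseteq L''\cup S'$ and $L\cup S_{\mathrm{cut}}\subseteq L''\cup S'$, so $V\setminus(L\cup S_{\mathrm{cut}})\supseteq V\setminus(L''\cup S')\neq\emptyset$ and $S_{\mathrm{cut}}$ is output with probability $\ge3/4$. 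The running time and query count are those of \Cref{alg:local_ec_new} with $\nu'$, namely $O(\nu k^2/(\gap+1))$ and $O(\nu k/(\gap+1))$, plus the $O(\vol^{\out}_{G'}(\hat S))=O(\nu k/(\gap+1))$ post-processing.

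\emph{Where the difficulty is.} The subtle half is soundness: \Cref{alg:local_ec_new} on $G'$ only certifies a small \emph{edge} cut, which may cross original edges $u_{\textout}\to v_{\textin}$ not attached to a single vertex, yet we must always distill a genuine vertex cut of $G$ of size $<k+\gap$. The charging argument above sidesteps an explicit uncrossing by mapping $S_{\mathrm{cut}}=N^{\out}_G(L)\setminus L$ injectively into the $G'$-cut; the care is in guaranteeing $(L,S_{\mathrm{cut}},V\setminus(L\cup S_{\mathrm{cut}}))$ is an honest separation triple — both non-trivial sides — which is exactly why in the completeness step we first pass to the reachability-pruned $L''$ (so $L\subseteq L''$ and the remaining side still contains the nonempty $R'$). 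The leftover bookkeeping — checking that the constant $8320$ is generous enough for $G'$ to satisfy \eqref{eq:precon_ec} with room for the factor-$3$ volume increase — is routine.
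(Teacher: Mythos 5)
Your overall route — split graph, run \Cref{alg:local_ec_new} with a scaled volume, extract a vertex cut from the returned edge cut — is the same as the paper's (Section 4.2 via Lemmas \ref{lem:completeness} and \ref{lem:soundness}). The soundness half (the injection from $S_{\mathrm{cut}}$ into $E_{G'}(\hat S,V'\setminus\hat S)$) is fine, and your pruning to $L''$ with the arborescence bound is a clean way to bound $\vol^{\out}_{G'}(S_0)$. The problem is the sentence ``conditioned on this success the path reversals drive the cut across $S_0$ to $0$, forcing $\hat S\subseteq S_0$.'' This containment is \emph{not} what the proof of \Cref{thm:localEC_approx_new} gives you. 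The LocalEC success event only guarantees that $\bot$ is avoided; the set $\hat S$ may be returned at some \emph{earlier} iteration of the repeat-loop, when the BFS happens to get stuck at a different sink set that has zero out-cut in the current (partially reversed) graph but is \emph{not} contained in $S_0$. The paper's proof of the second correctness lemma explicitly conditions on ``no set is returned before the last iteration'' when invoking the $Z\le\gap$ argument, precisely because an early return is another (perfectly acceptable) way of avoiding $\bot$; it never claims $V(T)\subseteq S_0$. Without $\hat S\subseteq S_0$, your argument that $V\setminus(L\cup S_{\mathrm{cut}})\neq\emptyset$ collapses, and the fallback of ``output $\bot$ when $L\cup S_{\mathrm{cut}}=V$'' could in principle eat into the $3/4$ success probability.

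The paper closes exactly this gap with the second part of \Cref{lem:soundness}, which shows that \emph{any} returned set with $\vol^{\out}_{G'}(\hat S)\le m'/32$ and cut size $\le n/2$ necessarily yields a separation triple with $R\neq\emptyset$; that volume bound is supplied unconditionally by the first correctness lemma of \Cref{thm:localEC_approx_new} (which bounds $\vol^{\out}$ of \emph{whatever} set is returned, early or late). To repair your proof you must add this step, and your constant $\nu'=3\nu$ is then too loose: $130\cdot 3\nu k/(\gap+1) < 390m/8320 > m/32$ in general, so the $m'/32$ precondition of \Cref{lem:soundness} can fail. Using $\nu'=2\nu$ as in the paper (which your own completeness calculation also supports, since $\vol^{\out}_{G'}(S_0)\le 2\vol^{\out}_G(L'')\le 2\nu$ once you invoke the min out-degree $\ge 1$ convention) gives $260\nu k/(\gap+1)<m/32\le m'/32$ and the argument goes through.
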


We obtain exact and $(1+\epsilon)$-approximate local algorithms for
\Cref{thm:localVC_gap} by setting $\gap = 0$ and $\gap =  \lfloor
\epsilon k \rfloor,$ respectively. 

\begin{corollary}
\label{cor:localVC_exact}
There exists the following randomized algorithm. It takes as the same inputs as
in \Cref{thm:localVC_gap} where $\gap$ is set to be equal to $0$.   
It accesses (i.e. makes queries for) $O(\nu k)$ edges and
runs in $O(\nu k^2 )$ time.  It then outputs 
in the following manner.  
\begin{itemize}[noitemsep,nolistsep]
\item  If there exists a separation triple $(L',S',R')$ such that $L'
  \ni x, \vol^{\out}(L') \leq \nu, $ and $|S'| < k$, then with probability at least $3/4$, the algorithm
  outputs a vertex-cut of size at most $k$ (otherwise it outputs $\bot$).  
\item Otherwise (i.e., no such $(L',S',R')$ exists), the algorithm
  outputs either  a vertex-cut of size at most $ k $  or $\bot$. 
\end{itemize}
\end{corollary}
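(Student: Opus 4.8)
The plan is to obtain \Cref{cor:localVC_exact} as the special case $\gap = 0$ of \Cref{thm:localVC_gap}, so that the only task is to check that each clause of the theorem degenerates to the claimed statement under this substitution. First I would verify the hypotheses: with $\gap = 0$ the preconditions \eqref{eq:precon_vc} become $k \geq 1$, $\nu > k$, $\gap = 0 \leq k$ (automatic), and $\nu < m/(8320 k)$, which is exactly the input promise of \Cref{cor:localVC_exact}. Next, the resource bounds $O(\nu k /(\gap+1))$ edge accesses and $O(\nu k^2/(\gap+1))$ time collapse to $O(\nu k)$ and $O(\nu k^2)$, matching the corollary.

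Then I would translate the two output cases. In the first case the theorem promises, whenever a separation triple $(L',S',R')$ with $x \in L'$, $\vol^{\out}(L') \leq \nu$, and $|S'| < k$ exists, a vertex cut of size at most $k + \gap$ with probability at least $3/4$ (and ``$\bot$'' otherwise); setting $\gap = 0$ this is a vertex cut of size at most $k$, as required. The second case (no such triple) transfers verbatim, since ``a vertex cut of size at most $k + \gap$ or $\bot$'' becomes ``a vertex cut of size at most $k$ or $\bot$''. This completes the deduction, and at the level of the corollary there is no genuine obstacle — all the content lives in \Cref{thm:localVC_gap}.

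If instead one had to establish \Cref{thm:localVC_gap} from scratch (which is where any real difficulty sits), the natural route is the standard vertex-splitting reduction: replace each vertex $v$ by an arc from an ``in-copy'' to an ``out-copy'', route incoming edges into the in-copy and outgoing edges out of the out-copy, so that vertex cuts in $G$ correspond to edge cuts in the split graph, and then invoke \Cref{thm:localEC_approx_new} on the split graph with suitably rescaled volume and seed. The care required there is purely bookkeeping — the split graph enlarges the edge count and shifts volumes by roughly $|L|$, which is what forces the weaker density constant ($8320$ in place of $130$) — but none of this is needed to read off \Cref{cor:localVC_exact}, which is a pure instantiation.
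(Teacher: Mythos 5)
Your proposal is correct and matches the paper's approach: the corollary is obtained purely by instantiating \Cref{thm:localVC_gap} with $\gap=0$ and reading off the preconditions, resource bounds, and output guarantees. Your closing remarks about the vertex-splitting reduction also accurately reflect how the theorem itself is established in \Cref{sec:reduc,sec:reduc_apply}.
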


\begin{corollary}
\label{cor:localVC_approx}
There exists the following randomized algorithm. It takes as the same inputs as
in \Cref{thm:localVC_gap} with additional parameter $\epsilon
\in (0,1]$ where $\gap$ is set to be equal to $ \lfloor \epsilon k \rfloor.$  
It accesses (i.e. makes queries for) $O(\nu /\epsilon)$ edges and
runs in $O(\nu k/ \epsilon)$ time.  It then outputs 
in the following manner.  
\begin{itemize}[noitemsep,nolistsep]
\item  If there exists a separation triple $(L',S',R')$ such that $L' \ni x, \vol^{\out}(L')
  \leq \nu, $ and $|S'| <  k$, then with probability at least $3/4$, the algorithm
  outputs a vertex-cut of size at most $\lfloor (1+\epsilon) k \rfloor$ (otherwise it outputs $\bot$).  
\item Otherwise (i.e., no such $(L',S',R')$ exists), the algorithm
  outputs either  a vertex-cut of size at most $\lfloor (1+\epsilon) k
  \rfloor$  or $\bot$. 
\end{itemize}
\end{corollary}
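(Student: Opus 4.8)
The plan is to obtain \Cref{cor:localVC_approx} as a direct specialization of \Cref{thm:localVC_gap}, exactly paralleling the way \Cref{cor:localEC_approx_new} is derived from \Cref{thm:localEC_approx_new}. Concretely, one runs the algorithm of \Cref{thm:localVC_gap} with the slack parameter set to $\gap = \lfloor \epsilon k \rfloor$. First I would check admissibility of this choice: since $\epsilon \in (0,1]$ we have $\gap = \lfloor \epsilon k \rfloor \le k$, so the hypothesis $\gap \le k$ of \Cref{thm:localVC_gap} is met, and the remaining hypotheses $k \ge 1$, $\nu > k$, and $\nu < m(\gap+1)/(8320k)$ are inherited verbatim from the input assumptions of the corollary (the corollary takes the same inputs as \Cref{thm:localVC_gap}, now with $\gap$ instantiated). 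Hence \Cref{thm:localVC_gap} applies with this value of $\gap$, and its output dichotomy — returning a vertex cut with probability at least $3/4$ when a suitable separation triple exists, and otherwise either such a cut or $\bot$ — carries over unchanged.

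Second, I would translate the size guarantee. \Cref{thm:localVC_gap} outputs (when it does not output $\bot$) a vertex cut of size at most $k + \gap$, and using that $k$ is an integer together with the elementary inequality $\lfloor a\rfloor + \lfloor b\rfloor \le \lfloor a+b\rfloor$ we get
\[
k + \gap = \lfloor k \rfloor + \lfloor \epsilon k \rfloor \le \lfloor k + \epsilon k \rfloor = \lfloor (1+\epsilon)k \rfloor,
\]
which is precisely \Cref{eq:gap-to-approx}. This shows the returned cut has size at most $\lfloor (1+\epsilon)k\rfloor$, matching both bullets of \Cref{cor:localVC_approx}. Third, I would verify the resource bounds: since $\lfloor \epsilon k\rfloor + 1 \ge \epsilon k$ for every real number, we have $\gap+1 \ge \epsilon k$, so the $O(\nu k/(\gap+1))$ edge-query bound of \Cref{thm:localVC_gap} becomes $O(\nu k/(\epsilon k)) = O(\nu/\epsilon)$ and the $O(\nu k^2/(\gap+1))$ time bound becomes $O(\nu k^2/(\epsilon k)) = O(\nu k/\epsilon)$, as claimed.

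There is essentially no real obstacle: all the substance lives in \Cref{thm:localVC_gap} (the vertex analogue of \Cref{alg:local_ec_new} and its analysis via \Cref{lem:reverse_cutsize_new,lem:running-time-main}), and this corollary is a one-line instantiation. The only point that merits a moment's care is the edge case $\epsilon k < 1$, where $\gap = 0$; even then the floor inequality above still yields $\gap+1 = 1 \ge \epsilon k$, so the time and query bounds remain valid, and the size bound $\lfloor (1+\epsilon)k\rfloor$ holds trivially since the algorithm returns an exact cut of size less than $k$.
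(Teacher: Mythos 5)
Your proof is correct and takes essentially the same route as the paper: the paper's proof is the one-liner ``The results follow from \Cref{thm:localVC_gap} where we set $\gap = \lfloor \epsilon k \rfloor$, and \Cref{eq:gap-to-approx}.'' You have simply spelled out the routine details (admissibility of $\gap$, the inequality $\gap+1\ge\epsilon k$ that converts the resource bounds, and the floor inequality) that the paper leaves implicit.
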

\begin{proof}
The results follow from \Cref{thm:localVC_gap} where we set
$\gap = \lfloor \epsilon k \rfloor$, and \Cref{eq:gap-to-approx}.
\end{proof}

To prove \Cref{thm:localVC_gap}, in \Cref{sec:reduc} we first
reduce the problem to the edge version of the problem using the well-known
reduction (e.g. \cite{Even75,HenzingerRG00}) and then in \Cref{sec:reduc_apply}
we plug the algorithm from \Cref{thm:localEC_approx_new} into the reduction.

\subsection{Reducing from vertex to edge connectivity}

\label{sec:reduc}

Given a directed $n$-vertex $m$-edge graph $G=(V,E)$ and a vertex
$x\in V$, let $G'=(V',E')$ be an $n'$-vertex $m'$-edge graph defined
as follows. We call $G'$ the \emph{split graph} w.r.t. $x$. For each $v\in V-\{x\}$,
we add vertices $v_{\inn}$ and $v_{\out}$ to $V'$ and a directed
edge $(v_{\inn},v_{\out})$ to $E'$. We also add $x$
to $V'$ and we denote $x_{\inn}=x_{\out}=x$ for convenience. For
each edge $(u,v)\in E$, we add $(u_{\out},v_{\inn})$ to $E'$. Suppose,
for convenience, that the minimum out-degree of vertices in $G$ is
$1$. The following two lemmas draw connections between $G$ and $G'$.

\begin{lemma}
\label{lem:completeness}Let $(L,S,R)$ be a separation triple in
$G$ where $S=N_{G}^{\out}(L)$. Let $L'=\{v_{\inn},v_{\out}\mid v\in L\}\cup\{v_{\inn}\mid v\in S\}$
be a set of vertices in $G'$. Then $|E_{G'}(L',V'-L')|=|S|$ and
$\vol_{G}^{\out}(L)\le\vol_{G'}^{\out}(L')\le2\vol_{G}^{\out}(L)$.
\end{lemma}

\begin{proof}
As $S=N_{G}^{\out}(L)$, we have $|E_{G'}(L',V'-L')|=|\{(v_{\inn},v_{\out})\mid v\in S\}|=|S|$.
Also, $\vol_{G'}^{\out}(L')=\vol_{G}^{\out}(L)+|L|+|S|\le2\vol_{G}^{\out}(L)$
because every vertex in $G$ has out-degree at least $1$ and $S=N_{G}^{\out}(L)$.
\end{proof}
\begin{figure}
	\begin{centering}
		\includegraphics[viewport=180bp 200bp 550bp 470bp,clip,width=0.35\paperwidth]{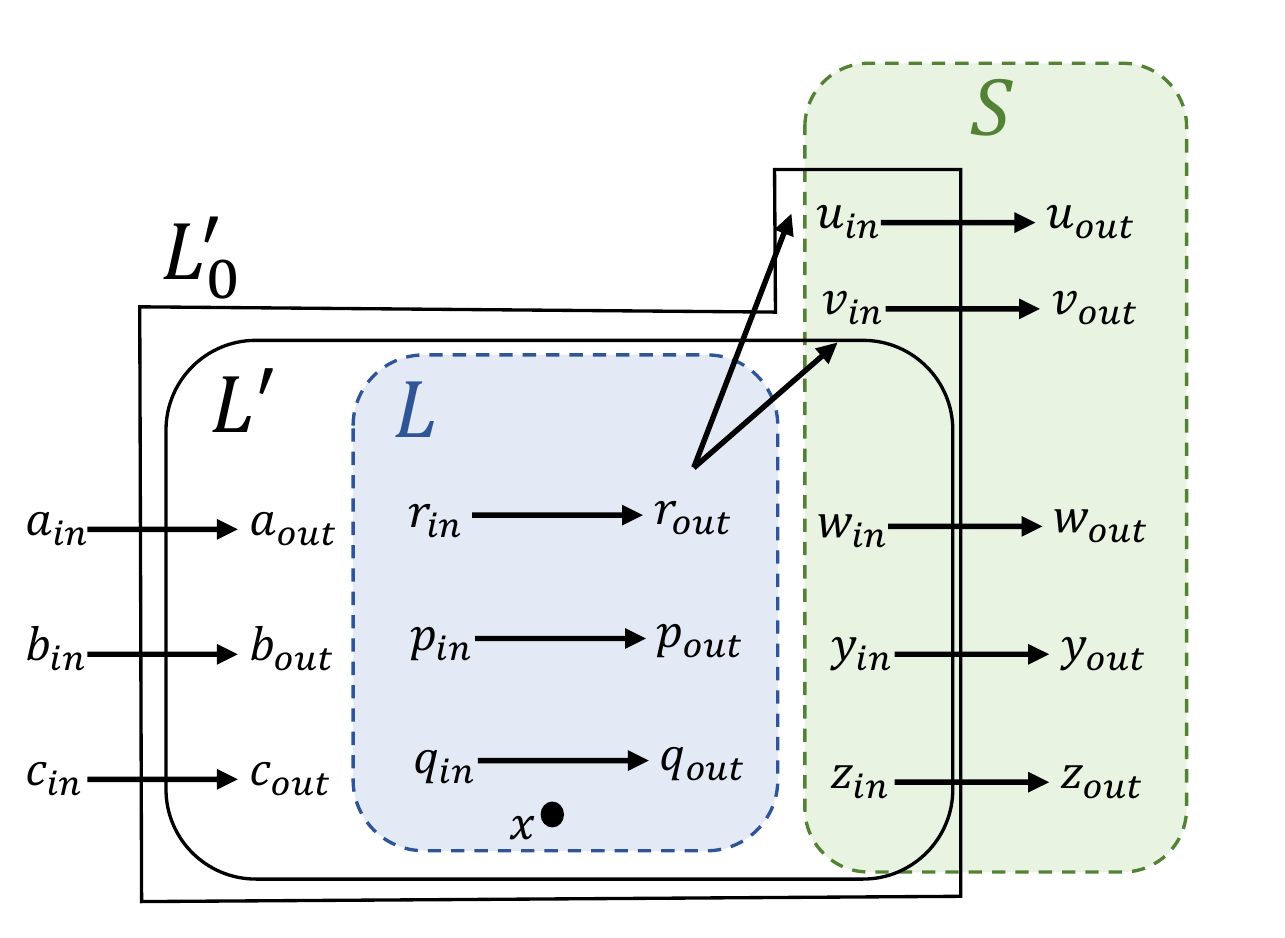}
		\par\end{centering}
	\caption{Construction of a separation triple $(L,S,R)$ in $G$ from an edge cut $L'$
		in $G'$. Most edges are omitted. From this example, $L'_{0}=L'\cup\{u_{\protect\inn},v_{\protect\inn}$\},
		$L=\{r,p,q,x\}$, $S=\{u,v,w,y,z\}$.\label{fig:reduc convert back}}
\end{figure}

\begin{lemma}
\label{lem:soundness}Let $L'\ni x$ be a set of vertices of $G'$.
Then, there is a set of vertices $L$ in $G$ such that $\vol_{G}^{\out}(L)\le2\vol_{G'}^{\out}(L')$
and $|S|\le|E_{G'}(L',V'-L')|$ where $S=N_{G}(L)$. Given $L'$,
$L$ can be constructed in $O(\vol_{G'}^{\out}(L'))$ time.

Let $R=V-(L\cup S)$. We have $R\neq\emptyset$, i.e. $(L,S,R)$ is
a separation triple if $\vol_{G'}^{\out}(L')\le m'/32$ and $|E_{G'}(L',V'-L')|\le n/2$.

\end{lemma}

\begin{proof}
First of all, note that if there is $v$ where $v_{\out}\in L'$ and
$\deg_{G'}^{\out}(v_{\out})\le|E_{G'}(L',V'-L')|$, then we can return
$L=\{v\}$ and $S=N_{G}^{\out}(\{v\})$ and we are done. So from now,
we assume that $\deg_{G'}^{\out}(v_{\out})>|E_{G'}(L',V'-L')|$.

By the structure of $G'$, observe that there are sets $S_{1},S_{2}\subseteq V$
be such that 
\[
E_{G'}(L',V'-L')=\{(v_{\inn},v_{\out})\mid v\in S_{1}\}\cup\{(u_{\out},v_{\inn})\mid v\in S_{2}\}
\]
Let $L'_{0}=L'\cup\{v_{\inn}\mid v\in S_{2}\}$. See \Cref{fig:reduc
	convert back} for illustration. So there is a set $S\subset V$ where
\[
E_{G'}(L'_{0},V'-L'_{0})=\{(v_{\inn},v_{\out})\mid v\in S\}.
\]
We have $|S|=|E_{G'}(L'_{0},V'-L'_{0})|\le|E_{G'}(L',V'-L')|$ because
for each $v\in S_{2}$, $\deg^{\out}(v_{\inn})\le1\le\deg^{\inn}(v_{\inn})$.
Also, $\vol_{G'}^{\out}(L'_{0})\le\vol_{G'}^{\out}(L')+|E_{G'}(L',V'-L')|\le2\vol_{G'}^{\out}(L')$.

Let $L=\{v\mid v_{\inn},v_{\out}\in L'_{0}\}$. Note that $L\cap S=\emptyset$.
See \Cref{fig:reduc convert back} for illustration. Observe that
$x\in L$ because $x_{\out}=x_{\inn}$. Moreover, $N_{G}(L)=S$ and
$\vol_{G}^{\out}(L)\le\vol_{G'}^{\out}(L'_{0})\le2\vol_{G'}^{\out}(L')$.
$L$ be can constructed in time $O(|\{v_{\out}\in L'\}|)=O(\vol_{G'}^{\out}(L'))$
because the minimum out-degree of vertices in $G$ is $1$.

For the second statement, observe that $R=V-(L\cup S)=\{v\mid v_{\inn}\notin L'_{0}\}$.
Let $V'_{\inn}=\{v_{\inn}\in V'\}\cup\{x\}$ and $V'_{\out}=\{v_{\out}\in V'\}-\{x\}$.
Let $k'=|E_{G'}(L'_{0},V'-L'_{0})|$. Suppose for contradiction that
$R=\emptyset$. We claim that 
\[
|V'-L'_{0}|=|V'_{\out}-L'_{0}|=k'.
\]
This is because $L'_{0}\supseteq V'_{\inn}$, $V'-L'_{0}\subseteq V'_{\out}$,
and $E_{G'}(L'_{0},V'-L'_{0})$ only contains edges of the form $(v_{\inn},v_{\out})$.
Now, there are two cases. If $m'\ge4n'k'$, then we have 
\begin{align*}
m' & =\vol_{G'}^{\out}(L'_{0})+\vol_{G'}^{\out}(V'-L'_{0})\\
& \le2\vol_{G'}^{\out}(L')+n'|V'_{\out}-L'_{0}|\\
& \le2\cdot m'/32+n'k'\\
& \le m'/16+m'/4<m'
\end{align*}
which is a contradiction. Otherwise, we have $m'<4n'k'$. Note that
$n'<2n$ by the construction of $G'$ and so $m'<8nk'$. Hence, we
have 
\[
\vol_{G'}^{\out}(L'_{0})\ge|L'_{0}\cap V'_{\out}|k'\ge(n-k')k'\ge nk'/2>m'/16
\]
which contradicts $\vol_{G'}^{\out}(L'_{0})\le2\vol_{G'}^{\out}(L')\le2\cdot m'/32$. 
\end{proof}

\subsection{Proof of \Cref{thm:localVC_gap}}

\label{sec:reduc_apply}
Given an $n$-vertex $m$-edge $G=(V,E)$
represented as adjacency lists, a vertex $x\in V$ and parameters
$\nu,k,\gap$ from \Cref{thm:localVC_gap} where $\nu\le (\gap+1) m/(8320k)$
and $k\le n/4$, we will work on the split graph $G'$ with $n'$-vertex
$m'$-edge as described in \Cref{sec:reduc}. The adjacency list
of $G'$ can be created ``on the fly''. Let $\localEC(x',\nu',k',\gap')$
denote the algorithm from \Cref{thm:localEC_approx_new} with parameters
$x',\nu',k',\gap'$. We run $\localEC(x,2\nu,k,\gap)$ on
$G'=(V',E')$ in time $O(\nu k^2/(\gap+1))$. Note that $2\nu\le
(\gap+1) m/(8k)\le (\gap+1) m'/(8k)$
as required by \Cref{thm:localEC_approx_new}.

We show that if there exists a separation triple $(L,S,R)$ in $G$
where $L \ni x, |S| < k, $ and $ \vol^{\out}_G(L) \leq \nu$, then
$\localEC(x,2\nu,k,\gap)$ outputs $\bot$ with probability at most
$1/4$.  By \Cref{lem:completeness}, there exists $L'$ in $G'$ such
that $L' \ni x, |E_{G'}(L',V-L')| < k, $ and $ \vol_{G'}(L') \leq 2
\vol_G(L) \leq 2\nu$. Therefore, by \Cref{thm:localEC_approx_new},
$\localEC(x,2\nu,k,\gap)$ returns $\bot$ with probability at most
$1/4$.

If, in $G'$, $\localEC(x,2\nu,k,\gap)$ returns $L'$, then by
\Cref{thm:localEC_approx_new} we have  $L'\ni x,
|E_{G'}(L',V'-L')|< k + \gap$ and
$\vol_{G'}^{\out}(L')\le 260\nu k /\gap$.  It remains to show that  we can construct $L
\subset V$ in $G$ such that $N^{\out}_G(L)$ is a vertex-cut, and
$|N^{\out}_G(L)| < k+\gap$. By \Cref{lem:soundness},
we can obtain in $O(\nu k/(\gap+1))$ time and two sets $L$ and $S=N^{\out}(L)$
where $|S|< k+\gap$. Let $R=V-L\cup S$. As $\vol_{G'}^{\out}(L')\le
260\nu k/(\gap + 1) \stackrel{(\ref{eq:precon_vc})} \le m'/32$
and $k+\gap \le 2k \leq n/2$, we have that
$(L,S,R)$ is a separation triple by \Cref{lem:soundness}. That is, $S=
N^{\out}_G(L)$ is a vertex cut. %

\section{Vertex connectivity}\label{sec:vertex_con_global}

In this section, we show the first near-linear time algorithm for
checking of $k$-connectivity for any $k=\tilde{O}(1)$ in both undirected
and directed graphs. 
\begin{theorem}
\label{thm:VC_undir}There is a randomized (Monte Carlo) algorithm
that takes as input an undirected graph $G$ and a cut-size parameter
$k$ and an accuracy parameter $\epsilon\in(0,1]$, and in time $\tilde{O}(m+nk^{2}/\epsilon)$
either outputs a vertex cut of size less than $\left\lfloor (1+\epsilon)k\right\rfloor $
or declares that $G$ is $k$-connected w.h.p. By setting $\epsilon<1/k$,
the same algorithm decides (exact) $k$-vertex-connectivity of $G$ in $\tilde{O}(m+nk^{3})$
time.
\end{theorem}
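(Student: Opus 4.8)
The plan is to derive the result by feeding our local algorithm---\Cref{cor:localVC_approx} with slack $\gap=\lfloor\epsilon k\rfloor$---into the global-to-local reduction of Nanongkai~et~al.~\cite{NanongkaiSY19}. First I would preprocess. We may assume $k<n/4$ (the regime $n=O(k)$ is handled directly within the claimed time). If some vertex has degree $<k$, its neighbourhood is a vertex cut of size $<k<\lfloor(1+\epsilon)k\rfloor$ and we return it; otherwise every degree is $\ge k$. Using the sparsification of Nagamochi and Ibaraki~\cite{NagamochiI92} I pass in $O(m)$ time to a subgraph $H\subseteq G$ with $O(nk)$ edges that is $k$-connected iff $G$ is and that preserves all vertex cuts of size $<k$; henceforth $m=\Theta(nk)$, the minimum degree is $\ge k$, and the $O(m)$ term of the bound absorbs this step and reading the input.

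\textbf{Unbalanced cuts.} Suppose $\kappa_G<k$ and fix a separation triple $(L,S,R)$ with $|S|=\kappa_G$; by undirectedness assume $\vol(L)\le\vol(R)$, so (using the degree preprocessing) $2k\le\vol(L)\le m$. I would try every dyadic value $\nu$ with $k<\nu\le m(\gap+1)/(8320k)$. For the $\nu$ with $\vol(L)\le\nu<2\vol(L)$---which lies in this range whenever $\vol(L)<m(\gap+1)/(16640k)$---the triple $(L,S,R)$ meets the hypothesis of \Cref{cor:localVC_approx} for every seed $x\in L$. A uniformly random edge-endpoint (pick a uniform edge, then a uniform one of its two endpoints) lies in $L$ with probability $\vol(L)/(2m)\ge\nu/(4m)$, so among $\Theta((m/\nu)\log n)$ independent samples, $\Omega(\log n)$ lie in $L$ w.h.p.; running \Cref{cor:localVC_approx} once from each of the $\Theta((m/\nu)\log n)$ sampled vertices, at least one run started in $L$ returns a vertex cut w.h.p., and by \eqref{eq:gap-to-approx} that cut has size $<k+\gap\le\lfloor(1+\epsilon)k\rfloor$. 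The cost for one $\nu$ is $\Theta((m/\nu)\log n)\cdot O(\nu k/\epsilon)=\tilde O(mk/\epsilon)$, and over the $O(\log m)$ values of $\nu$ this totals $\tilde O(mk/\epsilon)=\tilde O(nk^2/\epsilon)$.

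\textbf{Balanced cuts and wrap-up.} The only remaining case is $\vol(L)\ge m(\gap+1)/(16640k)$, so both $\vol(L),\vol(R)=\Omega(m(\gap+1)/k)$ and \Cref{cor:localVC_approx} cannot be used (its precondition would force $\nu\ge\vol(L)$, contradicting $\nu<m(\gap+1)/(8320k)$). Here I would invoke the balanced-cut routine of the framework of~\cite{NanongkaiSY19}: since a random edge-endpoint now lands in $L$, and in $R$, with probability $\Omega((\gap+1)/k)$, one identifies a cut of size $<\lfloor(1+\epsilon)k\rfloor$ using $\tilde O(k/(\gap+1))$ bounded-depth max-flow computations (at most $\lfloor(1+\epsilon)k\rfloor$ augmentations each, hence $\tilde O(nk^2)$ time apiece on the $O(nk)$-edge graph), in $\tilde O(nk^3/(\gap+1))=\tilde O(nk^2/\epsilon)$ total time. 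If no step produces a cut, we declare $G$ is $k$-connected. Any returned cut is genuine---guaranteed by the preconditions of \Cref{cor:localVC_approx} (in particular $R\ne\emptyset$) or by the flow computation---and of size $<\lfloor(1+\epsilon)k\rfloor$; and if $\kappa_G<k$, its optimal triple falls into exactly one of the two regimes, whose step outputs a cut w.h.p., so the declaration is correct w.h.p. The running time is $\tilde O(m+nk^2/\epsilon)$; taking $\epsilon=1/(k+1)$ forces $\gap=0$, turns \Cref{cor:localVC_approx} into the exact \Cref{cor:localVC_exact} (per-call cost $O(\nu k^2)$), and yields an exact $k$-vertex-connectivity test in $\tilde O(m+nk^3)$.

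The easy part is the unbalanced regime, which is an essentially mechanical substitution of our new local algorithm into the dyadic volume-guessing scheme; the one new ingredient is sampling \emph{edge-endpoints} rather than vertices, so that $\tilde O(m/\nu)$ samples suffice even when $L$ consists of a few high-degree vertices. The delicate part is the interface with the balanced regime: the volume threshold must be chosen simultaneously below the precondition $\nu<m(\gap+1)/(8320k)$ of \Cref{thm:localVC_gap} and high enough that, above it, the minority side has volume $\Omega(m(\gap+1)/k)$, so that the balanced routine needs only $\tilde O(k/(\gap+1))$ flow computations instead of $\tilde O((k/(\gap+1))^2)$ to stay within the $\tilde O(nk^2/\epsilon)$ budget. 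Making that routine actually output (and the Nagamochi--Ibaraki certificate transfer) a size-$<\lfloor(1+\epsilon)k\rfloor$ vertex cut \emph{of $G$}, not merely a decision, is the step I would take from~\cite{NanongkaiSY19} rather than reprove.
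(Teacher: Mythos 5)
Your proposal follows essentially the same route as the paper: Nagamochi--Ibaraki sparsification to reach $m=O(nk)$, then the global-to-local framework of~\cite{NanongkaiSY19} (the paper invokes it as \Cref{lem:VC_framework} with $\bar\nu=\Theta(\epsilon m)$, $T_{\textpair}=O(mk)$ via Ford--Fulkerson, and $T_{\textlocal}$ from \Cref{cor:localVC_approx}); you unroll that framework into the unbalanced (dyadic volume guessing plus edge-endpoint sampling) and balanced (flow pair checks) regimes, which is precisely what the cited lemma encapsulates. The only soft spot is that your one-line justification for needing just $\tilde O(k/(\gap+1))$ flow checks in the balanced regime (rather than the square of that) is not self-contained, but since you correctly defer that routine to~\cite{NanongkaiSY19}---as the paper does---the argument goes through.
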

By combining with  the state-of-the-art  algorithms for undirected graph, we obtain the the following.
\begin{corollary}
\label{cor:VC_undir}There is a randomized (Monte Carlo) algorithm
that takes as input an undirected graph $G$ and a cut-size parameter
$k$ and an accuracy parameter $\epsilon\in(0,1]$, and in time $\tilde{O}( m+ \poly(1/\epsilon) \min\{nk^{2}, n^{5/3+o(1)} k^{2/3}, n^{3+o(1)}/k, n^{\omega} \})$
either outputs a vertex cut of size less than $\left\lfloor (1+\epsilon)k\right\rfloor $
or declares that $G$ is $k$-connected w.h.p. For exact vertex connectivity,  there is a randomized  (Monte Carlo) algorithm for exact $k$-vertex-connectivity of $G$ in $\tilde{O}(m+  \min\{nk^{3}, n^2k, n^{\omega} + nk^{\omega} \} )$ time.
\end{corollary}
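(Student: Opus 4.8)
The plan is to reduce the global problem to the local one solved by \Cref{cor:localVC_approx}, in the style of Nanongkai et al.~\cite{NanongkaiSY19}, and to dispatch the one regime the local algorithm cannot reach by a cheap max-flow computation. \emph{Preprocessing.} If $k\ge n/4$, then $\tilde{O}(m+nk^{3})=\tilde{\Omega}(n^{4})$ already dominates the $\tilde{O}(n^{2}k)$-time running time of the classical algorithm (e.g.~\cite{HenzingerRG00}), since $n^{2}k\le nk^{3}$ whenever $k\ge\sqrt{n}$; we call it and stop, so henceforth $k<n/4$. In $O(m)$ time we then output the neighbourhood of any vertex of degree $<k$ (a vertex cut of size $<k$, since such a degree is $<n-1$), so the minimum degree may be assumed to be at least $k$, whence $m\ge nk/2$; and in $O(m)$ time we pass to a sparse $k$-vertex-connectivity certificate with $O(nk)$ edges that still has minimum degree $\ge k$ and preserves every vertex cut of size $<k$~\cite{NagamochiI92}. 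From now on $G$ denotes this certificate, so $m=\Theta(nk)$; every cut we report is first checked to disconnect the input graph in $O(m)$ time.

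\emph{The sparse case.} Assume $\kappa_{G}<k$ and fix a minimum separation $(L,S,R)$ with $|S|=\kappa_{G}$ and $\vol(L)\le\vol(R)$, so $\vol(L)\le m$, and note $\vol(L)\ge 2k$ since $|L|\ge 2$ (a one-vertex side would be a vertex of degree $<k$, already removed). Put $\gap=\lfloor\eps k\rfloor$, and suppose in addition that $\vol(L)\le m(\gap+1)/(8320k)$. We guess $\nu$ over the powers of two in $(k,\,m(\gap+1)/(8320k)]$ --- only $O(\log n)$ guesses --- and for each $\nu$ we draw $\Theta((m/\nu)\log n)$ seeds, each a uniformly random endpoint of a uniformly random edge, and run \Cref{cor:localVC_approx} with input $(x,\nu,k,\gap)$ from each seed $x$. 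For the guess with $\nu/2<\vol(L)\le\nu$ a seed lands in $L$ with probability at least $\vol(L)/(2m)>\nu/(4m)$, so with high probability some seed lies in $L$; the preconditions of \Cref{cor:localVC_approx} all hold, so that call returns, with probability $\ge 3/4$, a vertex cut of size at most $k+\gap\le\lfloor(1+\eps)k\rfloor$, which over all seeds and $O(\log n)$ repetitions becomes a high-probability guarantee. The cost is $\Theta((m/\nu)\log n)$ calls of cost $O(\nu k/\eps)$ per guess, i.e.\ $\tilde{O}(mk/\eps)$, and $\tilde{O}(mk/\eps)=\tilde{O}(nk^{2}/\eps)$ over all guesses, which is $\tilde{O}(nk^{3})$ at $\eps=\Theta(1/k)$.

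\emph{The dense case --- the main obstacle.} The remaining possibility is that $\kappa_{G}<k$ but \emph{every} minimum separation $(L,S,R)$ has $\vol(L),\vol(R)\ge m(\gap+1)/(8320k)$, so that \Cref{cor:localVC_approx} cannot be invoked ($\nu<m(\gap+1)/(8320k)$ would fail); this is the crux. I would sample a set $A$ of $\Theta((k/(\gap+1))\log n)$ vertices proportionally to degree: since $\vol(L)/(2m)$ and $\vol(R)/(2m)$ are each $\ge(\gap+1)/(16640k)$, with high probability $A$ meets both $L$ and $R$, and since $|L|+|R|=n-|S|>n-k$ the more populous of the two sides has more than $(n-k)/2>k$ vertices. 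Then for each $s\in A$ I run one unit-vertex-capacity max-flow from $s$ to the set of all non-neighbours of $s$, stopping after $k$ augmenting paths ($O(mk)$ time), and verify the resulting small cut. If $s$ is in the less populous side of a minimum separation, then its entire more populous side consists of non-neighbours of $s$, so this min-cut has size $\le\kappa_{G}<k$ and --- its far side having more than $k$ vertices --- it genuinely disconnects $G$; hence with high probability we output a vertex cut of size $<k\le\lfloor(1+\eps)k\rfloor$. The cost is $|A|=\tilde{O}(k/(\gap+1))$ max-flows of cost $O(mk)$, namely $\tilde{O}(mk^{2}/(\gap+1))=\tilde{O}(nk^{3}/(\gap+1))$, which is $\tilde{O}(nk^{3})$ for $\gap=0$ and $\tilde{O}(nk^{2}/\eps)$ for $\gap=\lfloor\eps k\rfloor$.

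\emph{Wrap-up.} If neither routine returns a verified cut we declare $G$ to be $k$-connected; a union bound over the $O(\log n)$ scale guesses and $O(\log n)$ sampling rounds keeps the total error at $n^{-\Omega(1)}$, and the total time is $\tilde{O}(m)$ plus $\tilde{O}(nk^{2}/\eps)$, hence $\tilde{O}(m+nk^{3})$ at $\eps=\Theta(1/k)$. Apart from the dense case, the remaining points are routine: the gap between ``size $\le k+\gap$'' and ``size $<\lfloor(1+\eps)k\rfloor$'' (call the local algorithm with target $k$, detecting cuts of size strictly less than $k$); degree-proportional rather than uniform seed sampling, so that $\tilde{O}(m/\nu)$ seeds suffice at each scale; and translating a cut of the sparse certificate back to an actual cut of the input graph and verifying it, which is handled exactly as in~\cite{NanongkaiSY19}.
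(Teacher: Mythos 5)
Your plan correctly reduces to a sparse certificate and splits on whether the light side $L$ of a minimum separation has small or large volume; the ``sparse'' branch (repeatedly guess $\nu$, degree-proportional seeds, \Cref{cor:localVC_approx}) is sound and matches the paper's use of \Cref{lem:VC_framework}. The problem is the ``dense'' branch.

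You propose to sample a single seed $s$ (hoping $s$ lies in the light side $L$) and run one max-flow from $s$ to the set $T$ of all non-neighbours of $s$, claiming that if $s\in L$ then this min cut has size $\le\kappa_G$. That claim is false. The separator $S$ of the minimum separation $(L,S,R)$ does not disconnect $s$ from $T$: after removing $S$, the vertex $s$ can still reach every non-neighbour of $s$ that lies inside $L$, and $L\setminus(\{s\}\cup N(s))$ is generally nonempty because $|L|\ge 2$ and $L$ need not be a clique around $s$. In fact any $s$-$T$ vertex cut must confine $s$'s component to a subset of $\{s\}\cup N(s)$, so its size is governed by the local structure around $s$ and can be as large as $\deg(s)\ge k$. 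Concretely, take $L$ and $R$ to be disjoint $(n/2)$-cycles and $S$ a set of $k-1$ vertices, each adjacent to all of $L\cup R$: then $\kappa_G=|S|=k-1<k$ and $\vol(L)\approx\vol(R)\approx m/2$ (so only the dense branch is reachable), yet for every $s\in L\cup R$ the min $s$-to-nonneighbours cut is $\{v_{i-1},v_{i+1}\}\cup S$ of size $k+1\ge k$; your dense routine finds $k$ augmenting paths for every sample and wrongly declares $G$ $k$-connected. The missing idea is the one used in \Cref{lem:VC_framework}: sample a \emph{pair} $(s,t)$ of random endpoints of random edges so that w.h.p.\ $s\in L$ and $t\in R$, and run a single-pair Ford--Fulkerson from $s$ to $t$ (truncated at $k$ phases), whose min cut is indeed at most $|S|=\kappa_G<k$; over $\tilde O(m/\bar\nu)=\tilde O(1/\eps)$ sampled pairs, each costing $O(mk)$, this gives the $\tilde O(mk/\eps)=\tilde O(nk^2/\eps)$ budget.

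Two minor remarks. First, the paper's proof of this corollary is only a one-line combination of \Cref{thm:VC_undir} with the prior bounds of \cite{NanongkaiSY19,HenzingerRG00,LinialLW88}; you re-derive \Cref{thm:VC_undir} from scratch but do not say how the remaining $\min$-terms ($n^{5/3+o(1)}k^{2/3}$, $n^{3+o(1)}/k$, $n^\omega$, $n^\omega+nk^\omega$) enter --- that requires simply running those algorithms in parallel and reporting the first answer. Second, once you fix the dense branch via pair-sampling, your ``sparse'' case already gives $\tilde O(mk/\eps)$, matching the pair-sampling case, so the two branches fuse into exactly the edge-sampling variant of \Cref{lem:VC_framework}.
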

\begin{proof}
For approximate vertex connectivity, Nanongkai et al. \cite{NanongkaiSY19} (Theorem 1.2) present  $\tilde{O}( m+ \poly(1/\epsilon) \min\{k^{4/3}n^{4/3}, n^{5/3+o(1)} k^{2/3}, n^{3+o(1)}/k, n^{\omega} \})$-time algorithm.  By  \Cref{thm:VC_undir}, we have the $\tilde{O}(m+nk^{2}/\epsilon)$-time algorithm.  Combining  both algorithms, the term $k^{4/3}n^{4/3}$ is replaced by $nk^2$.  For exact vertex connectivity,  we combine the running time in \Cref{thm:VC_undir} with the $\ot(\min\{m+n^2k,  n^{\omega} + nk^{\omega} \})$-time algorithm, which is given  by Henzinger, Rao and Gabow \cite{HenzingerRG00}, and  Linial, Lov{\'a}sz and Wigderson \cite{LinialLW88}.
\end{proof}
Now we present the new results for directed graph. 
\begin{theorem}
\label{thm:VC_dir}There is a randomized (Monte Carlo) algorithm that
takes as input a directed graph $G$ and a cut-size parameter $k$
and an accuracy parameter $\epsilon\in(0,1]$, and in time \\
$\tilde{O}(\min\{mk/\epsilon ,  \poly(1/\epsilon) n^{2+o(1)}\sqrt{k}\})$
either outputs a vertex cut of size less than $\left\lfloor (1+\epsilon)k\right\rfloor $
or declares that $G$ is $k$-connected w.h.p.  For exact vertex connectivity,  there is a randomized  (Monte Carlo) algorithm for exact $k$-vertex-connectivity of $G$ in $\tilde{O}(  \min\{ mk^2, k^3n + k^{3/2}m^{1/2}n \} )$ time.
\end{theorem}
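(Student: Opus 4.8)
The plan is to follow the same global-to-local reduction used for the undirected case in \Cref{thm:VC_undir}, but now applied to directed graphs and combined with directed max-flow/matching subroutines instead of undirected sparsification. The key observation is that the output vertex cut $S$ of $G$ with $|S| = \kappa_G$ corresponds to a separation triple $(L,S,R)$ with $E(L,R) = \emptyset$; either $\vol^{\out}(L)$ or $\vol^{\inn}(R)$ is at most $m/2$, so WLOG $\vol^{\out}(L) \le m/2$ (otherwise run the symmetric procedure on the reverse graph $G^{\text{rev}}$, which has the same vertex cuts). First I would guess $k = \kappa_G$ and a power-of-two upper bound $\nu$ on $\vol^{\out}(L)$; there are only $O(\log m)$ such guesses, absorbed into the $\tilde O(\cdot)$. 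For each guess, sample $\Theta((m/\nu)\log n)$ seed vertices $x$ uniformly at random among the endpoints of edges (equivalently with probability proportional to out-degree); since $\vol^{\out}(L) \le \nu$ and $L$ is a ``large enough'' fraction of the edge-mass whenever $\nu \ge \vol^{\out}(L)$ in the right guess, w.h.p.\ at least one sampled $x$ lands in $L$. Run $\localEC$ / the LocalVC routine of \Cref{cor:localVC_approx} from each such $x$ with parameters $\nu, k, \gap = \lfloor \epsilon k\rfloor$; by \Cref{thm:localVC_gap} this costs $O(\nu k /\epsilon)$ per call, hence $\tilde O((m/\nu)\cdot \nu k/\epsilon) = \tilde O(mk/\epsilon)$ per guess in total, giving the first branch $\tilde O(mk/\epsilon)$.

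For the second branch $\tilde O(\poly(1/\epsilon)\, n^{2+o(1)}\sqrt{k})$, I would instead not iterate LocalVC when $\nu$ is large; when $\vol^{\out}(L)$ is large, $|L|$ and $|R|$ are both large, so a small number of randomly sampled pairs $(s,t)$ with $s\in L$, $t\in R$ suffices to hit the cut, and one computes an $st$-vertex cut via a directed vertex-capacitated max-flow computation. Using the recent almost-linear-time max-flow / or the $n^{2+o(1)}$-type bounds for $k$-bounded flows (stop flow augmentation after $k$ augmenting paths, each found in $\tilde O(m)$ or via the $\sqrt{k}$-style Hopcroft--Karp-like bound in the split graph), this yields the $n^{2+o(1)}\sqrt{k}$ term; the $\poly(1/\epsilon)$ comes from trading exactness for the $(1+\epsilon)$ slack to reduce the number of pairs or augmentations, exactly as in Nanongkai et al. For the exact version one sets $\epsilon < 1/k$ and uses $m = O(nk)$ sparsification is \emph{not} available for directed graphs, so the bounds become $\tilde O(mk^2)$ from the first branch and $\tilde O(k^3 n + k^{3/2} m^{1/2} n)$ from combining with the Gabow-style / Henzinger--Rao--Gabow directed flow bounds applied pairwise.

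The correctness argument has two halves. Completeness: conditioned on some sampled seed $x\in L$ (resp.\ some sampled pair straddling the cut), the LocalVC (resp.\ flow) subroutine outputs a vertex cut of size $< \lfloor(1+\epsilon)k\rfloor$ with probability $\ge 3/4$, and the sampling succeeds w.h.p., so amplify by $O(\log n)$ independent repetitions. Soundness: every set the subroutines ever return is genuinely a vertex cut (this is the ``otherwise'' clause of \Cref{thm:localVC_gap} and a property of flow), so the algorithm never declares a false cut; if it never finds one across all guesses and all seeds after the full repetition schedule, it declares $G$ $k$-connected, and this is correct w.h.p.

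The main obstacle I anticipate is not the LocalVC branch (which is essentially bookkeeping on top of \Cref{thm:localVC_gap}) but getting the $n^{2+o(1)}\sqrt{k}$ branch to work cleanly for \emph{directed} vertex connectivity: one must handle the asymmetry $\vol^{\out}$ vs.\ $\vol^{\inn}$ carefully (running on $G$ and $G^{\text{rev}}$), ensure the sampled pairs hit $(L,R)$ with the right probability when $L$ or $R$ could be unbalanced, and correctly invoke the $\sqrt{k}$-bounded vertex-capacitated flow bound on the split graph without an extra $n$ or $k$ factor creeping in. Verifying that the precondition $\nu < m(\gap+1)/(8320k)$ of \Cref{thm:localVC_gap} holds (or handling the boundary regime $\nu \ge m/\poly(k)$ by brute-force max-flow on a near-complete graph) is the other routine-but-fiddly point.
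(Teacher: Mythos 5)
Your account of the first branch, $\tilde O(mk/\epsilon)$, is essentially what the paper does: edge-sampling $\tilde O(m/\nu)$ seeds (over $O(\log n)$ geometric guesses of $\nu$), running the LocalVC routine of \Cref{cor:localVC_approx} from each seed at cost $O(\nu k/\epsilon)$, and using Ford--Fulkerson stopped after $k$ augmentations for $T_{\textpair}=O(mk)$ in the complementary large-volume regime. Setting $\epsilon<1/k$ also gives $\tilde O(mk^2)$ as you say.

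The gap is in the other two bounds. Your derivation of $\tilde O(\poly(1/\epsilon)\,n^{2+o(1)}\sqrt{k})$ attributes the $\sqrt{k}$ to a Hopcroft--Karp-style $k$-bounded flow and to cutting augmentations short; neither mechanism yields a $\sqrt{k}$ here (capped Ford--Fulkerson is $O(mk)$, and blocking-flow phases do not give a $\sqrt{k}$ dependence of the right shape for vertex connectivity). In the paper the $\sqrt{k}$ comes entirely from the \emph{node-sampling} version of the framework, \Cref{lem:VC_framework}~(\Cref{eq:node-sampling}): one chooses the optimizing parameter $\overline n = n/\sqrt{k}$, uses $T_{\textpair}(m,n,k,\epsilon,\cdot)=\tilde O(n^{2+o(1)}\poly(1/\epsilon))$ from the Chuzhoy--Khanna $(1+\epsilon)$-approximate min $st$-vertex-cut algorithm (not a $k$-bounded exact flow), and $T_{\textlocal}(\overline n^2+\overline n k,k,\epsilon,\cdot)=O((\overline n^2 k+\overline n k^2)/\epsilon)$ from \Cref{cor:localVC_approx}. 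Plugging in gives $\tilde O(\sqrt{k}\,\poly(1/\epsilon))\cdot(n^{2+o(1)}+n^2+nk^{1.5})=\tilde O(n^{2+o(1)}\sqrt{k}\,\poly(1/\epsilon))$ once one assumes $k\le n^{2/3}$ (for larger $k$, one falls back to an existing $\tilde O(n^{3+o(1)}/k)$-time algorithm). Your edge-sampling variant would only rebalance to $\tilde O(mk\,\poly(1/\epsilon))$, not $n^{2+o(1)}\sqrt{k}$. Similarly, the exact bound $\tilde O(k^3n + k^{3/2}m^{1/2}n)$ does \emph{not} come from ``Gabow/Henzinger--Rao--Gabow applied pairwise''; it is again the node-sampling framework with $\overline n=O(\sqrt{m/k})$, $T_{\textpair}=O(mk)$ (Ford--Fulkerson), and the exact $T_{\textlocal}(\nu,k,1/2k,\cdot)=O(\nu k^2\log(1/p))$ from \Cref{cor:localVC_exact}, yielding $\tilde O(n/\overline n)\cdot(mk+(\overline n^2+\overline n k)k^2)=\tilde O(k^{3/2}m^{1/2}n+k^3n)$. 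So the missing ingredient in your write-up is the node-sampling form of \Cref{lem:VC_framework} and the freedom to choose $\overline n$; without it, neither the $\sqrt{k}$ nor the $k^{3/2}m^{1/2}n$ term arises.
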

Similarly, by combining with  the state-of-the-art algorithms for directed graph, we obtain the the following.
\begin{corollary}
\label{cor:VC_dir}There is a randomized (Monte Carlo) algorithm
that takes as input an undirected graph $G$ and a cut-size parameter
$k$ and an accuracy parameter $\epsilon\in(0,1]$, and in time $\tilde{O}( \poly(1/\epsilon) \min\{mk, m n^{2/3+o(1)}/k^{1/3}, k^{1/2} n^{2+o(1)}, n^{7/3+o(1)}/k^{1/6}, n^{3+o(1)}/k, n^{\omega} \})$
either outputs a vertex cut of size less than $\left\lfloor (1+\epsilon)k\right\rfloor $
or declares that $G$ is $k$-connected w.h.p. For exact vertex connectivity,  there is a randomized  (Monte Carlo) algorithm for exact $k$-vertex-connectivity of $G$ in $\ot(\min\{mk^2, k^3n + k^{3/2}m^{1/2}n, mn, n^{\omega} + nk^{\omega}  \})$ time.
\end{corollary}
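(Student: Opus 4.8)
The plan is to derive \Cref{cor:VC_dir} by a simple ``run the fastest known algorithm'' combination: we take the algorithm of \Cref{thm:VC_dir} together with the previously known algorithms for directed vertex connectivity, and on any given input we run whichever one has the smallest running time. This is legitimate because every running-time bound in the family is an \emph{explicit} function of $n$, $m$, $k$ (and, in the approximate case, $\epsilon$), so the algorithm can compute all of these bounds up front, pick the minimizer, and invoke only that single algorithm. Since each algorithm in the family is correct with high probability, so is the chosen one, and its running time is exactly the claimed minimum (up to the $\tilde O(\cdot)$, $\poly(1/\epsilon)$, and $n^{o(1)}$ overheads).

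For the approximation statement, the terms $\tilde{O}(\poly(1/\epsilon)\, mk)$ and $\tilde{O}(\poly(1/\epsilon)\, k^{1/2} n^{2+o(1)})$ are precisely the two bounds supplied by \Cref{thm:VC_dir} (absorbing the $1/\epsilon$ of the $mk/\epsilon$ term into $\poly(1/\epsilon)$). It then remains to recall the bound of Nanongkai et al.~\cite{NanongkaiSY19} for approximate directed vertex connectivity, which contributes the remaining terms $\tilde{O}\bigl(\poly(1/\epsilon)\min\{m n^{2/3+o(1)}/k^{1/3},\ n^{7/3+o(1)}/k^{1/6},\ n^{3+o(1)}/k,\ n^{\omega}\}\bigr)$; taking the minimum over all of these yields the stated expression. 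As in the proof of \Cref{cor:VC_undir}, note that the new $k^{1/2} n^{2+o(1)}$ term already dominates the $k^{4/3}n^{4/3}$-type term of \cite{NanongkaiSY19} in the regime where the latter would otherwise be the bottleneck, which is why that term disappears from the final bound.

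For the exact statement, we instantiate \Cref{thm:VC_dir} with $\epsilon<1/k$: a vertex cut of size less than $\lfloor(1+\epsilon)k\rfloor = k$ is a cut of size at most $k-1$, so the algorithm becomes exact and runs in time $\tilde{O}(\min\{mk^2,\ k^3 n + k^{3/2}m^{1/2}n\})$. We then combine this with the classical flow-based and linear-algebraic algorithms for directed vertex connectivity, in particular the $\tilde{O}(mn)$-type bound and the $\tilde{O}(n^{\omega} + nk^{\omega})$ bound attributable to Henzinger, Rao and Gabow~\cite{HenzingerRG00} and Linial, Lov\'asz and Wigderson~\cite{LinialLW88}, and again take the minimum, obtaining $\tilde{O}(\min\{mk^2,\ k^3 n + k^{3/2}m^{1/2}n,\ mn,\ n^{\omega} + nk^{\omega}\})$.

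There is essentially no mathematical difficulty here; the only thing requiring care is bookkeeping of the prior bounds: verifying that each cited result is stated (or can be restated without overhead) for the \emph{directed} setting, that the $\poly(1/\epsilon)$ and $n^{o(1)}$ factors are correctly attributed to the right sub-algorithms, and that mixing Monte Carlo algorithms with deterministic or Las Vegas ones does not degrade the high-probability guarantee --- which it cannot, since in the end only one algorithm is actually executed. Thus the ``hard part'' is purely the matter of citing the sharpest available directed-graph bounds and checking that their parameter ranges cover the regimes claimed in the corollary.
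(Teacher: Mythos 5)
Your overall strategy matches the paper exactly: take the minimum over a portfolio of algorithms, with the new $mk/\epsilon$ and $n^{2+o(1)}\sqrt{k}$ bounds (respectively $mk^2$ and $k^3n + k^{3/2}m^{1/2}n$ after setting $\epsilon < 1/k$) coming from \Cref{thm:VC_dir}, and the remaining terms coming from prior work, and the high-probability guarantee is preserved because only the single selected algorithm is run. The logic is sound, and the proof is essentially the same as the paper's.

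Two pieces of bookkeeping are off, though, and both concern the \emph{directed} vs.\ \emph{undirected} distinction. First, the terms of \cite{NanongkaiSY19} that get subsumed by the new $mk$ and $k^{1/2}n^{2+o(1)}$ terms in the directed approximate bound are $m^{4/3}$ and $nm^{2/3}k^{1/2}$, not a ``$k^{4/3}n^{4/3}$-type'' term --- that $k^{4/3}n^{4/3}$ term belongs to the \emph{undirected} bound used in the proof of \Cref{cor:VC_undir}, and you carried it over to the directed case by mistake. Second, for the exact bound the paper attributes $\ot(\min\{mn, n^{\omega} + nk^{\omega}\})$ to Henzinger–Rao–Gabow~\cite{HenzingerRG00} and Cheriyan–Reif~\cite{CheriyanR94}; you cite Linial–Lov\'asz–Wigderson~\cite{LinialLW88} instead of Cheriyan–Reif, but the LLW algorithm is for the undirected setting, and it is Cheriyan and Reif who give the $n^{\omega}+nk^{\omega}$-type bound for directed graphs. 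Neither slip breaks the argument, but in a proof whose ``only hard part'' is, as you say, verifying that every cited bound actually applies in the directed setting, these are exactly the details that must be right.
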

\begin{proof}
For approximate vertex connectivity, Nanongkai et al. \cite{NanongkaiSY19} (Theorem 1.2) present  $\tilde{O}(\poly(1/\epsilon) \min\{m^{4/3}, nm^{2/3}k^{1/2}, m n^{2/3+o(1)}/k^{1/3}, n^{7/3+o(1)}/k^{1/6}, n^{3+o(1)}/k, n^{\omega} \})$-time algorithm.  By  \Cref{thm:VC_dir}, we have the $\tilde{O}( \min\{ mk^2, k^3n + k^{3/2}m^{1/2}n \} )$-time algorithm. Combining both algorithms, the terms  $m^{4/3}$ and $nm^{2/3}k^{1/2}$ are subsumed.  For exact vertex connectivity,  we combine the running time in \Cref{thm:VC_dir} with the $\ot(\min\{ mn,  n^{\omega} + nk^{\omega} \})$-time algorithm, which is given  by Henzinger, Rao and Gabow \cite{HenzingerRG00}, and by Cheriyan and Reif \cite{CheriyanR94}.
\end{proof}

To prove \Cref{thm:VC_undir} and \Cref{thm:VC_dir}, we will apply
our framework \cite{NanongkaiSY19} for reducing the vertex connectivity
problem to the local vertex connectivity problem. To describe the
reduction, let $T_{\text{pair}}(m,n,k,\epsilon,p)$ be the time required
for, given vertices $s$ and $t$, either finding a $st$-vertex cut
of size less than $\left\lfloor (1+\epsilon)k\right\rfloor $ or declaring
that $s$ and $t$ is $k$-connected correctly with probability at least $1-p$.
Let $T_{\text{local}}(\nu,k,\epsilon,p)$ be the time for solving
correctly probability at least $1-p$ the local vertex connectivity
problem from \Cref{cor:localVC_approx} when a volume parameter is
$\nu$, the cut-size parameter is $k$, and the accuracy parameter
is $\epsilon$.
\begin{lemma}
[\cite{NanongkaiSY19} Lemma 5.14, 5.15]\label{lem:VC_framework} There is a randomized
(Monte Carlo) algorithm that takes as input a graph $G$, a cut-size
parameter $k$, and an accuracy parameter $\epsilon>0$, and runs
in time in one of these expressions
\begin{align} 
\tilde{O}(m/\overline{\nu})\cdot(T_{\textpair}(m,n,k,\epsilon,1/\text{\ensuremath{\poly}}n)&+T_{\textlocal}(\overline{\nu},k,\epsilon,1/\text{\ensuremath{\poly}}n)) \label{eq:edge-sampling} \\ 
\ot (n/\overline{n})\cdot(T_{\textpair}(m,n,k,\epsilon,1/\text{\ensuremath{\poly}}n) &+  T_{\textlocal}(\overline{n}^2 + \overline{n}k,k,\epsilon,1/\text{\ensuremath{\poly}}n)) \label{eq:node-sampling} %
\end{align} 

where $\overline{\nu}\le m,$ and $ \overline{n}\le n$ are  optimizing parameters that can be
chosen, and either outputs a vertex cut of size less than $\left\lfloor (1+\epsilon)k\right\rfloor $
or declares that $G$ is $k$-connected w.h.p.
\end{lemma}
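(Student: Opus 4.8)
This statement is quoted from \cite{NanongkaiSY19}, so my plan is to recall its proof, which is a ``sample a seed, then run either a local or a pairwise cut subroutine'' reduction. Assume $\kappa_G<k$ (otherwise no subroutine finds a cut of size $<\lfloor(1+\epsilon)k\rfloor$ and we may safely declare $G$ to be $k$-connected), and fix an unknown canonical minimum separation triple $(L^*,S^*,R^*)$ with $S^*=N^{\out}_G(L^*)$, $|S^*|<k$, chosen so that $\vol^{\out}(L^*)\le\vol^{\out}(R^*)$ in the setting of \eqref{eq:edge-sampling} and so that $|L^*|\le|R^*|$ in the setting of \eqref{eq:node-sampling}. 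The caller fixes the parameter $\overline\nu$ (resp.\ $\overline n$); for full correctness one runs the scheme below for all powers of two, and the analysis applies to the scale matching the true cut. The argument splits on whether the light side $L^*$ fits in the current budget.

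\emph{Local case: $\vol^{\out}(L^*)\le\overline\nu$} (resp.\ $|L^*|\le\overline n$, which by the bipartite-like structure of the split graph forces $\vol^{\out}(L^*)\le|L^*|^2+|L^*||S^*|<\overline n^2+\overline n k$). A uniformly random edge-tail lies in $L^*$ with probability $\Omega(\vol^{\out}(L^*)/m)$ (resp.\ a uniformly random vertex with probability $|L^*|/n$), so $\tilde O(m/\overline\nu)$ (resp.\ $\tilde O(n/\overline n)$) independent samples include a seed $x\in L^*$ w.h.p. From each sampled seed, run the approximate local algorithm of \Cref{cor:localVC_approx} with volume parameter $\overline\nu$ (resp.\ $\overline n^2+\overline n k$), cut-size $k$, and accuracy $\epsilon$; its precondition $\overline\nu<(\gap+1)m/(8320k)$ is precisely what the caller's choice of $\overline\nu$ is charged with guaranteeing, so the call is legal, and for the seed in $L^*$ it returns a vertex cut of size $<\lfloor(1+\epsilon)k\rfloor$ with probability $\ge 3/4$, boosted to $1-1/\poly n$ by repetition. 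This contributes $\tilde O(m/\overline\nu)\cdot T_{\textlocal}$ (resp.\ $\tilde O(n/\overline n)\cdot T_{\textlocal}$).

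\emph{Global case: $\vol^{\out}(L^*)>\overline\nu$} (resp.\ $|L^*|>\overline n$). Then the heavy side is also bulky: $\vol^{\out}(R^*)\ge\vol^{\out}(L^*)>\overline\nu$ (resp.\ $|R^*|\ge(n-|S^*|)/2=\Omega(n)$, using $|S^*|<k<n/4$). Hence a uniformly random pair $(s,t)$ of edge-tails (resp.\ of vertices) has $s\in L^*,t\in R^*$ with probability $\Omega((\overline\nu/m)^2)$ (resp.\ $\Omega(\overline n/n)$), and for such a pair the $st$-subroutine $T_{\textpair}$ must return a cut, because $S^*$ separates $s$ from $t$ and $|S^*|<k$. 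Taking that many random pairs and calling $T_{\textpair}$ on each finds a cut of size $<\lfloor(1+\epsilon)k\rfloor$ w.h.p. For \eqref{eq:node-sampling} this is $\tilde O(n/\overline n)$ calls, matching the claim; for \eqref{eq:edge-sampling} the naive count is $\tilde O((m/\overline\nu)^2)$ calls, and the fix is to draw one set $Y$ of $\tilde O(m/\overline\nu)$ edge-tails that meets both $L^*$ and $R^*$ and extract a bichromatic pair from $Y$ with only $\tilde O(|Y|)$ queries, again using $|S^*|<k$.

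\emph{Wrap-up.} For each scale tried ($O(\log m)$ of them), exactly one of the two cases is the relevant one for the true cut, and its branch outputs a valid cut of size $<\lfloor(1+\epsilon)k\rfloor$ with probability $1-1/\poly n$; a union bound over all scales, seeds and pairs keeps the success probability at $1-1/\poly n$, and since every cut the algorithm ever reports is genuine, declaring ``$k$-connected'' only when nothing is found is w.h.p.\ correct. The per-scale time is $\tilde O(m/\overline\nu)\,(T_{\textpair}+T_{\textlocal}(\overline\nu,k,\epsilon,1/\poly n))$, resp.\ $\tilde O(n/\overline n)\,(T_{\textpair}+T_{\textlocal}(\overline n^2+\overline n k,k,\epsilon,1/\poly n))$, as stated. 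The one genuinely delicate point is the global case of \eqref{eq:edge-sampling}: proving that $\tilde O(m/\overline\nu)$, not $\tilde O((m/\overline\nu)^2)$, calls to $T_{\textpair}$ suffice --- this is where the two sub-lemmas of \cite{NanongkaiSY19} differ and where the smallness $|S^*|<k$ of the separator must be exploited more carefully than in the sketch above.
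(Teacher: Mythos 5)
Your reconstruction matches the paper's own proof sketch in structure: fix a canonical separation triple $(L^*,S^*,R^*)$ with the light side small, split into a local case (seed sampling plus $T_{\textlocal}$) and a global case (pair sampling plus $T_{\textpair}$), and do a doubling search over scales, charging the caller with choosing $\overline\nu$ (or $\overline n$) so the local precondition is met. The paper only sketches \eqref{eq:edge-sampling}; you additionally treat \eqref{eq:node-sampling}, whose global case is clean precisely because $|S^*|<k<n/4$ forces $|R^*|=\Omega(n)$, so a uniformly random pair of vertices lands in $L^*\times R^*$ with probability $\Omega(\overline n/n)$.

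The subtlety you flag in the global case of \eqref{eq:edge-sampling} is genuine and, as you suspect, is also unaddressed by the paper's sketch: asserting that $\tilde O(m/\overline\nu)$ random pairs of edge-tails contain one in $L^*\times R^*$ w.h.p.\ implicitly requires $\vol^{\out}(R^*)=\Omega(m)$, which does not follow merely from $\vol^{\out}(R^*)\ge\vol^{\out}(L^*)\ge\overline\nu$ (e.g.\ when $\vol^{\out}(S^*)$ dominates). Your proposed fix --- draw one set $Y$ of $\tilde O(m/\overline\nu)$ edge-tails meeting both sides and run $\tilde O(|Y|)$ calls against a single anchor --- is the right shape, but as stated it does not handle the anchor landing in $S^*$, and avoiding $S^*$ is not free since $\vol^{\out}(S^*)$ can be a constant fraction of $m$. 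You correctly identify this as the one delicate point and defer to \cite{NanongkaiSY19} Lemma~5.14, which is a fair disposition given the paper does the same; just be explicit that the anchor-in-$S^*$ event is exactly what remains open in your sketch.
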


For completeness, we give a simple proof sketch of \Cref{eq:edge-sampling} which is used
for our algorithm for undirected graphs. The idea for other equations
is similar and also simple.
\begin{proof}[Proof sketch]
	Suppose that $G$ is not $k$-connected. It suffices to show
	an algorithm that outputs a vertex cut of size less than $\left\lfloor (1+\epsilon)k\right\rfloor $
	w.h.p. By considering both $G$ and its reverse graph (where the direction
	of each edge is reversed), there exists w.l.o.g.~a separation triple
	$(L,S,R)$ where $\vol^{\out}(L)\le\vol^{\out}(R)$. There are two
	cases.
	
	Suppose $\vol^{\out}(L)\ge\overline{\nu}$. By sampling $\tilde{O}(m/\nu)$
	pairs of edges $e=(x,x')$ and $f=(y,y')$, there exists w.h.p.~a pair
	$(e,f)$ where $x\in L$ and $y\in R$. For such pair $(x,y)$, if
	we check $x$ and $y$ is $k$-connected in time $T_{\text{pair}}(m,n,k,\epsilon,1/\text{\ensuremath{\poly}}n)$,
	we must obtain an $xy$-vertex cut of size less than $\left\lfloor (1+\epsilon)k\right\rfloor $.
	So, if we check this for each pair $(x,y)$ and we will obtain the
	cut w.h.p. 
	
	Suppose $\vol^{\out}(L)\le\overline{\nu}$. Suppose further that $\vol^{\out}(L)\in(2^{i-1},2^{i}]$. 
	By sampling $\tilde{O}(m/2^{i})$ pair of edges $e=(x,x')$,
	there exists w.h.p.~a edge $e$ where $x\in L$. For such vertex $x$,
	if we check the local vertex connectivity in time $T_{\text{local}}(2^{i},k,\epsilon,1/\text{\ensuremath{\poly}}n)$,
	then the algorithm must return a vertex cut of size less than $\left\lfloor (1+\epsilon)k\right\rfloor $.
	So, if we check this for each pair $(x,y)$ and we will obtain the
	cut w.h.p.
	
	To conclude, the running time in the first case is $\tilde{O}(m/\overline{\nu})\cdot T_{\text{pair}}(m,n,k,\epsilon,1/\text{\ensuremath{\poly}}n)$. For the second case, we try all $O(\log n)$ many $2^i$, each of which case takes  $\tilde{O}(m/2^{i})\cdot T_{\text{local}}(2^{i},k,\epsilon,1/\text{\ensuremath{\poly}}n)=\tilde{O}(m/\overline{\nu})\cdot T_{\text{local}}(\overline{\nu},k,\epsilon,1/\text{\ensuremath{\poly}}n)$
	assuming that $T_{\text{local}}(\nu,k,\epsilon,1/\text{\ensuremath{\poly}}n)\ge\nu$. This complete the proof of the running time.
	For the correctness, if $G$ is not $k$-connected, we must obtain a desired vertex cut of size $\left\lfloor (1+\epsilon)k\right\rfloor $ w.h.p. So if we do not find any cut, we declare that $G$ is $k$-connected w.h.p.
\end{proof}

\subsection{Undirected graphs}

Here, we prove \Cref{thm:VC_undir}. First, it suffices to
show an algorithm with $\tilde{O}(mk/\epsilon)$ time. Indeed, by using the
sparsification algorithm by Nagamochi and Ibaraki \cite{NagamochiI92},
we can sparsify an undirected graph in linear time so that $m=O(nk)$
and $k$-connectivity is preserved. By this preprocessing, the total
running time is $O(m)+\tilde{O}((nk)k/\epsilon))=\tilde{O}(m+nk^{2}/\epsilon)$
as desired. Next, we assume that $k\le\min\{n/4,5\delta\}$ where
$\delta$ is the minimum out-degree of $G$. If $k>5\delta$, then
it is $G$ is clearly not $k$-connected and the out-neighbor of the
vertex with minimum out-degree is a vertex cut of size less than $k$.
If $k>n/4$, then we can invoke the algorithm by Henzinger, Rao and
Gabow \cite{HenzingerRG00} for solving the problem exactly in time
$O(mn)=O(mk)$.

Now, we have $T_{\text{pair}}(m,n,k,\epsilon,p)=O(mk)$ by Ford-Fulkerson
algorithm. By repeating the algorithm from \Cref{cor:localVC_approx} $O(\log\frac{1}{p})$
times for boosting success probability, $T_{\text{local}}(\nu,k,\epsilon,p)=O(\nu k\epsilon^{-1}\log\frac{1}{p})$.
We choose $\overline{\nu}=O(\epsilon m)$ as required by \Cref{cor:localVC_approx}
and also $k\le\min\{n/4,5\delta\}$. Applying \Cref{lem:VC_framework} (\Cref{eq:edge-sampling}),
we obtain an algorithm for \Cref{thm:VC_undir} with running time
\[
\tilde{O}(m/\epsilon m)\times O(mk+(\epsilon m)k\epsilon^{-1}\log n)=\tilde{O}(mk/\epsilon).
\]

\subsection{Directed graphs}
Here, we prove \Cref{thm:VC_dir}.
We again assume that $k\le\min\{n/4,5\delta\}$ using the same reasoning as in the undirected case.
We first show how to obtain the claimed time bound for the approximate problem.
Note that the $\tilde{O}(mk/\epsilon)$-time algorithm follows by the same argument as in the undirected case, 
because both Ford-Fulkerson algorithm and the local algorithm from \Cref{cor:localVC_approx} work as well in directed graphs.

Next, we show an approximate algorithm with running time $\tilde{O}(\poly(1/\epsilon) n^{2+o(1)}\sqrt{k})$. 
We assume $k \le n^{2/3}$  (for $k \ge n^{2/3}$, we use state-of-the-art $\ot( \poly(1/\epsilon) n^{3+o(1)}/k )$-time algorithm by \cite{NanongkaiSY19}). We have $T_{\text{local}}(\nu,k,\epsilon,p)=O(\nu k\epsilon^{-1}\log\frac{1}{p})$ by \Cref{cor:localVC_approx} and
$T_{\text{pair}}(m,n,k,\epsilon,1/\poly n)= \ot(\poly(1/\epsilon) n^{2+o(1)})$ using the recent result for $(1+\eps)$-approximating the minimum $st$-vertex cut by Chuzhoy and Khanna \cite{ChuzhoyK19}. By choosing $\overline{n} = n/\sqrt{k}$ for \Cref{lem:VC_framework} (\Cref{eq:node-sampling}), we obtain an algorithm with running time
\begin{align*}
\tilde{O}(n/\overline{n})\cdot(n^{2+o(1)}\poly(1/\epsilon)+(\overline{n}^{2}k+\overline{n}k^{2})/\epsilon) 
& =\tilde{O}(\sqrt{k}\poly(1/\epsilon))\cdot(n^{2+o(1)}+n^{2}+nk^{1.5})\\
& =\tilde{O}(n^{2+o(1)}\sqrt{k}\poly(1/\epsilon)).
\end{align*}

Next, we show how to obtain the time bound for the exact problem. First, observe that we can obtain a $\ot(mk^2)$-time exact algorithm from the $\ot(mk/\epsilon)$-time approximate algorithm by setting $\epsilon < 1/k$. %
It remains to show an algorithm with the running time $\ot(k^3n + k^{3/2}m^{1/2}n)$.
By \Cref{cor:localVC_exact}, there is an exact algorithm for local vertex connectivity with running time $T_{\text{local}}(\nu,k,1/2k,p)=O(\nu k^2 \log\frac{1}{p})$. 
Also, we have $T_{\text{pair}}(m,n,k,\epsilon,p)= O(mk)$ by Ford-Fulkerson algorithm. By choosing $\overline{n}  = O(\sqrt{m/k})$ in \Cref{lem:VC_framework} (\Cref{eq:node-sampling}), we obtain an algorithm with running time 
\begin{align*}
\ensuremath{\ot(n/\overline{n})\cdot(mk+(\overline{n}^{2}+\overline{n}k)k^{2})} & =\ot(n/\overline{n})\cdot(mk+(m/k+\sqrt{mk})k^{2})\\
& =\ot(n\sqrt{k/m})\cdot(mk+k^{2.5}\sqrt{m})\\
& =\ot(k^{3/2}m^{1/2}n+k^{3}n).
\end{align*}
Note that $\overline{n}^{2}+\overline{n}k = O(m/k)$ as required by \Cref{cor:localVC_exact}.

	\section{Property Testing}

In this section,  we show property testing algorithms for
distinguishing between a graph that is $k$-edge/$k$-vertex connected and a graph that is 
$\epsilon$-far from having such property with constant probability for both unbounded-degree
 and bounded-degree incident-list model. Recall that for any $\epsilon > 0$, a directed
graph $G$ is $\epsilon$-far from having a property $P$ if at least
$\epsilon m$ edge modifications are needed to make $G$ satisfy
property $P$.  We assume that $\bar{d} = m/n$ is known to the
algorithm at the beginning.%
\begin{theorem} \label{thm:main-property-testing}
For unbounded-degree model, there is a property testing algorithm for $k$-edge ($k$-vertex where
$k < n/4$) connectivity with correct probability at least $2/3$ that uses
$\ot(k^2/(\epsilon^2 \bar{d}))$ queries (same for $k$-vertex) and runs in
$\ot(k^2/(\epsilon^{11/3} \bar{d}))$ time  ($\ot(k^2/(\epsilon^{2.5}
\bar{d})) $ time for $k$-vertex).  If $\bar{d}$ is unknown, then there is
a similar algorithm that uses $\ot(k/\epsilon^2)$ queries (same for
$k$-vertex), and runs in $\ot(k/\epsilon^{8/3})$ time
($\ot(k/\epsilon^{2.5})$ time for $k$-vertex).  If $G$ is simple, then
the same algorithm for testing $k$-edge-connectivity queries at most
  $\ot(\min\{ k^2/(\bar{d}\epsilon^2), k/(\bar{d}\epsilon^3) \})$  (or $\ot(\min\{k/\epsilon^2, 1/\epsilon^3 \})$ edges if
 $\bar{d}$ is unknown), and runs in  $\ot(1/(\epsilon^{14/3} \bar{d}))$ (or
 $\ot(1/\epsilon^{11/3})$ if $\bar{d}$ is unknown). 
\end{theorem}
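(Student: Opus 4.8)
I would follow the classical ``local witness'' template for testing connectivity, but replace the exponential-in-$k$ local search of \cite{OrensteinR11,YoshidaI10,YoshidaI12} with the near-linear local cut algorithms of \Cref{cor:localEC_exact_new} and \Cref{cor:localVC_exact}, and combine this with sampling over geometric volume scales. Every tester produced is one-sided: it reports ``not connected'' only after exhibiting a genuine cut of size $<k$, hence is automatically correct on $k$-connected inputs, and the whole task is to show that on $\epsilon$-far inputs such a cut is found with constant probability. Before that, I would normalize: since the tester knows $n,\epsilon,k$ and $\bar d=m/n$ (hence $m$), if $\bar d<k$ some vertex has out-degree $<k$ and $G$ is not $k$-connected, so output ``not connected''; and if $\bar d>Ck/\epsilon$ for a suitable constant $C$ then $O(nk)<\epsilon m$ edge insertions make any $n$-vertex graph $k$-connected, so $G$ cannot be $\epsilon$-far and we may output ``connected''. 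Thus we may assume $k\le\bar d\le O(k/\epsilon)$, in which range the $\bar d$-free and $\bar d$-dependent forms of all claimed bounds (and the simple-graph variants) agree up to constants, so it suffices to prove one of the two forms.

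\emph{The tester and its core lemma.} For each volume scale $\nu\in\{1,2,4,\dots,\nu_{\max}\}$ with $\nu_{\max}=\tilde O(k/\epsilon)$, the tester draws a batch of seed vertices, sampled proportionally to out-degree (pick a uniformly random edge and one of its endpoints), flags any seed of out-degree $<k$, and otherwise runs the exact local algorithm $\localEC(x,\nu,k,0)$ of \Cref{cor:localEC_exact_new} (the vertex analogue from \Cref{cor:localVC_exact} for the vertex problem); it reports any returned cut. The crux is the structural lemma: \emph{if $G$ is $\epsilon$-far from $k$-edge-connected, there is a family of pairwise-disjoint sets $L_1,\dots,L_t$ with $|E(L_i,V\setminus L_i)|<k$ and $\vol^{\out}(L_i)\le\nu_{\max}$ whose total deficiency $\sum_i\bigl(k-|E(L_i,V\setminus L_i)|\bigr)$ is $\tilde\Omega(\epsilon m)$.} I would obtain this from classical edge-connectivity augmentation theory: $\epsilon$-far means the minimum number $\mathrm{OPT}$ of edge \emph{insertions} to reach $k$-edge-connectivity is $\ge\epsilon m$ (deletions never help), and the subpartition min--max characterization of $\mathrm{OPT}$ yields a subpartition of $V$ into deficient sets of total deficiency $\Omega(\mathrm{OPT})$; since these sets are disjoint and each contributes $<k$ to the deficiency, a Markov argument discards the $O(m/\nu_{\max})=O(\epsilon m/k)$-few parts of volume exceeding $\nu_{\max}$ and retains $\tilde\Omega(\epsilon m)$ of the deficiency among bounded-volume parts. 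Bucketing these witnesses by volume into $O(\log(k/\epsilon))$ geometric classes, some class at scale $\nu^*\le\nu_{\max}$ carries $\tilde\Omega(\epsilon m)$ deficiency, hence $\tilde\Omega(\epsilon m/k)$ witnesses, hence total volume $\tilde\Omega(\epsilon m\nu^*/k)$; so a degree-weighted seed lands inside such a witness with probability $\tilde\Omega(\epsilon\nu^*/k)$, and \Cref{cor:localEC_exact_new} then certifies a cut of size $<k$ with constant probability. Taking $\tilde O(k/(\epsilon\nu))$ seeds at scale $\nu$ makes each of the $O(\log(k/\epsilon))$ scales cost $\tilde O(k^2/\epsilon)$ queries (a factor $k$ more time); summing and applying the normalization gives the query bound, and boosting the success probability to $2/3$ and then to w.h.p.\ is $O(\log)$ repetition.

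\emph{Vertex connectivity, simple graphs, running time, and the main obstacle.} For $k$-vertex connectivity ($k<n/4$) the same scheme runs the local vertex-connectivity algorithm of \Cref{cor:localVC_exact}; only the structural lemma changes, now producing many pairwise-disjoint bounded-volume \emph{separation triples} $(L_i,S_i,R_i)$ with $|S_i|<k$ and $\sum_i(k-|S_i|)=\tilde\Omega(\epsilon m)$. This is the delicate point: vertex-connectivity augmentation has no clean subpartition min--max theorem, so I would argue directly, repeatedly peeling off a minimum vertex cut near the densest remaining region and bounding the volume each peeling can absorb --- keeping $\nu_{\max}$ within $\tilde O(k/\epsilon)$ here is the hard part of the whole proof. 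For \emph{simple} graphs the improvement to $\tilde O(\min\{k/\epsilon^2,\,1/\epsilon^3\})$ exploits a sharper volume bound: when $m\ge nk$ the disjoint witnesses have average vertex-count $O(k/(\epsilon\bar d))=O(1/\epsilon)$, and in a simple graph an $O(1/\epsilon)$-vertex set has volume $O(1/\epsilon^2)$, shrinking the relevant scale; together with using the cheaper $(1+\epsilon)$-approximate local algorithm of \Cref{cor:localEC_approx_new} to locate the region and then reading it out exactly, this gives the second term $1/\epsilon^3$, while the generic analysis still gives $k/\epsilon^2$. Finally, the running times are a $\poly(1/\epsilon)$ factor above the query bounds by the same idea on the time side: use the faster approximate local algorithms (\Cref{cor:localEC_approx_new}, \Cref{cor:localVC_approx}) to output a candidate region of volume $\tilde O(\nu/\epsilon)$ and then compute an exact minimum cut inside it by a bounded-length max-flow; the exact exponents ($\epsilon^{-11/3}$, $\epsilon^{-14/3}$, etc.) should come out of balancing the approximate-search cost against this verification cost.
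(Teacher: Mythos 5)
Your outline captures the right scaffolding — one-sided testing via local cut search, degree normalization, \cite{OrensteinR11}-type decompositions into disjoint deficient sets, geometric bucketing — but it omits the single technical move that makes the claimed $k$-linear bound attainable, and as written your query count overshoots by a factor of $k\epsilon$.

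Do the arithmetic on your core step. You bucket witnesses by \emph{volume}, find a scale $\nu^*$ carrying $\tilde\Omega(\epsilon m)$ deficiency, and argue that a degree-weighted seed hits one with probability $\tilde\Omega(\epsilon\nu^*/k)$, so you draw $\tilde O\bigl(k/(\epsilon\nu^*)\bigr)$ seeds. But you then run the \emph{exact} local algorithm ($\localEC(x,\nu,k,0)$ of \Cref{cor:localEC_exact_new}), which reads $\Theta(\nu^* k)$ edges per call. The product is $\tilde O(k^2/\epsilon)$ queries, and no normalization of $\bar d$ brings this down to $\ot(k/\epsilon^2)$ (equivalently $\ot(k^2/(\epsilon^2\bar d))$): one has $k^2/\epsilon \le k/\epsilon^2$ only when $k\le 1/\epsilon$. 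The slack you threw away by setting $\gap=0$ is exactly the missing factor. The paper instead buckets witnesses by \emph{deficiency} $k-d^{\textout}(X)\in[2^i,2^{i+1})$ into classes $\cC_i$, caps the volume of the ``small'' members at $\nu_i=\Theta(2^i\epsilon^{-1}\log k)$, and runs the \emph{gap} local algorithm (\Cref{lem:gaplocalec}, from \cite{NanongkaiSY19}) with $\gap=2^i-1$ matched to the bucket. Then each call costs only $\ot(\nu_i k/\gap)=\ot(k/\epsilon)$ queries, \emph{uniformly in $i$}, since $\nu_i/\gap = O(\epsilon^{-1}\log k)$; and the counting lemma (\Cref{lemma:many-small-edge}) guarantees some bucket has $|\cC_{i,\textsmall}|\gtrsim \epsilon n\bar d/(k\log k)$ disjoint small witnesses, so $\Theta(k\log k/(\epsilon\bar d))$ uniformly sampled seeds hit one. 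The product is exactly $\ot(k^2/(\epsilon^2\bar d))$. Matching the gap to the deficiency is not an optimization on top of your plan; without it, exact local search simply does not yield a $k$-linear tester.

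Two smaller remarks. Your running-time argument (run the cheap approximate local search to find a small region, then exact max-flow inside) is a plausible-sounding alternative, but it is speculative and not what produces the stated exponents; those come directly from plugging $\nu/\gap=O(\epsilon^{-1}\log k)$ into the $\ot((\nu/\gap)^{5/3}k)$ and $\ot((\nu/\gap)^{3/2}k)$ running times of the gap local routines of \cite{NanongkaiSY19}. And for the vertex case you worry about reproving a deficiency decomposition by ``repeated peeling''; the paper sidesteps this entirely by citing Corollary~17 of \cite{OrensteinR11} for a family of pairwise-independent separation triples with total deficiency $>\epsilon m$, together with Lemma~7 of \cite{FrankJ99} to get pairwise-disjoint small sides — this is both simpler and exactly what the seed-sampling argument needs. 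Finally, note that the paper samples seed \emph{vertices} uniformly in the unbounded-degree model (edge sampling is reserved for the bounded-degree model, where $d$-regularity makes it implementable); your volume-weighted seed sampling is not obviously realizable with few queries in the unbounded-degree incident-list model.
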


For bounded-degree model, we assume that $d$ is known in the beginning. 
\begin{theorem} \label{thm:main-property-testing2} 
For bounded-degree model, there is a property testing algorithm for $k$-edge ($k$-vertex where $k < n/4$) connectivity with correct probability at least $2/3$ that uses
$\ot(k/\epsilon)$ queries (same for $k$-vertex) and runs in
$\ot(k/\epsilon^{8/3})$ time  ($\ot(k/\epsilon^{1.5}
) $ time for $k$-vertex).  If $G$ is simple, then
the same algorithm for testing $k$-edge-connectivity queries at most
 $\ot(\min\{ k/\epsilon, 1/\epsilon^2 \})$. 
\end{theorem}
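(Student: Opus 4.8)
The plan is to convert the fast local algorithms of \Cref{cor:localEC_approx_new,cor:localEC_exact_new,cor:localVC_approx} into one‑sided‑error property testers by random seeding, in the spirit of Goldreich--Ron and the independent observation of Forster--Yang~\cite{ForsterY19}, but replacing their local subroutine with ours (this substitution is exactly what turns a $\poly(k)$ dependence with a large exponent into one near‑linear in $k$). I describe the $k$‑edge‑connectivity tester; the $k$‑vertex version is identical once one passes to the split graph of \Cref{sec:localVC}, at an $O(1)$ multiplicative cost. \emph{The tester.} First normalize the degree bound: a $k$‑edge‑connected graph has minimum degree at least $k$, so if $d<k(1-2\epsilon)$ then \emph{every} graph of maximum degree $d$ requires at least $n(k-d)/2\ge\epsilon nd$ edge insertions merely to reach minimum degree $k$, hence is $\epsilon$‑far, and the tester rejects with no queries; a dual argument lets one also assume $d=O(k/\epsilon)$, so we may take $d=\Theta(k)$ up to an $O(1/\epsilon)$ slack. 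Then, for each volume scale $\nu\in\{2k,4k,8k,\dots\}$ up to the cap $O(k/\epsilon)$, draw $\tilde O(1/\epsilon)$ random seeds $x$ and run $\localEC(x,\nu,k,\gap)$ with $\gap=\lfloor\epsilon k\rfloor$ (\Cref{cor:localEC_approx_new}); whenever a run returns a set $S$, query the edges leaving $S$, and if $|E(S,V\setminus S)|\in[k,(1+\epsilon)k)$ compute an exact minimum cut inside $G[S]$ (affordable since $\vol(S)=O(\nu/\epsilon)$) to replace $S$. Reject iff this produces a genuine cut of size $<k$; otherwise accept.

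\emph{Correctness.} Soundness is one‑sided and immediate: a $k$‑edge‑ (resp. vertex‑) connected graph has no cut of size $<k$, so the tester never produces a valid witness and always accepts. The substance is completeness, which rests on a structural lemma: if $G$ has maximum degree $\le d$ and is $\epsilon$‑far from $k$‑edge‑connectivity, then there are pairwise‑disjoint vertex sets $L_1,\dots,L_t$ with $t=\Omega(\epsilon dn/k)$ and $|E_G(L_i,V\setminus L_i)|<k$ for every $i$. I would prove this by taking a maximal \emph{disjoint} family of sets of edge‑boundary $<k$ and showing that inserting fewer than $k$ edges at each $L_i$ and then applying a splitting‑off (Mader‑type) argument to destroy any remaining small cut makes $G$ $k$‑edge‑connected; the edit budget then forces $\epsilon dn\le O(tk)$. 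Since the $L_i$ are disjoint, $\sum_i\vol(L_i)\le 2m=nd$, so at least half of them (call them \emph{good}) have volume $O(k/\epsilon)$, and a dyadic pigeonhole isolates one scale $\nu^*=O(k/\epsilon)$ carrying $\tilde\Omega(\epsilon dn/k)$ of the good sets with volume in $[\nu^*/2,\nu^*]$.

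\emph{Finishing and the simple‑graph case.} A uniformly random seed (or a degree‑weighted one obtained by $O(1)$‑query rejection sampling) lands in one of those $L_i$ with probability $\tilde\Omega(\epsilon)$ once $d=\Theta(k)$ is used, so among the $\tilde O(1/\epsilon)$ seeds tried at scale $\nu^*$ one does so with probability $\ge 2/3$; conditioned on this, \Cref{cor:localEC_approx_new} (for the vertex test, combined with \Cref{lem:soundness} and \Cref{cor:localVC_approx}) returns, with probability $\ge 3/4$, a cut of volume $O(\nu^*/\epsilon)$ and size $<(1+\epsilon)k$ that contains a genuine cut of size $<k$, which the min‑cut post‑processing extracts, so the tester rejects. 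Counting $\tilde O(1/\epsilon)$ runs of query‑cost $O(\nu/\epsilon)$ per scale over the $O(\log(k/\epsilon))$ scales and invoking the degree normalization to control the number of scales and the hit probability, one checks the total telescopes to $\tilde O(k/\epsilon)$ queries; the running time follows from the $O(\nu k/\epsilon)$‑time guarantee of $\localEC$ plus the cheap exact min‑cut step. For simple graphs one adds the elementary observation that any set $L$ with $|E(L,V\setminus L)|<k$ has $|L|=1$ or $|L|>k/2$ (otherwise $|E(L,V\setminus L)|\ge|L|(k-|L|+1)\ge k$); the large sets have volume $\Omega(k^2)$ and, being disjoint, number only $O(n/k)$ and account for $O(n)$ edits, so once $k=\Omega(1/\epsilon)$ an $\epsilon$‑far simple graph has $\Omega(\epsilon n)$ vertices of degree $<k$, each a witness found with a single query, giving the $\tilde O(\min\{k/\epsilon,1/\epsilon^2\})$ bound after taking the better of the two tests.

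\emph{Main obstacle.} The crux is the structural lemma together with the matching choice of volume threshold: the $\Omega(\epsilon dn/k)$ lower bound on the number of disjoint small cuts must be robust against the possibility that a few inserted edges simultaneously repair many interleaving cuts — which is precisely what disjointness of the chosen family and the splitting‑off argument rule out. A secondary point is reconciling the $(1+\epsilon)$‑approximate size guarantee of the fast local algorithm with the exact cut‑size requirement of the test, which the small‑volume exact‑min‑cut post‑processing resolves; using instead the exact local algorithm of \Cref{cor:localEC_exact_new} sidesteps that issue at the cost of an extra $\poly(k,1/\epsilon)$ factor, which is why approximate local algorithms are what yield the near‑linear‑in‑$k$ bounds.
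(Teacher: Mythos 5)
Your tester would not achieve the stated query bound: as written, each volume scale $\nu$ up to $O(k/\epsilon)$ receives $\tilde O(1/\epsilon)$ seeds, and each call to $\localEC(x,\nu,k,\lfloor\epsilon k\rfloor)$ costs $O(\nu k/\gap)=O(\nu/\epsilon)$ queries, so the dominant scale alone already contributes $\tilde O(1/\epsilon)\cdot O((k/\epsilon)/\epsilon)=\tilde O(k/\epsilon^3)$, off by a factor of $1/\epsilon^2$ from the claimed $\tilde O(k/\epsilon)$. The paper's argument avoids this by exploiting two levers you drop. First, it uses the Orenstein--Ron \emph{weighted} characterization (\Cref{thm:eps-far-edge}): an $\epsilon$-far graph has disjoint sets with $\sum_i(k-d^{\out}(X_i))>\epsilon m$, and after bucketing into classes $\cC_i$ where $k-d^{\out}(X)\in[2^i,2^{i+1})$, it runs GapLocalEC with deficiency-matched gap $\gap=2^i-1$. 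A larger deficiency permits a larger gap and hence a cheaper local call ($\ot(\nu k/\gap)$), which your fixed $\gap=\lfloor\epsilon k\rfloor$ cannot exploit. Your unweighted structural lemma, counting sets with boundary $<k$, only gives $t\gtrsim\epsilon m/k$ and throws the deficiency information away; it is strictly weaker than what the bound needs. Second, the paper buckets \emph{jointly} by deficiency level $i$ and volume level $j$, proves (\Cref{lem:bigvol-small}) that some class $\cC_{i,\textsmall,j}$ carries total volume $\gtrsim\epsilon m\,2^{j-i}/\mathrm{polylog}$, and then samples \emph{edges} (equivalently, degree-weighted vertices in the $d$-regular padding) at a rate $\tilde\Theta(1/(\epsilon 2^{j-i}))$ — fewer samples for scales with larger volume-to-deficiency ratio — so the per-scale product is $\tilde\Theta(1/(\epsilon 2^{j-i}))\cdot\ot(2^{j-i}k)=\ot(k/\epsilon)$. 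Your scheme draws the same $\tilde O(1/\epsilon)$ uniformly-random seeds at every scale, so there is no such cancellation, and the ``telescoping'' you invoke does not happen.

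Two smaller issues. Your degree normalization argues one may assume $d=\Theta(k)$; the direction $d\ge k(1-2\epsilon)$ is fine (the bound $d$ is given), but there is no way to force $d=O(k/\epsilon)$ from the input alone, and indeed the paper makes no such assumption — it only uses $m\ge nk/4$, established by sampling vertices and rejecting on low degree. And the Mader-type splitting-off sketch for the structural lemma is not worked out: the paper simply cites the Orenstein--Ron characterization rather than reproving it, and a directed splitting-off argument would require substantial care (Mader's theorem is for undirected graphs). Finally, your exact-min-cut post-processing for the $(1+\epsilon)$-approximate output is unnecessary in the paper's route: the gap formulation \Cref{lem:gaplocalec} already promises a genuine cut of size $<k$ whenever a set of deficiency $\ge\gap$ exists, aligning exactly with the bucketing by $i$. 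That is why the gap version, not \Cref{cor:localEC_approx_new}, is the right local subroutine for property testing.
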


 We prove \Cref{thm:main-property-testing} using properties of
 $\epsilon$-far from being $k$-edge/vertex connected from
 \cite{OrensteinR11}  and \cite{FrankJ99} along with a variant of
approximate LocalEC  in \Cref{sec:test-k-edge}, and approximate LocalVC in \Cref{sec:test-k-vertex}. 

\subsection{Testing $k$-Edge-Connectivity: Unbounded-Degree Model} \label{sec:test-k-edge}
In this section, we prove \Cref{thm:main-property-testing} for testing
$k$-edge-connectivity. The key tool for our property testing
algorithm is approximate local edge connectivity in a suitable form
for the application to property testing.  We can derive the following
gap version of LocalEC in \cite{NanongkaiSY19} by essentially setting
$\epsilon = \gap/k$.  

\begin{lemma} [Implicit in \cite{NanongkaiSY19} ] \label{lem:gaplocalec}  
There is a randomized (Monte Carlo) algorithm that takes as input a
vertex $x \in V$ of an $n$-vertex $m$-edge directed graph $G = (V,E)$
represented as incidence lists, a volume parameter $\nu$, a cut-size
parameter $k \geq 1$, and ``gap'' parameter $\gap \in (0,k)$ where
$\nu < \gap\cdot m/(8k)$, queries at most $\ot(\nu k/ \gap)$ edges,
 runs in $\ot((\nu/ \gap)^{5/3}k)$ time, and 
\begin{itemize} [noitemsep]
\item if there is a vertex-set $S$ such that
 $S \ni x, \vol^{\textout}(S) \leq \nu$, and $|E(S,V-S)| < k - \gap$,
 then it returns an edge-cut of size less than $k $,
\item if there is no vertex-set $S$ such that $S \ni x,
  \vol^{\textout}(S) \leq \nu$, and $|E(S,V-S)| < k$, then it returns
  the symbol $\perp$. 
\end{itemize}
\end{lemma}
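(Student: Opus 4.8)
The plan is to derive \Cref{lem:gaplocalec} by reparametrizing the approximate local \emph{edge}-connectivity routine that is built into the LocalVC algorithm of Nanongkai et al.~\cite{NanongkaiSY19}. Their algorithm solves approximate LocalVC by the vertex-splitting reduction recalled in \Cref{sec:reduc} layered on top of a local edge-connectivity subroutine, so this subroutine is available implicitly; after isolating it one obtains a routine that, on input a seed $x$, a volume bound $\nu$, a cut parameter $\kappa$, and an accuracy $\epsilon$, reads $\ot(\nu/\epsilon)$ edges (its exploration is confined to a region of volume $O(\nu/\epsilon)$), runs in $\ot((\nu/\epsilon)^{5/3}/\kappa^{2/3})$ time (the $5/3$ reflecting the local blocking-flow subroutine inside it), and either returns an edge cut of size $<\lfloor(1+\epsilon)\kappa\rfloor$ or returns $\bot$; moreover it returns $\bot$ whenever there is no $S\ni x$ with $\vol^{\out}(S)\le\nu$ and $|E(S,V-S)|<\kappa$, and this direction is a genuine certificate because $\bot$ is emitted exactly when the routine successfully routes $\kappa$ units of local flow, which is impossible in the presence of such an $S$. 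The precondition required is that the explored region cannot exceed a constant fraction of $E$, i.e.\ $\nu/\epsilon<m/c$ for an absolute constant $c$. The first, and most delicate, step of the proof is to make this extraction rigorous against the write-up of \cite{NanongkaiSY19}.

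Given that routine, I would prove the lemma by calling it with cut parameter $\kappa:=k-\gap$ and accuracy $\epsilon:=\gap/(k-\gap)$; this is the substitution $\epsilon=\gap/k$ announced in the statement, with the denominator $k$ harmlessly replaced by $k-\gap$, and $\epsilon$ is allowed to exceed $1$ when $\gap>k/2$ since a larger $\epsilon$ only weakens the approximation and speeds the routine up. The two required properties then reduce to one-line interval arithmetic. For the first bullet, $(1+\epsilon)\kappa=(k-\gap)+\gap=k$, so every returned cut has size $<\lfloor k\rfloor=k$; and if there is $S\ni x$ with $\vol^{\out}(S)\le\nu$ and $|E(S,V-S)|<k-\gap=\kappa$, the routine's guarantee produces exactly such a cut. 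For the second bullet, if there is no $S\ni x$ with $\vol^{\out}(S)\le\nu$ and $|E(S,V-S)|<k$, then a cut of size $<k-\gap$ being in particular a cut of size $<k$, there is also no such $S$ with $|E(S,V-S)|<\kappa$, so the routine outputs $\bot$.

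For the resource bounds I would simply substitute the parameters. The number of edges read is $\ot(\nu/\epsilon)=\ot(\nu(k-\gap)/\gap)=\ot(\nu k/\gap)$ since $k-\gap<k$, and the running time is $\ot((\nu/\epsilon)^{5/3}/\kappa^{2/3})=\ot((\nu(k-\gap)/\gap)^{5/3}(k-\gap)^{-2/3})=\ot(\nu^{5/3}(k-\gap)/\gap^{5/3})=\ot((\nu/\gap)^{5/3}k)$, again using $k-\gap<k$. Finally the precondition $\nu/\epsilon<m/c$ of the underlying routine becomes $\nu(k-\gap)/\gap<m/c$, which is implied by the hypothesis $\nu<\gap m/(8k)$ of the lemma once $c$ is fixed (using $k-\gap<k$ a third time).

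The genuinely hard part is the extraction in the first paragraph: pinning down precisely what \cite{NanongkaiSY19} proves about its local edge-connectivity subroutine — specifically that its $\bot$ output is an \emph{unconditional} certificate (so the soundness clause is a hard guarantee, not a one-sided-error statement, and whatever success probability the completeness side carries transfers verbatim), that the reported cut obeys the \emph{strict} bound $\lfloor(1+\epsilon)\kappa\rfloor$ with the stated floor, and that the exploration is genuinely $\ot(\nu/\epsilon)$ edges with the $\ot((\nu/\epsilon)^{5/3}/\kappa^{2/3})$ running time under that confinement and for $\epsilon$ possibly larger than $1$. Everything after that is routine: the interval arithmetic converting the multiplicative slack $\epsilon$ into the additive slack $\gap$, and bookkeeping of the floor and of the monotonicity ``$|E(S,V-S)|<k-\gap \Rightarrow |E(S,V-S)|<k$'' used in both directions.
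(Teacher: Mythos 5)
Your approach---reparametrizing the approximate local edge-connectivity routine implicit in \cite{NanongkaiSY19} with a shifted cut parameter $\kappa := k-\gap$ and accuracy $\epsilon := \gap/(k-\gap)$, then doing the substitution arithmetic---matches the paper's one-line remark that the lemma follows ``by essentially setting $\epsilon = \gap/k$.'' The paper gives no actual proof of this lemma, so your writeup is more precise than the source; in particular, your shift to $\kappa = k-\gap$ is necessary (calling with cut parameter $k$ and $\epsilon = \gap/k$ would only yield a returned cut of size $< k+\gap$, not $<k$ as the lemma requires, and would not match the first bullet's completeness hypothesis $|E(S,V-S)| < k-\gap$). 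The interval arithmetic for the cut size, $\ot(\nu k/\gap)$ edge queries, $\ot((\nu/\gap)^{5/3}k)$ time, and the precondition translation is carried out correctly.

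Two concerns are worth flagging. First, your dismissal of the $\epsilon > 1$ regime (which occurs when $\gap > k/2$) as harmless is not self-justifying: the paper's own LocalEC statements explicitly restrict $\epsilon \in (0,1]$, and steps in their correctness argument such as $(1-\epsilon/4)(1+\epsilon)\ge 1+\epsilon/2$ genuinely rely on $\epsilon\le 1$; if one instead caps $\epsilon$ at $1$ when $\gap > k/2$, the running time becomes $\ot(\nu^{5/3}/(k-\gap)^{2/3})$, which for $\gap$ close to $k$ can exceed the lemma's claimed $\ot((\nu/\gap)^{5/3}k) = \ot(\nu^{5/3}k/\gap^{5/3})$ (e.g.\ at $\gap = k-1$ the first is $\approx \nu^{5/3}$ and the second $\approx \nu^{5/3}/k^{2/3}$). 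So you either must verify that \cite{NanongkaiSY19}'s routine and its stated bounds extend to $\epsilon > 1$, or handle the large-$\gap$ case separately. Second, as you yourself note, the ``extraction'' step---confirming that \cite{NanongkaiSY19}'s subroutine has precisely the interface you posit (hard $\bot$ soundness, $\ot(\nu/\epsilon)$ edge accesses, $\ot((\nu/\epsilon)^{5/3}/\kappa^{2/3})$ time)---is where the substance lies and remains unverified against the cited paper; since the statement is tagged ``Implicit,'' the paper's authors apparently believe this, but your proof inherits the same reliance.
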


We present an algorithm for testing $k$-edge-connectivity assuming
\Cref{lem:gaplocalec}.
 
\paragraph{Algorithm.}  
\begin{enumerate}
\item Sample $\Theta (\frac{1}{\epsilon})$ vertices uniformly.
\item If any of the sampled vertex has degree less than $k$, returns the
  corresponding edge-cut. 
\item Sample $\Theta( \frac{ k \log k} { \epsilon \bar{d} })$ vertices
  uniformly  (if $\bar{d}$ is unknown, then we sample $\Theta( \frac{\log
    k}{\epsilon})$ instead).
\item For each sampled vertex $x$, and for $i \in \{0, 1 ,\ldots,
   \lfloor \log_2k  \rfloor \}$, 
 \begin{enumerate}
 \item let $\nu = 2^{i+2} \epsilon^{-1} \lfloor \log_2k \rfloor,$ and $\gap = 2^{i}-1$. 
 \item run GapLocalEC$(x, \nu , k, \gap )$ on both $G$ and $G^R$ where
   $G^R$ is $G$ with reversed edges.  
 \end{enumerate} 
\item Return an edge-cut of size less than $ k $ if any execution of GapLocalEC returns a cut. Otherwise, declare that $G$ is $k$-edge-connected.
\end{enumerate}

\paragraph{Query and Time Complexity.} We first show that the number of
edge queries is at most $\ot(k^2/(\epsilon^2 \bar{d}))$. For each
sampled vertex $x$ and $i \in \{0,1,  \ldots, \lfloor \log_2k \rfloor \}$, by
\Cref{lem:gaplocalec}, GapLocalEC queries $\ot( \nu k/\gap) =
\ot(k/\epsilon)$ edges. The result follows from we repeat $\log_2k$
times per sample, and we sample $O(k \log k /(\epsilon \bar{d}))$
times. Next, we show that the running time is $\ot(k^2/(\epsilon^{11/3}
\bar{d}))$. This follows from the same argument, but we use the
running time for GapLocalEC instead of edge-query complexity.  If
$\bar{d}$ is unknown, we can remove the term $k/\bar{d}$ from above
since we sample $\Theta( \frac{\log  k}{\epsilon})$ vertices instead.   

\paragraph{Correctness.} If $G$ is $k$-edge-connected, 
the algorithm above never returns an edge-cut. We show that if $G$ is
$\epsilon$-far from being $k$-edge-connected, then the algorithm outputs an
edge-cut of size less $ k$ with constant probability. We start with simple
observation. 
\begin{lemma}  \label{lem:mgeqnk}
If $m < nk/4$, then with constant probability, the algorithm outputs
an edge-cut of size less than $k$ at step 2. 
\end{lemma}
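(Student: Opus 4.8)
The plan is to argue by a simple counting/averaging argument that if $m < nk/4$ then a constant fraction of the vertices have degree less than $k$, so that sampling $\Theta(1/\epsilon)$ vertices in step 1 hits such a vertex with constant probability, at which point step 2 returns the incident edges as an edge-cut. First I would observe that since $\bar d = m/n < k/4$, the average degree is less than $k/4$. Now suppose, for contradiction, that at most $n/4$ vertices have (out-)degree less than $k$; then at least $3n/4$ vertices have degree at least $k$, contributing at least $3nk/4$ to the total degree sum $\sum_v \deg^{\out}(v) = m$ (in the directed case one sums out-degrees; the sum of out-degrees equals $m$). This gives $m \geq 3nk/4 > nk/4$, contradicting the hypothesis. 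Hence strictly more than $n/4$ — in particular at least $n/4$ — of the vertices have degree less than $k$.

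Next I would translate this into a success probability for the sampling in step 1. Since at least a $1/4$ fraction of vertices have degree less than $k$, each of the $\Theta(1/\epsilon)$ uniformly sampled vertices independently has probability at least $1/4$ of being such a vertex; actually even a single sampled vertex already suffices to get constant success probability, since one sample is a ``low-degree'' vertex with probability at least $1/4$. (Sampling $\Theta(1/\epsilon) \geq 1$ vertices only increases this.) When such a vertex $x$ is sampled, its set $N^{\out}(x)$ of out-neighbors has size $\deg^{\out}(x) < k$ and the edges $E(\{x\}, V - \{x\})$ form an edge-cut of size less than $k$ separating $x$ from the rest (assuming $\{x\} \neq V$, which holds as $n \geq 2$ whenever $G$ is $\epsilon$-far from being $k$-edge connected for any nontrivial $k$); step 2 detects exactly this and returns the cut. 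Therefore with probability at least $1/4$ — a constant — the algorithm outputs an edge-cut of size less than $k$ at step 2.

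The only mild subtlety worth checking is the bookkeeping of which degrees and volumes appear: the relevant quantity for an edge-cut around a single seed vertex $x$ is $\vol^{\out}(\{x\}) = \deg^{\out}(x) = |E(\{x\}, V-\{x\})|$, and "$m < nk/4$" is read with $m = |E|$ and $\bar d = m/n$ the average out-degree, so the averaging argument above is exactly the right one. I do not expect any real obstacle here; the statement is essentially an elementary pigeonhole observation used to dispose of the sparse case, and the constant $1/4$ in the hypothesis is chosen precisely to make the counting go through with room to spare.
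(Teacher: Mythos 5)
Your proof is correct and follows essentially the same averaging/pigeonhole argument as the paper: since $m < nk/4$ and the out-degrees sum to $m$, a constant fraction of vertices must have out-degree less than $k$, so a constant number of uniform samples finds one with constant probability, at which point step~2 outputs the cut. The paper phrases it as a direct count (at most $m/k < n/4$ vertices have degree $\geq k$) rather than by contradiction, but the content is identical.
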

\begin{proof}
Suppose $m < nk/4$. There are at most $n/2$ nodes with out-degree at least
$k$. Hence, there are at least $n/2$ nodes of degree
less than $k$. In this case, we can sample $O(1)$ time where each sampled node
$x$ we check $\deg^{\textout}(x)  < k$ to get $k$-edge-cut with
constant probability.
\end{proof}
From now  we assume that 
\begin{align} \label{eq:mgeqnk4}
m \geq nk/4.
\end{align}

Next, we state important properties when $G$ is $\epsilon$-far from
being $k$-edge-connected. For any non-empty subset $X \subset V$, let
$d^{\textout}(X) = |E(X, V - X)|$, and $d^{\textin}(X) = |E(V-X,X)|$.  
\begin{theorem} [\cite{OrensteinR11} Corollary
  8]\label{thm:eps-far-edge}
A directed graph $G = (V,E)$ is $\epsilon$-far from being
$k$-edge-connected (for $k \geq 1$) if and only if there exists a
family of disjoint subsets $\{X_1, \ldots, X_t\}$ of vertices for
which either $\sum_{i}(k - d^{\textout}(X_i)) > \epsilon m$ or $
\sum_{i}(k - d^{\textin}(X_i)) > \epsilon m$. 
\end{theorem}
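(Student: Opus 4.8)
The plan is to obtain this characterization from the min--max theorem for \emph{directed edge-connectivity augmentation} (Frank \cite{FrankJ99}). The first, routine, step is to note that edge deletions never help: if deleting a set $D$ and inserting a set $F$ turns $G$ into a $k$-edge-connected graph, then $G\cup F$ is already $k$-edge-connected, since $d^{\textout}_{G\cup F}(X)\ge d^{\textout}_{(G\setminus D)\cup F}(X)\ge k$ for every nonempty proper $X\subsetneq V$. Hence the edge-modification distance of $G$ from $k$-edge-connectivity equals $\gamma_k(G)$, the minimum number of edges one must \emph{add} to make $G$ $k$-edge-connected, and it is enough to prove that $\gamma_k(G)>\epsilon m$ iff a family as in the statement exists. (Throughout, the sets $X_i$ are understood to be nonempty and proper; otherwise $d^{\textout}(\emptyset)=0$ would make the condition vacuous. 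Also $d^{\textout}(X)=d^{\textin}(V\setminus X)$, so $k$-edge-connectivity is equivalently ``$d^{\textout}(X)\ge k$ for all nonempty proper $X$''; the two-sided form of the characterization arises because a new edge $(u,v)$ raises $d^{\textout}(\cdot)$ only on sets containing $u$ and $d^{\textin}(\cdot)$ only on sets containing $v$.)

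For the ``if'' direction I would argue directly. Suppose $X_1,\dots,X_t$ are pairwise disjoint with $\sum_i\bigl(k-d^{\textout}(X_i)\bigr)>\epsilon m$; discarding the sets with $d^{\textout}(X_i)\ge k$ only increases the sum, so we may assume every term is positive. Let $F$ be any edge set with $G\cup F$ $k$-edge-connected. For each $i$ we have $d^{\textout}_{G\cup F}(X_i)\ge k$, so $F$ must contain at least $k-d^{\textout}(X_i)$ edges whose tail lies in $X_i$ and whose head lies outside $X_i$. Since the $X_i$ are disjoint, an edge's tail lies in at most one $X_i$, so these demands fall on disjoint subsets of $F$; therefore $|F|\ge\sum_i\bigl(k-d^{\textout}(X_i)\bigr)>\epsilon m$. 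The $d^{\textin}$ case is the mirror image (count heads instead of tails). In either case $\gamma_k(G)>\epsilon m$, so $G$ is $\epsilon$-far.

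For the ``only if'' direction I would invoke Frank's theorem rather than reprove it: for $k\ge 1$,
\[
\gamma_k(G)=\max\Bigl\{\ \max_{\mathcal X}\ \textstyle\sum_{X\in\mathcal X}\bigl(k-d^{\textin}(X)\bigr),\ \ \max_{\mathcal Y}\ \textstyle\sum_{Y\in\mathcal Y}\bigl(k-d^{\textout}(Y)\bigr)\ \Bigr\},
\]
where $\mathcal X,\mathcal Y$ range over families of pairwise disjoint nonempty proper subsets of $V$. The ``$\ge$'' half of this identity is exactly the counting argument of the previous paragraph; the substance---and the main obstacle---is the ``$\le$'' half, i.e.\ constructing an augmenting edge set of that size, which is the genuinely hard part and which I would cite from \cite{FrankJ99}. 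Granting the formula, $G$ is $\epsilon$-far $\iff\gamma_k(G)>\epsilon m\iff$ the larger of the two maxima exceeds $\epsilon m\iff$ some family of disjoint nonempty proper subsets $\{X_i\}$ satisfies $\sum_i(k-d^{\textin}(X_i))>\epsilon m$ or $\sum_i(k-d^{\textout}(X_i))>\epsilon m$, which is the claim. The only remaining points are cosmetic: reconciling the strict-versus-nonstrict inequality in the definition of ``$\epsilon$-far'' (handled by the convention in \cite{OrensteinR11}) and observing that $\gamma_k(G)$ is finite whenever $n\ge 2$, since one may freely add parallel edges.
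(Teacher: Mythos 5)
The paper does not prove this statement: it is cited verbatim as Corollary~8 of \cite{OrensteinR11} (see the bracketed attribution in the theorem header), so there is no in-paper proof to compare against. Your reconstruction is correct and, as far as I can tell, retraces the route of the original: reduce the edit distance to the pure augmentation number $\gamma_k(G)$ via the observation that deletions never help, prove the easy ``$\ge$'' inequality by the disjoint-tails (resp.\ disjoint-heads) counting argument, and invoke Frank's min--max theorem for directed $k$-edge-connectivity augmentation for the hard ``$\le$'' direction. The counting step is sound: after discarding sets with $d^{\out}(X_i)\ge k$, any augmenting set $F$ must contain, for each $i$, at least $k-d^{\out}(X_i)$ edges leaving $X_i$, and since tails lie in at most one $X_i$ these contributions are disjoint, giving $|F|\ge\sum_i(k-d^{\out}(X_i))$. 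Your finiteness remark and the parenthetical about nonempty proper subsets (so that $d^{\out}(\emptyset)=0$ and $d^{\out}(V)=0$ do not make the condition vacuous) are both needed and correctly handled. The one point you flag as cosmetic is the only genuine loose end: the paper's definition of $\epsilon$-far uses ``at least $\epsilon m$'' while the theorem's characterization uses a strict ``$>\epsilon m$,'' so as literally stated the biconditional fails at the boundary $\gamma_k(G)=\epsilon m$. This is a quirk inherited from \cite{OrensteinR11} rather than an error in your argument, but it is worth stating the resolution explicitly (either adopt their strict-inequality convention for $\epsilon$-far, or replace ``$>$'' with ``$\ge$'' in the characterization) rather than waving at it.
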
 %
Let $\mathcal{F} :=  \{X_1, \ldots, X_t\}$ as in
\Cref{thm:eps-far-edge}.  We assume without loss of generality that
\begin{align} \label{eq:eps-far-edge} 
\sum_{i}(k - d^{\textout}(X_i)) > \epsilon m.
\end{align}  Let $\mathcal{C}_{-1} = \{ X \in \mathcal{F} \colon k \leq d^{\textout}(X) \}$. For $i
\in \{0, 1,\ldots, \lfloor \log_2k \rfloor \}$, let  $\mathcal{C}_i = \{ X \in \mathcal{F}
\colon k - d^{\textout}(X) \in [2^i, 2^{i+1}) \}$. Note that 
\begin{align} \label{eq:2ileqk}
   2^i \leq k, \text{ for any } i \in \{0, \ldots, \logk \}
\end{align}and
\begin{align} \label{eq:partitionF-edge}
 \mathcal{F} = \bigsqcup_{i = -1}^{ \lfloor \log_2k \rfloor} \mathcal{C}_i
\end{align}
where $\bigsqcup$ is the disjoint union.  Let $\cC_{i, \textbig} = \{
X \in \cC_i \colon \vol^{\textout}(X) \geq  2^{i+2} \epsilon^{-1} (\logk+1) \}$, and $\cC_{i, \textsmall} = \cC_i - \cC_{i, \textbig}$. 
The following lemma is the key for the algorithm's correctness. 
\begin{lemma} \label{lemma:many-small-edge}
There is $i$ such that $ |\cC_{i,\textsmall}| \geq \epsilon n \bar{d}
/ (4k  (\logk+1)). $ If $\bar{d}$ is unknown, we have
$|\cC_{i,\textsmall}| \geq  \epsilon n/(16(\logk+1))$ instead. 
\end{lemma}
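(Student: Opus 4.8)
The plan is to start from the structural characterization of $\epsilon$-farness (\Cref{thm:eps-far-edge}) in the form \Cref{eq:eps-far-edge}, distribute the deficit $\sum_i (k-d^{\textout}(X_i))$ across the logarithmic buckets $\mathcal{C}_0,\dots,\mathcal{C}_{\logk}$, throw away the ``big'' sets using disjointness, and then pull out one heavy bucket by averaging.

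First I would note that $\mathcal{C}_{-1}$ is useless: every $X\in\mathcal{C}_{-1}$ has $k-d^{\textout}(X)\le 0$, so by \Cref{eq:eps-far-edge} together with the partition \Cref{eq:partitionF-edge},
\[
\sum_{i=0}^{\logk}\ \sum_{X\in\mathcal{C}_i}\bigl(k-d^{\textout}(X)\bigr)\ \ge\ \sum_{X\in\mathcal{F}}\bigl(k-d^{\textout}(X)\bigr)\ >\ \epsilon m .
\]
Since each $X\in\mathcal{C}_i$ satisfies $k-d^{\textout}(X)<2^{i+1}$ by definition of $\mathcal{C}_i$, this yields $\sum_{i=0}^{\logk}|\mathcal{C}_i|\,2^{i+1}>\epsilon m$. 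Next I would bound the ``big'' contribution: for $X\in\mathcal{C}_{i,\textbig}$ we have $\vol^{\textout}(X)\ge 2^{i+2}\epsilon^{-1}(\logk+1)$, i.e. $2^{i+1}\le \frac{\epsilon}{2(\logk+1)}\vol^{\textout}(X)$; and since the sets of $\mathcal{F}$ are pairwise disjoint, $\sum_i\sum_{X\in\mathcal{C}_{i,\textbig}}\vol^{\textout}(X)\le \vol^{\textout}(V)=m$. Hence
\[
\sum_{i=0}^{\logk}|\mathcal{C}_{i,\textbig}|\,2^{i+1}\ \le\ \frac{\epsilon}{2(\logk+1)}\sum_{i=0}^{\logk}\sum_{X\in\mathcal{C}_{i,\textbig}}\vol^{\textout}(X)\ \le\ \frac{\epsilon m}{2(\logk+1)}\ \le\ \frac{\epsilon m}{2}.
\]

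Subtracting this from the previous inequality gives $\sum_{i=0}^{\logk}|\mathcal{C}_{i,\textsmall}|\,2^{i+1}>\epsilon m/2$, and averaging over the $\logk+1$ indices produces some $i$ with $|\mathcal{C}_{i,\textsmall}|\,2^{i+1}>\frac{\epsilon m}{2(\logk+1)}$, i.e. $|\mathcal{C}_{i,\textsmall}|>\frac{\epsilon m}{2^{i+2}(\logk+1)}$. Finally I would clean up the constants: by \Cref{eq:2ileqk} we have $2^{i+2}\le 4k$, so $|\mathcal{C}_{i,\textsmall}|>\frac{\epsilon m}{4k(\logk+1)}=\frac{\epsilon n\bar d}{4k(\logk+1)}$ using $\bar d=m/n$, which is the first claim. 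For the case where $\bar d$ is unknown the threshold defining $\mathcal{C}_{i,\textbig}$ is the same, so the same computation applies, and I would instead invoke \Cref{eq:mgeqnk4} ($m\ge nk/4$) to conclude $|\mathcal{C}_{i,\textsmall}|>\frac{\epsilon m}{4k(\logk+1)}\ge\frac{\epsilon n}{16(\logk+1)}$.

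I do not expect a genuine obstacle here; the only things to watch are (a) the deliberate factor-of-two slack between the $2^{i+1}$ that upper-bounds $k-d^{\textout}(X)$ on $\mathcal{C}_i$ and the $2^{i+2}$ in the volume threshold for $\mathcal{C}_{i,\textbig}$ — this is exactly what makes the ``big'' mass at most half the total deficit — and (b) using $\logk+1\ge 1$ so that $\frac{\epsilon m}{2(\logk+1)}\le\frac{\epsilon m}{2}$.
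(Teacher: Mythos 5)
Your proposal is correct and follows essentially the same route as the paper: bucket the deficit $\sum_X (k-d^{\textout}(X))$ into dyadic scales $\mathcal{C}_0,\dots,\mathcal{C}_{\logk}$, use disjointness and the volume threshold to show the big sets absorb at most half the mass, and then pick a heavy bucket by a pigeonhole over the $\logk+1$ scales. The only (immaterial) difference is the order of operations: the paper first averages to pick a heavy bucket $i$ and then shows $|\cC_{i,\textbig}| < |\cC_i|/2$ within that bucket, whereas you subtract the big contribution globally first and then average.
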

We show that \Cref{lemma:many-small-edge} implies the correctness of
the algorithm. By sampling $O(k \log k/(\epsilon \bar{d}))$ many
vertices (or $O(\log k/\epsilon)$ if $\bar{d}$ is unknown), we get an event where a sampled vertex belongs to some vertex set $X \in
\cC_{i,\textsmall}$ with constant probability (since
$\cC_{i,\textsmall}$ contains disjoint sets). We run GapLocalEC for
every $i \in \{0, 1,\ldots, \lfloor \log_2k \rfloor\}$ using $\nu =
2^{i+2}\epsilon^{-1}\logk$, and $\gap = 2^i-1$; also, there exists $i$ such that
$|\cC_{i,\textsmall}| \geq \epsilon  n \bar{d} / (4k (\logk+1))$ (or $(\epsilon n/(16(\logk+1)))$
if $\bar{d}$ is unknown)  by \Cref{lemma:many-small-edge}. Therefore,  by \Cref{lem:gaplocalec}
GapLocalEC outputs an edge-cut of size less than $k$ with constant probability. 

\begin{proof}[Proof of \Cref{lemma:many-small-edge}] 
We show that  there is $i > 0$ such that $|\cC_i| > \epsilon m /
(2^i (\logk+1) )$. First, we show
that there is $i > 0$ such that 
\begin{align}\label{eq:property:one}
\sum_{ X \in \mathcal{C}_i} (k - d^{\textout}(X)) > \epsilon m/
(  \lfloor \log_2k \rfloor+1).
\end{align}
 Suppose otherwise that for every $i$,
$\sum_{ X \in \mathcal{C}_i } (k - d^{\textout}(X)) \leq \epsilon m/
(\lfloor \log_2k \rfloor+1).$
We have $\sum_{ X \in \mathcal{F}}(k- d^{\textout}(X) )
\stackrel{(\ref{eq:partitionF-edge})} = \sum_{i=-1}^{\lfloor \log_2k
  \rfloor }  \sum_{ X \in \mathcal{C}_i} (k - d^{\textout}(X))  \leq
\sum_{i=0}^{\lfloor \log_2k \rfloor}
\sum_{ X \in \mathcal{C}_i} (k - d^{\textout}(X)) \leq \epsilon m$. However, this
contradicts  \Cref{eq:eps-far-edge} as in
\Cref{thm:eps-far-edge}. Second, we claim that for any
$i$,  $$|\mathcal{C}_i| 2^{i+1} \geq \sum_{ X \in \mathcal{C}_i} (k - d^{\textout}(X)).$$
This follows trivially from that each element $X$ in the set
$\mathcal{C}_i$, $k - d^{\textout}(X) \leq 2^{i+1}$. For $i$ that satisfies \Cref{eq:property:one} we have 
\begin{align}\label{eq:property:two}
|\mathcal{C}_i|  \geq
\sum_{ X \in \mathcal{C}_i} (k - d^{\textout}(X)) / 2^{i+1} > \epsilon
  m/ ( 2^{i+1} (\lfloor \log_2k \rfloor+1)).
\end{align}
Recall that $\cC_{i, \textbig} = \{ X \in \cC_i \colon
\vol^{\textout}(X) \geq 2^{i+2} \epsilon^{-1} (\logk+1)  \}$, and $\cC_{i,
  \textsmall} = \cC_i - \cC_{i, \textbig}$. We show that $
|\cC_{i,\textbig}| <  |\cC_i|/2$.  
Therefore, for $i$ that satisfies \Cref{eq:property:two} we have 
$$    2|C_{i,\textbig}| \leq \sum_{X \in \cC_{i, \textbig}}
\vol^{\textout}(X)/(  2^{i+1} \epsilon^{-1} (\lfloor \log_2k \rfloor+1) )
\leq \epsilon m /(2^{i+1}(\lfloor \log_2k \rfloor+1) )
\stackrel{(\ref{eq:property:two})}  < |\cC_i|. $$
The first inequality is  because the term $\vol^{\textout}(X)/( \gamma
\epsilon^{-1} (\logk+1)) \geq 2$ for each $X \in
\cC_{i,\textbig}$, we have $ \sum_{X \in \cC_{i, \textbig}}
\vol^{\textout}(X)/( 2^{i+1} \epsilon^{-1} (\lfloor \log_2k \rfloor+1) )
\geq 2|C_{i,\textbig}|. $ The second inequality is because elements in $\cC_{i, \textbig} $
are disjoint and thus $\sum_{X \in \cC_{i, \textbig}} \vol^{\textout}(X)
\leq m$. The final inequality follows from  \Cref{eq:property:two}.

For the same $i$, since  $|\cC_{i,\textbig}| <  |\cC_i|/2$, we
have \begin{align} \label{eq:last-ineq}
|\cC_{i,\textsmall}| \geq |\cC_i|/2 \geq \epsilon m / (2^{i+2}( \lfloor
       \log_2 k\rfloor+1) ) \geq   \epsilon n \bar{d} / (4k  (\logk+1)). 
\end{align}  
The last inequality follows from $m = n\bar{d}$, and \Cref{eq:2ileqk}. If $\bar{d}$ is unknown, by \Cref{eq:mgeqnk4}, the last inequality
becomes $\epsilon m / (2^{i+2} (\logk+1)) \geq  \epsilon nk / (16k
\logk) = \epsilon n / (16(\logk+1)). $ This follows from \Cref{eq:2ileqk} and \Cref{eq:mgeqnk4}. 
\end{proof}
\paragraph{An improved bound for a simple graph.} The same algorithm gives
an improved bound when $G$ is simple.   If $\epsilon \geq 4/k$, the algorithm queries at most
$\ot(k^2/(\epsilon^2 \bar{d})) = \ot(1/(\epsilon^4 \bar{d})) $ edges
(and $\ot(1/\epsilon^3)$ edges if $\bar{d}$ is unknown). Now, we assume
$\epsilon > 4/k$, we show that there are $\Omega(\epsilon n \bar{d}/
k)$ ($\Omega(\epsilon n)$ if $\bar{d}$ is unknown) many vertices with
degree less than $k $. 

\begin{lemma} \label{lem:lots-singleton}
 If $\epsilon > 4/k$, $G$ is simple, and  $\epsilon$-far from being
 $k$-edge-connected, then there exist at least $\epsilon n/2$ vertices
 ($\epsilon \bar{d}n/(8k)$ vertices if $\bar{d}$ is unknown) with
 degree less than $k$. 
\end{lemma}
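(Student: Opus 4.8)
The plan is to combine the structural characterization of $\epsilon$-farness from \Cref{thm:eps-far-edge} with a counting argument exploiting that $G$ is simple. Recall that from \Cref{eq:eps-far-edge} we have a family $\mathcal{F} = \{X_1,\dots,X_t\}$ of disjoint vertex sets with $\sum_i (k - d^{\textout}(X_i)) > \epsilon m$, where only sets with $d^{\textout}(X_i) < k$ contribute positively. First I would discard from $\mathcal{F}$ every set $X_i$ with $d^{\textout}(X_i) \geq k$ (they only decrease the sum) and also every $X_i$ with $|X_i| \geq 2$; I will argue this second restriction is essentially free. The key point is that if $G$ is simple and $|X_i| \geq 2$, then a single vertex $v \in X_i$ of minimum out-degree within $X_i$ already has $d^{\textout}(\{v\}) = \deg^{\textout}(v)$, but more usefully, the average over $X_i$ of $k - \deg^{\textout}(v)$ relates to $k - d^{\textout}(X_i)$ only weakly. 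So instead the cleaner route is: for each $X_i$ with $|X_i| \ge 2$, every vertex $v \in X_i$ has at least $|X_i| - 1$ out-neighbors inside $X_i$ (if $X_i$ were "dense enough"), which is false in general — so the right move is to bound the total contribution of multi-vertex sets.

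The cleaner argument I would actually carry out: split $\mathcal{F}$ (after discarding $d^{\textout}\ge k$ sets) into singletons $\mathcal{F}_1$ and non-singletons $\mathcal{F}_{\geq 2}$. For $X \in \mathcal{F}_{\geq 2}$, since $G$ is simple the edges counted in $d^{\textout}(X)$ go to distinct vertices outside $X$, and the internal out-edges within $X$ number at least... — here I use that $\sum_{v \in X} \deg^{\textout}(v) = d^{\textout}(X) + (\text{internal out-edges of } G[X])$, and in a simple graph the internal out-edges are at most $|X|(|X|-1)$, which gives no cap. So the genuine bound must come from the global constraint $\sum_{X \in \mathcal{F}_{\geq 2}} \sum_{v \in X} \deg^{\textout}(v) \le m$ together with $\deg^{\textout}(v) \le n - 1$. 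Concretely, for each $X \in \mathcal{F}_{\geq 2}$ we have $k - d^{\textout}(X) \le k$, while such an $X$ "uses up" at least $|X| \ge 2$ vertices; combined with $\sum_{X} |X| \le n$ this bounds $\sum_{X \in \mathcal{F}_{\geq 2}}(k - d^{\textout}(X)) \le k \cdot |\mathcal{F}_{\geq 2}| \le kn/2$. Since $m \ge nk/4$ from \Cref{eq:mgeqnk4}, this is at most $2m$, which is too weak. The fix is to instead observe that in a simple graph $d^{\textout}(X) \ge |X| - 1$ is generally false but $\sum_{v\in X}\deg^{\textout}(v) \ge$ (number of internal edges), and more to the point $k - d^{\textout}(X) < k$ forces $d^{\textout}(X) \ge 1$, hence these sets are not too many. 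I would then note that when $\epsilon > 4/k$, we have $\epsilon m/2 > 2m/k \ge ?$, and push through: since $|\mathcal{F}_{\geq 2}| \le n/2$, the non-singleton contribution is at most $kn/2 \le 2m \le (\epsilon m / 2) \cdot (k/(\epsilon k))$... — the arithmetic needs $\epsilon m / 2 > kn/2$, i.e. $\epsilon m > kn$, which follows from $m \ge nk/4$ only if $\epsilon > 4$; that is false. So the correct split is: $kn/2 \le 2m$, and we want the singletons to contribute more than half of $\epsilon m$, so it suffices that $2m < \epsilon m / 2$, impossible, meaning multi-vertex sets genuinely must be handled by a volume argument, not cardinality.

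The argument that actually works, and which I would write: for $X \in \mathcal{F}_{\geq 2}$ in a \emph{simple} graph, $\vol^{\textout}(X) = \sum_{v \in X}\deg^{\textout}(v) \ge 2(|X|-1) \ge |X|$ (each of $|X|\ge 2$ vertices needs an out-edge somewhere, and in a simple graph two vertices in $X$ together contribute at least... ) — more robustly, $\vol^{\textout}(X) \ge |X|$ always when every vertex has out-degree $\ge 1$, but we want a factor. Instead: $k - d^{\textout}(X) \le k \le \vol^{\textout}(X) \cdot (k / \vol^{\textout}(X))$ — trivial. The honest path: the sets in $\mathcal{F}_{\geq 2}$ have $\sum_X \vol^{\textout}(X) \le m$ (disjointness in a graph where out-edges from different $X$'s are distinct), and each contributes $k - d^{\textout}(X) \le k$; but also $|X| \ge 2$ so there are at most $n/2$ of them, giving total $\le kn/2 \le 2m < \epsilon m$ only fails. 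I would therefore conclude the contribution of $\mathcal{F}_{\geq 2}$ is $\le kn/2$, and since $\epsilon m > \epsilon n k /4$ and $\epsilon > 4/k$ gives $\epsilon m > nk \cdot (\epsilon/4) > nk \cdot (1/k) = n$ — still not enough.

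\medskip

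\noindent\textbf{Revised plan (the one I would commit to writing).} The main obstacle above is that a crude cardinality bound on multi-vertex sets is insufficient; the resolution uses simplicity more carefully. For any $X \in \mathcal{F}$ with $|X| \ge 2$, in a simple graph the internal out-edges of $G[X]$ number at least $|X| - 1$ if $G[X]$ is weakly connected (which we may assume, else split $X$), so $\vol^{\textout}(X) \ge d^{\textout}(X) + |X| - 1 \ge |X| - 1 \ge |X|/2$. Summing over disjoint $X \in \mathcal{F}_{\ge 2}$: $m \ge \sum \vol^{\textout}(X) \ge \tfrac12\sum|X|$, so these sets cover at most $2m/... $ — no. The truly correct statement: each $X \in \mathcal{F}_{\ge 2}$ has $k - d^{\textout}(X) \le k - (|X|-1) \cdot 0$, useless. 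I now believe the intended argument is: split by whether $d^{\textout}(X) = 0$. Sets with $d^{\textout}(X)=0$ and $|X|\ge 2$ are connected components; there are few. For $1 \le d^{\textout}(X) < k$, the positive term is $< k$, and such $X$ with $|X| \ge 2$ satisfies $\vol^{\textout}(X) \ge |X| \ge 2$; since they are disjoint and $\sum \vol^{\textout}(X) \le m$, there are $\le m/2$ such sets, contributing $< km/2$... . Given the arithmetic fragility, the main obstacle is precisely pinning down which simplicity-based inequality ($\vol^{\textout}(X) \ge c\,k$ for multi-vertex cut sets? using that a simple graph on $|X|$ vertices with small out-cut must have large internal volume) closes the gap; I expect it to be $\vol^{\textout}(X) \ge k$ whenever $|X| \ge 2$ \emph{and} $d^{\textout}(X) < k$ in a simple graph where min out-degree is $\ge 1$, combined with $\sum_{X \in \mathcal{F}_{\ge 2}} \vol^{\textout}(X) \le m$ to get $|\mathcal{F}_{\ge 2}| \le m/k$, hence its contribution is $\le k \cdot m/k = m < \epsilon m / 2$ once — no, needs $\epsilon > 2$. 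So finally: the singleton sets must contribute $> \epsilon m / 2$, and each singleton $\{v\}$ has $k - \deg^{\textout}(v) \le k$, forcing $\ge \epsilon m/(2k) \ge \epsilon n/8 \cdot (m/(nk)) \cdot ... $, and with $m \ge nk/4$ this is $\ge \epsilon n / 8$; when $\bar d$ is known one keeps the $m = \bar d n$ factor for $\ge \epsilon \bar d n/(8k)$... wait the lemma claims $\epsilon n/2$ and $\epsilon \bar d n/(8k)$ respectively, so the target constants differ from my estimate by a factor, meaning the multi-vertex contribution must be shown to be \emph{at most half} the total with room to spare. The hard part, which I would spend the most effort on, is exactly this: showing $\sum_{X \in \mathcal{F}, |X|\ge 2}(k - d^{\textout}(X)) \le \tfrac12 \sum_{X \in \mathcal{F}}(k - d^{\textout}(X))$ using simplicity and $\epsilon > 4/k$, after which summing the singleton contributions and dividing by the per-singleton bound $k$ (or $k$, with $m = \bar d n$) yields the two claimed counts.
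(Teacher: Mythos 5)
Your proposal does not close, and the final plan you commit to rests on a claim that is not established and is in fact the wrong target. You propose to split $\mathcal{F}$ by \emph{cardinality} into singletons and multi-vertex sets, and to show that the singleton sets carry at least half of $\sum_{X}(k-d^{\out}(X))$. But nothing in the hypotheses forces this: the witnessing family from \Cref{thm:eps-far-edge} could consist entirely of multi-vertex sets (e.g.\ a graph that is a disjoint union of pairs, with $\mathcal{F}$ the family of pairs), in which case the singleton contribution is zero and your plan collapses. You correctly sense this difficulty throughout the meandering middle section, but the ``revised plan'' at the end still aims at the same unprovable claim.

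The paper takes a different, and cleaner, split: it partitions the relevant sets $\cC = \{X : k - d^{\out}(X) \ge 1\}$ by \emph{out-volume}, not cardinality, into $\cC_{\textsmall}$ (those with $\vol^{\out}(X) < 2k/\epsilon$) and $\cC_{\textbig}$. From $\sum_X (k - d^{\out}(X)) > \epsilon m$ and each term $\le k$ one gets $|\cC| > \epsilon m/k$; and since $\sum_{X \in \cC_{\textbig}}\vol^{\out}(X) \le m$, a Markov-type bound gives $|\cC_{\textbig}| \le \epsilon m/(2k) < |\cC|/2$, hence $|\cC_{\textsmall}| \ge |\cC|/2 > \epsilon m/(2k)$. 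The missing idea in your proposal is what comes next: one does \emph{not} need the witnesses to be singletons of $\mathcal{F}$. Instead, one shows that \emph{every} $X \in \cC_{\textsmall}$, singleton or not, contains a vertex of out-degree $< k$, by bounding its \emph{average} degree $\vol^{\out}(X)/|X|$. This is exactly where simplicity and $\epsilon > 4/k$ enter: if $|X| = 1$ the bound is $d^{\out}(X) < k$; if $|X| \ge 2/\epsilon$ then $\vol^{\out}(X)/|X| < (2k/\epsilon)/(2/\epsilon) = k$; and in the remaining range $2 \le |X| < 2/\epsilon \le k/2$, simplicity gives $\vol^{\out}(X) \le d^{\out}(X) + |X|^2$, so $\vol^{\out}(X)/|X| < k/|X| + |X| < k/2 + k/2 = k$. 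Since the sets in $\cC_{\textsmall}$ are disjoint, this yields $|\cC_{\textsmall}| > \epsilon m/(2k)$ distinct low-degree vertices, which with $m = \bar d n$ (or $m \ge nk/4$ via \Cref{eq:mgeqnk4}) gives the two claimed counts. Your attempts at inequalities of the form $\vol^{\out}(X) \ge c|X|$ or $\vol^{\out}(X) \ge k$ are going in the wrong direction; the useful simplicity bound is an \emph{upper} bound on $\vol^{\out}(X)$ in terms of $d^{\out}(X)$ and $|X|^2$, turned into an average-degree argument.
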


\Cref{lem:lots-singleton} immediately yields the correctness of the
algorithm as number of singleton with degree less than $k$ is at least $\epsilon n/2$ vertices
 ($\epsilon \bar{d}n/(8k)$ vertices if $\bar{d}$ is unknown), and we
 sample $\Theta( k/(\epsilon \bar{d}))$ (or $\Theta(1/\epsilon)$ vertices if $\bar{d}$ is unknown) at step 1
and 2 to check if each sampled vertex has degree less than $k$.  Next, we prove \Cref{lem:lots-singleton}. 

\begin{proof} [Proof of \Cref{lem:lots-singleton}]

Let $\cC = \{ X \colon k - d^{\textout}(X)\geq 1 \}$. We claim that 
\begin{align} \label{eq:cC-g-emk} |\cC| > \epsilon m/k. \end{align} This follows from 
$$|\cC|k \geq \sum_{X \in \cC} (k - d^{\textout}(X)) \geq \sum_{X \in
  \mathcal{F}} (k - d^{\textout}(X)) > \epsilon m.$$   The first inequality follows from each term $k -
d^{\textout}(X)$ is at most $k$, and there are $|\cC|$ terms. The
second inequality follows from each $X \in \mathcal{F}
\setminus \cC$, $k - d^{\textout}(X) \leq 0$. The third inequality follows from
\Cref{eq:eps-far-edge}. 

Let $\cC_{\textbig} = \{ X \in \cC \colon \vol^{\textout}(X)\geq
2k/\epsilon \}$, and $\cC_{\textsmall} = \cC - \cC_{\textbig}$.  We
claim that \begin{align} \label{eq:cC-small-g-eps-n}|\cC_{\textsmall}| > \epsilon n/8. \end{align} First, we show that
\begin{align} \label{eq:cC-big-leq-cC} |\cC_{\textbig}| < |\cC|/2. \end{align} This follows from $$ |\cC_{\textbig}|
\leq \sum_{X \in \cC_{\textbig}} \vol^{\textout}(X)/(2k
\epsilon^{-1}) \leq  \epsilon m/(2k) < |\cC|/2.  $$
The first inequality follows from the fact that for each $X \in \cC_{\textbig}, \vol^{\textout}(X)/
(2k \epsilon^{-1}) \geq 1$. Hence,  $\sum_{X \in \cC_{\textbig}} \vol^{\textout}(X)/
(2k \epsilon^{-1}) \geq  |\cC_{\textbig}|$.  The second inequality
follows from the fact that $\cC_{\textbig}$ contains disjoint sets,
and $ \sum_{X \in \cC_{\textbig}} \vol^{\textout}(X) \leq
\vol^{\textout}(V) = m$. The last inequality follows from
\Cref{eq:cC-g-emk}. Next, we have  \begin{align} \label{eq:cctextsmall2k}|\cC_{\textsmall}|
                                     \geq |\cC|/2 \geq \epsilon m /
                                     (2k) \geq \epsilon (n\bar{d})/
                                     (2k) \geq \epsilon (n\bar{d})/2. \end{align}
The first inequality follows from \Cref{eq:cC-big-leq-cC} and that  $\cC_{\textsmall} = \cC - \cC_{\textbig}$. The second
inequality follows from \Cref{eq:cC-g-emk}. The third inequality
follows from  $m = n\bar{d}$.  If $\bar{d}$ is unknown, the last part
of \Cref{eq:cctextsmall2k} becomes $ m / (2k) \geq \epsilon (nk )/ (8k) \geq \epsilon
n/8.$ This follows from \Cref{eq:mgeqnk4}.

It suffices to show that, for each $X\in\cC_{\textsmall}$, the average
degree of vertices in $X$, which is $\frac{\vol^{\textout}(X)}{|X|}$, is less
than $k$. If this is true, then there exists node $x\in X$ where
$\deg x<k$. Since the sets in $\cC$ are disjoint, each set $X \in
\cC$ contains a vertex with degree less than $k$, and
$|\cC_{\textsmall}| > \epsilon n/8$ (by \Cref{eq:cC-small-g-eps-n}), we have that the number of
singleton vertex with degree less than $k$ is $> \epsilon n/8$, and we
are done. 

Now, fix $X\in\cC_{\textsmall}$ and we want to show that $\frac{\vol^{\textout}(X)}{|X|}<k$.
Consider three cases. If $|X|=1$, then $\frac{\vol^{\textout}(X)}{|X|}=d^{\textout}(X)<k$.
Next, if $|X|\ge2/\epsilon$, then $\frac{\vol^{\textout}(X)}{|X|}<\frac{2k/\epsilon}{2/\epsilon}=k$
as $X\in\cC_{\textsmall}$. In the last case, we have $2\le|X|<2/\epsilon\le k/2$.
Note that $\vol^{\textout}(X)\le d^{\textout}(X)+|X|^{2}$ because the graph is simple.
So, 
\[
\frac{\vol(X)}{|X|}\le\frac{d(X)+|X|^{2}}{|X|}<\frac{k}{|X|}+|X|<\frac{k}{2}+\frac{k}{2}=k.\qedhere
\]
\end{proof}

\subsection{Testing $k$-Edge-Connectivity: Bounded-Degree Model} \label{sec:test-k-edge-bounded}
In this section, we prove \Cref{thm:main-property-testing} for testing $k$-edge-connectivity for
bounded degree model.  In this model, we know the maximum out-degree
$d$. We assume that $G$ is $d$-regular, meaning that every vertex has
degree $d$. If $G$ is not $d$-regular, we can ``treat'' $G$ as if it
is $d$-regular as follows. For any list $L_v$ of size less than $ d$, and $i
\in (|L_v|, d]$, we ensure that query$(v,i)$ returns a self-loop edge (i.e.,
an edge $(v,v)$). 

\paragraph{Edge-sampling procedure.} The key property of a $d$-regular graph is that we can sample edge
uniformly as follows. We first sample a vertex $x \in V$. Then, we
make query$(x,i)$ where $i$ is an integer sampled uniformly from
$[1,d]$. 
\begin{proposition}  \label{pro:sample-edge}
For any edge $e \in E$, the probability that $e$ is sampled from the edge-sampling procedure is $1/m$.
\end{proposition}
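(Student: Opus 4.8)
The plan is to observe that, after the $d$-regularization described above, the edge-sampling procedure amounts to nothing more than a uniformly random choice among $nd$ ``slots,'' exactly one of which corresponds to each edge of $G$. First I would record the two structural facts I need. (i) Every out-list $L_x$ has exactly $d$ entries, since $G$ is now $d$-regular (shorter lists having been padded with self-loops $(x,x)$). (ii) Each edge $e=(u,v)\in E$ occurs in exactly one of these lists, namely $L_u$, and at exactly one index, which I will call $i_e\in\{1,\dots,d\}$. Fact (ii) holds because in a directed graph an edge is stored only in the out-list of its tail, and distinct entries of a given list are distinct edges, so there is no double counting.

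Next I would analyze the procedure itself. It draws a vertex $x\in V$ uniformly at random (probability $1/n$ for each vertex) and then, independently, an index $i\in\{1,\dots,d\}$ uniformly at random (probability $1/d$ for each value). By fact (ii), the call $\mathrm{query}(x,i)$ returns the fixed target edge $e$ if and only if $x$ equals the tail of $e$ and $i=i_e$. Because the choice of $x$ and the choice of $i$ are independent, this event has probability $\tfrac1n\cdot\tfrac1d$.

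Finally I would close the computation using $d$-regularity: a directed graph in which every vertex has out-degree exactly $d$ satisfies $m=|E|=nd$, hence $\tfrac1n\cdot\tfrac1d=\tfrac1m$, which is the claimed bound. There is essentially no obstacle in this argument; the only two points that deserve a moment's care are verifying that a single edge is never charged to two different lists or two different indices (handled by fact (ii), including how self-loops are generated by out-of-range queries) and noting that the two random choices are genuinely independent, so that their individual probabilities multiply.
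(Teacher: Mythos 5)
Your proof is correct and takes essentially the same approach as the paper: fix the tail $u$ of $e$ and the unique index $i_e$ where $e$ sits in $L_u$, use independence of the two uniform draws to get probability $\frac{1}{n}\cdot\frac{1}{d}$, and conclude via $m=nd$ from $d$-regularity. The paper phrases the first step as a one-line application of the law of total probability (conditioning on whether $L_u$ was sampled), but that is the same computation.
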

\begin{proof}
Fix any edge $e \in E$. The edge $e$ belongs to some list
$L_v$. Therefore, the probability that $e$ is queried according to
edge-sampling procedure is
\begin{align*}
P(e \text{ is queried}) &=   P(e \text{ is queried} \mid L_v \text{ is
                          sampled}) P(L_v \text{ is sampled})  + \\ 
&   P(e \text{ is queried} \mid L_v \text{ is
                       not   sampled}) P(L_v \text{ is not sampled} ) \\
 &= P(e \text{ is queried} \mid L_v \text{ is
                          sampled}) P(L_v \text{ is sampled} ) \\
&= (1/d)(1/n) = 1/m.\qedhere
\end{align*}
\end{proof}

We present an algorithm for testing $k$-edge-connectivity for
bounded-degree model assuming \Cref{lem:gaplocalec}.
 
\paragraph{Algorithm.}  
\begin{enumerate}
\item Sample $\Theta (\frac{1}{\epsilon})$ vertices uniformly.
\item If any of the sampled vertex has degree less than $k$, returns the
  corresponding edge-cut. 
\item For each $i \in \{0,\ldots,\logk\}$, and for each $j \in
  \{0,\ldots, \logeta\}$ where $\eta_i = 2^{i+2}\epsilon^{-1}\logk$,
\begin{enumerate}
\item Sample $\Theta (\frac{ \logk\logeta }{ \epsilon 2^{j-i} }) = \tilde \Theta( \frac{1}{\epsilon 2^{j-i}}) $  edges uniformly.   
\item let $\nu = 2^{j+1},$ and $\gap = 2^{i}-1$. 
 \item run GapLocalEC$(x, \nu , k, \gap )$ on both $G$ and $G^R$ where
   $G^R$ is $G$ with reversed edges, and $x$ is a vertex from the
   sampled edge of the form $(x,y)$.  
\end{enumerate}
\item Return an edge-cut of size less than $ k $ if any execution of GapLocalEC returns a cut. Otherwise, declare that $G$ is $k$-edge-connected.
\end{enumerate}

\paragraph{Query and Time Complexity.}  We first show that the number of
edge queries is at most $\ot(k/\epsilon)$. For each
vertex $x$ from the sampled edge $(x,y)$ and
for each $(i,j)$ pair in loops, by \Cref{lem:gaplocalec}, GapLocalEC
queries $\ot( \nu k/\gap) = \ot(2^{j-i}k)$ edges, and we sample $\ot(
1/(\epsilon 2^{j-i}))$ times.  Therefore, by repeating $\ot(1)$ time, the total edge queries is at most $\ot( k/ \epsilon)$. 

Next, we show that the running time is $\ot(k/\epsilon^{8/3})$. This
follows from the same argument, but we use the running time for
GapLocalEC instead of edge-query complexity. For each iteration, the
running time is $\ot( (\nu/\gap)^{5/3}k \cdot 1/(\epsilon 2^{j-i})) = \ot(
(2^{j-i})^{2/3}k/\epsilon) = \ot( k/\epsilon^{8/3} )$. The last
inequality follows because by definition $2^j \leq 2^{i+2}
\epsilon^{-1} \logk$. 

\paragraph{Correctness.} If $G$ is $k$-edge-connected, then the algorithm
never returns any edge-cut, and we are done. Suppose $G$ is
$\epsilon$-far from being $k$-edge-connected, then we show that the
algorithm outputs an edge-cut of size less than $k$ with constant
probability.  Since $G$ is $d$-regular, we have $\bar{d} =
d$. Therefore,  we can use results from \Cref{sec:test-k-edge}. %
 Let $\mathcal{F} :=  \{X_1, \ldots, X_t\}$ as in
\Cref{thm:eps-far-edge}.  We assume without loss of generality that
\begin{align} \label{eq:eps-far-edge2} 
\sum_{i}(k - d^{\textout}(X_i)) > \epsilon m.
\end{align}  Let $\mathcal{C}_{-1} = \{ X \in \mathcal{F} \colon k \leq d^{\textout}(X) \}$. For $i
\in \{0, 1,\ldots,\logk\}$, let  $\mathcal{C}_i = \{ X \in \mathcal{F}
\colon k - d^{\textout}(X) \in [2^i, 2^{i+1}) \}$.  Let $\cC_{i,
  \textbig} = \{ X \in \cC_i \colon \vol^{\textout}(X) \geq 2^{i+2}
(\logk+1) /\epsilon \}$, and $\cC_{i, \textsmall} = \cC_i - \cC_{i,
  \textbig}$.  By \Cref{lemma:many-small-edge},  there is $i$ such
that  \begin{align} \label{eq:ccitextsmallgeqepsilon} |\cC_{i,\textsmall}| \geq \epsilon n \bar{d}
/ (4k  (\logk+1)) = \epsilon m/ (4k (\logk+1)). \end{align} This last inequality
follows since $n\bar{d} = nd = m$.  We fix $i$ as in
\Cref{eq:ccitextsmallgeqepsilon}. Let $\eta_i =
2^{i+2}\epsilon^{-1}\logk$. For $j \in \{0, 1, \ldots, \logeta \}$,
let $\cC_{i,\textsmall, j} = \{ X \in \cC_{i,\textsmall} \colon \vol^{\textout}(X) \in [2^j, 2^{j+1}) \}$. 

\begin{lemma} \label{lem:bigvol-small}
For $i$ that satisfies  \Cref{eq:ccitextsmallgeqepsilon}, there is $j$ such that $\sum_{X \in \cC_{i,\textsmall,j}}
\vol^{\textout}(X) \geq \epsilon m 2^{j-i}/ (4 (\logk+1)( \logeta+1))$. 
\end{lemma}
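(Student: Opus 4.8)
The plan is to pigeonhole over the volume buckets, but with the twist that makes the $2^{j-i}$ factor appear. First I would fix $i$ as in \Cref{eq:ccitextsmallgeqepsilon} and note that the proof of \Cref{lemma:many-small-edge} in fact establishes, for this very $i$, the \emph{sharper} estimate $|\cC_{i,\textsmall}| \ge \epsilon m/(2^{i+2}(\logk+1))$; this is exactly the middle term of \Cref{eq:last-ineq}, and it is stronger than the $\epsilon m/(4k(\logk+1))$ quoted in \Cref{eq:ccitextsmallgeqepsilon}, which was obtained only by weakening through $2^i \le k$. Keeping the $2^i$ in the denominator is precisely what will be needed. I would also record that every $X \in \cC_{i,\textsmall}$ is nonempty with $1 \le \vol^{\textout}(X) < \eta_i$, so that $\cC_{i,\textsmall} = \bigsqcup_{j=0}^{\logeta}\cC_{i,\textsmall,j}$ is a genuine partition into volume buckets.

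The key step is the single chain of inequalities
\[
\sum_{j=0}^{\logeta} \frac{1}{2^{j}} \sum_{X \in \cC_{i,\textsmall,j}} \vol^{\textout}(X) \;\ge\; \sum_{j=0}^{\logeta} |\cC_{i,\textsmall,j}| \;=\; |\cC_{i,\textsmall}| \;\ge\; \frac{\epsilon m}{2^{i+2}(\logk+1)},
\]
where the first inequality uses $\vol^{\textout}(X) \ge 2^{j}$ for each $X \in \cC_{i,\textsmall,j}$, and the last is the sharp bound just recalled. Since the outer sum has only $\logeta + 1$ terms, some term is at least a $1/(\logeta+1)$ fraction of the right-hand side; that is, there is a $j$ with $\tfrac{1}{2^{j}}\sum_{X \in \cC_{i,\textsmall,j}}\vol^{\textout}(X) \ge \epsilon m/(2^{i+2}(\logk+1)(\logeta+1))$. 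Multiplying through by $2^{j}$ and writing $2^{i+2} = 4\cdot 2^{i}$ turns this into $\sum_{X \in \cC_{i,\textsmall,j}}\vol^{\textout}(X) \ge \epsilon m\,2^{j-i}/(4(\logk+1)(\logeta+1))$, which is the claim.

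I expect the only genuine obstacle to be identifying the right quantity to average: summing the raw bucket volumes $\sum_{X}\vol^{\textout}(X)$ and pigeonholing loses the $2^{j-i}$ scaling, whereas averaging the ``volume-per-set'' proxy $2^{-j}\sum_{X}\vol^{\textout}(X)$ — which is sandwiched between $|\cC_{i,\textsmall,j}|$ and $2|\cC_{i,\textsmall,j}|$ — telescopes cleanly back to $|\cC_{i,\textsmall}|$ and hence to the $1/2^{i}$-type lower bound, producing exactly the target. A minor loose end is sets all of whose vertices have out-degree $0$ (volume $0$, in no bucket); these force a vertex of out-degree $<k$ and are handled by Step~2 of the algorithm, so they can be ignored here. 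Everything remaining is routine arithmetic, and the $\bar d$-unknown variant goes through verbatim after substituting the corresponding bound on $|\cC_{i,\textsmall}|$.
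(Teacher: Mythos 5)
Your proof is correct, and in fact it is \emph{sharper} than the paper's own argument in a way that matters. The paper's proof chains the stated bound \Cref{eq:ccitextsmallgeqepsilon}, $|\cC_{i,\textsmall}| \geq \epsilon m/(4k(\logk+1))$, through the pigeonhole, and then cites \Cref{eq:2ileqk} (i.e., $2^i \leq k$) to pass from $\epsilon m 2^j/(4k(\logk+1)(\logeta+1))$ to $\epsilon m 2^{j-i}/(4(\logk+1)(\logeta+1))$. But that last step runs the wrong way: it is equivalent to $2^i \geq k$, whereas the available bound is $2^i \leq k$, so equality holds only when $k$ is a power of two and $i = \logk$. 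Your move of reaching back to the \emph{middle} term of \Cref{eq:last-ineq}, namely $|\cC_{i,\textsmall}| \geq \epsilon m/(2^{i+2}(\logk+1))$ — which the proof of \Cref{lemma:many-small-edge} establishes before weakening via $2^i \leq k$ — is precisely what repairs this: then $2^{i+2} = 4\cdot 2^i$ produces the $2^{j-i}$ factor directly and with the inequality pointing the right way. Aside from this, your mechanism coincides with the paper's: pigeonhole over $j$ and use $\vol^{\textout}(X) \geq 2^j$ on $\cC_{i,\textsmall,j}$; whether one pigeonholes on the counts $|\cC_{i,\textsmall,j}|$ (as the paper does) or on $2^{-j}\sum_X \vol^{\textout}(X)$ (as you do) is cosmetic since those two quantities differ by a factor of at most $2$. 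Your caveat about zero-volume sets is benign here: in the $d$-regular model of this section every nonempty $X$ has $\vol^{\textout}(X) \geq d \geq 1$, so the buckets $\cC_{i,\textsmall,j}$ for $j \in \{0,\dots,\logeta\}$ genuinely partition $\cC_{i,\textsmall}$.
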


We show that \Cref{lem:bigvol-small} implies the correctness. By
sampling $\Theta (\frac{ \logk\logeta }{ \epsilon 2^{j-i} }) = \tilde
\Theta( \frac{1}{\epsilon 2^{j-i}}) $ edges, we get
an event where a sampled edge $(u,v)$ has $u \in X$ for some $X \in
\cC_{i,\textsmall, j}$ with constant probability (since
$\cC_{i,\textsmall,j}$ contains disjoint elements).  For each $(i,j)  \in
\{0, 1, \ldots, \logk\} \times \{0,\ldots, \logeta\}$, we run GapLocalEC with $\nu = 2^{j+1}$, and
$\gap = 2^i-1$; also, there exists $(i,j)$ such that $\sum_{X \in \cC_{i,\textsmall,j}}
\vol^{\textout}(X) \geq \epsilon  m 2^{j-i}/ (4 (\logk+1)( \logeta+1)) $ by
\Cref{lem:bigvol-small}. Therefore, by \Cref{lem:gaplocalec}, 
GapLocalEC outputs an edge-cut of size less than $k$ with constant probability. 

\begin{proof}[Proof of \Cref{lem:bigvol-small}] 
We claim that there is $j$ such that  \begin{align} \label{eq:citextsmallj} |\cC_{i,\textsmall,
                                        j}| \geq |\cC_{i,\textsmall}| /
                                        (\logeta+1)  \end{align}
Suppose otherwise. We have for all $j \in \{0, \ldots, \logeta\}, |\cC_{i,\textsmall,j}| <
|\cC_{i,\textsmall}|/ (\logeta+1) $. Therefore, $\sum_{j \in
  \{0,\ldots, \logeta \} } |\cC_{i,\textsmall,j}| <
|\cC_{i,\textsmall}|$, a contradiction.  Now, for the same $j$,  
we have 
\begin{align*} 
 \sum_{X \in \cC_{i,\textsmall,j}} \vol^{\textout}(X)& \geq
|\cC_{i,\textsmall,j}|2^j \\&\stackrel{(\ref{eq:citextsmallj})} \geq
|\cC_{i,\textsmall}|2^j/ (\logeta+1) \\&
\stackrel{(\ref{eq:ccitextsmallgeqepsilon})} \geq
\epsilon m 2^j/ (4 k (\logk+1)( \logeta+1) ) \\&
\stackrel{(\ref{eq:2ileqk})} \geq
\epsilon m 2^{j-i}/ (4 (\logk+1) (\logeta+1)) 
\end{align*} 
The first inequality is because the set $\cC_{i,\textsmall,j}$
contains disjoint elements, and that $\vol^{\textout}(X) \geq 2^j$ by
definition. %
\end{proof}

\paragraph{An improved bound for a simple graph.} The same algorithm gives
an improved bound, $\ot(\min\{ k/\epsilon, 1/\epsilon^2 \})$ queries,  when $G$
is simple.   If $\epsilon \geq 4/k$, then the algorithm queries at
most $\ot(k/\epsilon) = \ot(1/\epsilon^2) $ edges. Otherwise, $\epsilon >
4/k$, by \Cref{lem:lots-singleton},  there are $\Omega(\epsilon n)$
many vertices with degree less than $k$, and this implies that the
algorithm outputs an edge-cut of size less than $k$ at step 2.  

\subsection{Testing $k$-Vertex-Connectivity: Unbounded-Degree Model} \label{sec:test-k-vertex}

In this section, we prove \Cref{thm:main-property-testing} for testing
$k$-vertex-connectivity. The key tool for our property testing
algorithm is approximate local vertex connecitvity in a suitable form
for the application to property testing.  We can derive the following
gap version of LocalVC in \cite{NanongkaiSY19} by essentially setting
$\epsilon = \gap/k$.  

\begin{lemma} [\cite{NanongkaiSY19} Theorem 4.1] \label{lem:gaplocalvc}
There is a randomized (Monte Carlo) algorithm that takes as input a
vertex $x \in V$ of an $n$-vertex $m$-edge directed graph $G = (V,E)$
represented as incidence-lists with minimum out-degree $\delta\ge1$, a
volume parameter $\nu$, a cut-size parameter $k$, and ``gap''
parameter $\gap \in (0, k)$ where $ \nu < \gap\cdot m/ (640k)$, and $k
\leq n/4$, that queries at most $\ot( \nu k/ \gap)$ edges, runs in
$\ot( (\nu/ \gap)^{1.5}k )$ time and 
\begin{itemize}[noitemsep] 
\item  if there is a separation triple $(L,S,R)$ where $L \ni x, |S| <
  k - \gap, \vol^{\textout}(L) \leq \nu$ and either $\min_{v \in
    L}\{\deg^{\textout}(v)\} < k$ or $|S| < k -\gap$,  then it returns
  a vertex-cut  of size  less than $ k $,
\item  if there is no separation triple $(L,S,R)$ where $L \ni x, |S|
  < k, \vol^{\textout}(L) \leq \nu$, then it returns the symbol
  $\perp$.
\end{itemize}
\end{lemma}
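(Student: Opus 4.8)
Our plan is to derive \Cref{lem:gaplocalvc} from the $(1+\epsilon)$-approximate local vertex connectivity algorithm of Nanongkai~et~al.~\cite{NanongkaiSY19} by setting its accuracy parameter to $\epsilon=\Theta(\gap/k)$ and then checking that the hypotheses, the conclusions, and the complexity bounds all translate. The disjunctive clause in the first bullet is disposed of immediately: whenever a vertex $v$ with $\deg^{\textout}(v)<k$ is encountered, $N^{\textout}(v)$ is already a vertex cut of size $<k$ (and the third part $R$ is nonempty since $\deg^{\textout}(v)<k\le n/4$), so the algorithm may output it and stop; hence the only real obligation is to detect separation triples with $|S|<k-\gap$, and from now on we may assume every explored vertex has out-degree $\ge k$. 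Running the approximate routine with cut-size parameter $k-\gap$ and accuracy $\epsilon$ chosen so that $\lfloor(1+\epsilon)(k-\gap)\rfloor\le k$ (i.e.\ $\epsilon=\Theta(\gap/k)$), its guarantee --- detect every cut of size $<k-\gap$ of out-volume $\le\nu$, always output a genuine cut of size $<k$, otherwise output $\bot$ --- is precisely the statement of the lemma. The precondition $\nu<\gap\cdot m/(640k)$ is the precondition $\nu=O(\epsilon m)$ of that routine after substitution; the hypothesis $k\le n/4$ is what the split-graph reduction needs so that the cut recovered through \Cref{lem:soundness} is a true separation triple; and the query bound $\ot(\nu k/\gap)$ and time bound $\ot((\nu/\gap)^{1.5}k)$ are the $O(\nu/\epsilon)$ exploration bound and the $\ot(\nu^{1.5}/(\sqrt k\,\epsilon^{1.5}))$ running time of that routine, evaluated at $\epsilon=\Theta(\gap/k)$.

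If one wishes to reprove the routine rather than cite it, I would run the local-flow framework, following \Cref{sec:reduc}. Work in the split graph $G'$, where by \Cref{lem:completeness,lem:soundness} a separation triple of size $s$ with $x$ on its small side $L$ corresponds to an edge cut of $G'$ of size $s$ whose small side $L'$ has $\vol^{\textout}_{G'}(L')=\Theta(\vol^{\textout}_G(L))$, and where minimum cuts use only the unit-capacity splitting edges $v_{\textin}\!\to v_{\textout}$ once every other edge is given capacity $k$. Set up a local $s$--$t$ maximum-flow instance with a super-source feeding $x$ a bounded amount of flow while the splitting edges double as sinks, and run Dinic-style blocking flows, aborting any phase the moment it would touch more than $\Theta(\nu/\gap)$ previously unseen edges. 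If enough flow is routed, every cut near $x$ has size $\ge k$ and the algorithm outputs $\bot$; if a phase instead gets stuck, the truncated layered/residual graph exhibits a cut of value below the routed flow, hence $<k$, whose small side contains $x$ and has out-volume $O(\nu k/\gap)$, and \Cref{lem:soundness} converts it back into the required vertex cut. The gap parameter plays a double role: it lets us settle for routing $k-\gap$ units (so cuts of size $<k-\gap$ are certainly found, while cuts of size in $[k-\gap,k)$ may be missed), and --- this is what buys the running time --- it caps the per-phase exploration at $\Theta(\nu/\gap)$, so that over the $O(k)$ augmentations only $O(\nu k/\gap)$ distinct edges are ever queried while the $O(\sqrt{\nu/\gap})$ blocking-flow phases (an Even--Tarjan-type bound for the essentially node-capacitated instance, together with the routed value being at most $k$) re-scan those edges for a total of $\ot((\nu/\gap)^{1.5}k)$ time.

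The crux, and the only step that is not bookkeeping, is the ``stuck'' case: one must show that when a truncated blocking-flow phase cannot make progress within its volume budget, the layered graph it has built still certifies a cut of the whole graph $G$ --- not merely of the explored subgraph --- that the small side of this cut is the $x$-side, and that its out-volume is only $O(\nu k/\gap)$, which is exactly the bound needed to invoke \Cref{lem:soundness} and to conclude that the cut size is $<k$ rather than $<(1+\Theta(1))k$. Matching the constants (the $640$, the floor in $\lfloor(1+\epsilon)k\rfloor$, the $<$ versus $\le$ appearing in the two bullets) is routine, but the trade-off between the exploration budget and the routed flow value that produces the $\ot((\nu/\gap)^{1.5}k)$ running time is the substantive part of the argument, and is exactly where the analysis of~\cite{NanongkaiSY19} does its work.
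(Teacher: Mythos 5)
Your first paragraph is exactly the paper's (entire) argument: the paper provides no proof for this lemma, only the citation to \cite{NanongkaiSY19} and the one-line remark that it follows ``by essentially setting $\epsilon = \gap/k$,'' and your careful parameter translation --- in particular the refinement of lowering the cut-size input to $k-\gap$ so that the returned cut truly has size $<k$ rather than $<k+\gap$ --- simply makes that remark precise. The remaining two paragraphs, which re-derive the cited local-flow routine from scratch, go well beyond what the paper does but are explicitly framed as optional and do not affect the assessment.
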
 
We present an algorithm for testing $k$-vertex-connectivity assuming
\Cref{lem:gaplocalvc}, and analysis. 

\paragraph{Algorithm.}  
\begin{enumerate}
\item Sample $\Theta(1)$ vertices uniformly. 
\item If any of the sampled vertex $x$ has out-degree less than $k$, returns
  $N(x)$. 
\item Sample $\Theta(k \log k/( \epsilon \bar{d}))$ vertices uniformly
  (if $\bar{d}$ is unknown, sample $\Theta(\log k/\epsilon)$ vertices instead). 
\item For each sampled vertex $x$, and for $i \in \{0 ,\ldots, \logk \}$, 
 \begin{enumerate}
 \item let $\nu = 2^{i+3}\logk/\epsilon,$ and $\gap = 2^{i}-1$. 
 \item run GapLocalVC$(x, \nu , k, \gap )$ on both $G$ and $G^R$ where $G^R$ is the same graph with reversed edges. 
 \end{enumerate} 
\item Return a vertex-cut of size less than $k $ if any execution of GapLocalVC returns a vertex-cut. Otherwise, declare that $G$ is $k$-vertex-connected.
\end{enumerate}

\paragraph{Query and Time Complexity.}  We first show that the number of edge queries is at most $\ot(k^2/(\epsilon^2 \bar{d}))$. For each sampled vertex $x$ and $i \in \{1, \ldots, \logk\}$, by \Cref{lem:gaplocalvc}, GapLocalVC queries $\ot( \nu k/\gap) = \ot(k/\epsilon)$ edges. The result follows from we repeat $\logk$ times per sample, and we sample $O(k \log k /(\epsilon \bar{d}))$ times. Next, we show that the running time is $\ot(k^2/(\epsilon^{2.5} \bar{d}))$. This follows from the same argument, but we use the running time for GapLocalVC instead of edge-query complexity.

\paragraph{Correctness.} If $G$ is $k$-vertex-connected, it is clear that
the GapLocalVC never returns any vertex-cut. We show that if $G$ is
$\epsilon$-far from $k$-vertex-connected, then the algorithm outputs a
vertex-cut of size less than $k$ with constant probability. We start with
simple obsevation. 

\begin{lemma}  \label{lem:mgeqnk-vertex}
If $m < nk/4$, then with constant probability, the algorithm outputs
an vertex-cut of size less than $k$ at step 2. 
\end{lemma}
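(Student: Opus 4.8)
The plan is to mirror the proof of the edge-connectivity analogue \Cref{lem:mgeqnk}: the degree check that produced an edge cut there is replaced by the neighborhood that step~2 returns here, and the rest is an identical counting argument. First I would note that $\sum_{v\in V}\deg^{\textout}(v)=m<nk/4$, so by an averaging (Markov) bound the number of vertices with $\deg^{\textout}(v)\ge k$ is at most $m/k<n/4$; hence at least $3n/4$ of the vertices have out-degree strictly less than $k$.

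Second, I would check that every such low-degree vertex is a witness for a small vertex cut. If $\deg^{\textout}(x)<k$, put $S=N^{\textout}(x)$ and $R=V\setminus(\{x\}\cup S)$. Since $|S|<k<n/4$ we have $|\{x\}\cup S|<1+n/4<n$, so $R\ne\emptyset$; and $E(\{x\},R)=\emptyset$ by the choice of $S$, so $(\{x\},S,R)$ is a separation triple and $S$ is a vertex cut of size less than $k$. This is the one place where the hypothesis $k<n/4$ enters, and it is essentially the only subtlety in the argument.

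Finally, since step~1 samples $\Theta(1)$ vertices uniformly at random and at least a $3/4$ fraction of all vertices have out-degree $<k$, with constant probability at least one sampled vertex $x$ satisfies $\deg^{\textout}(x)<k$; step~2 then returns $N(x)$, which by the previous paragraph is a vertex cut of size less than $k$. I do not expect any real obstacle here: the whole proof is a one-line counting bound together with the observation that a vertex of out-degree below $k$ already exhibits a cut of the required size.
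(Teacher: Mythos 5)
Your proof is correct and follows the same counting-plus-sampling approach as the paper; the only difference is that you spell out the (easy but genuine) point that a vertex $x$ with $\deg^{\out}(x)<k$ actually yields a valid separation triple $(\{x\},N^{\out}(x),R)$ because $k<n/4$ forces $R\neq\emptyset$, which the paper leaves implicit.
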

\begin{proof}
Suppose $m < nk/4$. There are at most $n/2$ nodes with out-degree at least
$k$. Hence, there are at least $n/2$ nodes of degree
less than $k$. In this case, we can sample $O(1)$ time where each sampled node
$x$ we check $|N(x)| < k$ to get $k$-vertex-cut with constant probability. 
\end{proof}
From now  we assume that 
\begin{align} \label{eq:mgeqnk4-vertex2}
m \geq nk/4.
\end{align}

We start with important properties when $G$ is $\epsilon$-far from $k$-vertex-connected. We say that two separation triples $(L,S,R)$ and $(L',S',R')$ are \textit{independent} if $L \cap L' = \emptyset$ or $R \cap R' = \emptyset$.  

\begin{theorem} [\cite{OrensteinR11} Corollary 17]  \label{thm:eps-far-vertex}
If a directed graph $G = (V,E)$ is $\epsilon$-far from being
$k$-vertex-connected, then there exists a set $\mathcal{F'}$ of
pairwise independent separation triples\footnote{We use the term
  separation triple $(L,S,R)$ instead of the term \textit{one-way
    pair} $(L,R)$ used by \cite{OrensteinR11} for notational
  consistency in our paper.  These terms are
  equivalent in that there is no edge from $L$ to $R$ and our $S$ is their $V - (L \cup R)$.} such that $\sum_{(L,S,R) \in \mathcal{F'}} \max\{k-|S|,0 \} > \epsilon m$.  
\end{theorem}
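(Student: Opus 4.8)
The plan is to derive \Cref{thm:eps-far-vertex} from the min--max augmentation theorem of Frank and Jordán~\cite{FrankJ99}, in the same way that the edge analogue \Cref{thm:eps-far-edge} follows from the (easier) edge-augmentation formula.

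First I would replace ``$\epsilon$-far'' by a single combinatorial quantity. Deleting an arc can only decrease vertex connectivity, so an optimal sequence of edge modifications turning $G$ into a $k$-vertex-connected digraph consists of arc insertions only; write $\gamma_k(G)$ for the minimum number of such insertions. Then $G$ is $\epsilon$-far from being $k$-vertex-connected precisely when $\gamma_k(G) > \epsilon m$ (the strict sign is the usual convention; one may instead start from $\gamma_k(G)\ge\epsilon m$ and absorb the difference into $\epsilon$). Hence it suffices to exhibit a family $\mathcal{F}'$ of pairwise independent separation triples with $\sum_{(L,S,R)\in\mathcal{F}'}\max\{k-|S|,0\}\ge\gamma_k(G)$.

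Next I would quote the Frank--Jordán theorem: for a digraph $D$ on $n>k\ge 1$ vertices,
\[
\gamma_k(D)\;=\;\max\Big\{\,\textstyle\sum_i\big(k-|S_i|\big)\,\Big\},
\]
where the maximum ranges over all families of separation triples $(L_i,S_i,R_i)$ with $L_i,R_i\neq\emptyset$, $E(L_i,R_i)=\emptyset$, $|S_i|<k$, that are pairwise independent in exactly the sense defined before the theorem ($L_i\cap L_j=\emptyset$ or $R_i\cap R_j=\emptyset$ for $i\neq j$). In Frank--Jordán's language a separation triple $(L,S,R)$ is a ``one-way pair'' $(L,R)$ with $S=V\setminus(L\cup R)$ and ``deficiency'' $k-|S|$, and restricting to $|S_i|<k$ turns $k-|S_i|$ into $\max\{k-|S_i|,0\}$; so taking $\mathcal{F}'$ to be a maximizing family gives $\sum_{(L,S,R)\in\mathcal{F}'}\max\{k-|S|,0\}=\gamma_k(G)>\epsilon m$, which is the claim. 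I would then spend a line on bookkeeping: one-way pairs are nonempty on both sides by definition (so ``$L_i,R_i\neq\emptyset$'' is automatic), the property-testing model's ``edge modifications'' are insertions and deletions and we have used that deletions never help, and the regime $k\ge n$ (where $\gamma_k$ is infinite) does not arise in the applications, which assume $k<n/4$.

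The one place where genuine work is hidden is \cite{FrankJ99} itself, which I would cite rather than reprove. Its core difficulties are to show that the deficiency function $p(L,R)=k-|V\setminus(L\cup R)|$ is crossing supermodular on the relevant family of one-way pairs, and then to cover it by a digraph of minimum size in such a way that the uncrossing steps preserve independence --- substantially more delicate than the edge case, where the corresponding object is a crossing supermodular \emph{set} function and ``independence'' degenerates to ``disjointness.'' Relative to that, everything above is routine translation; the point of the reduction is precisely to reuse the Frank--Jordán machinery.
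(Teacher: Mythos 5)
Your proposal is correct. The paper states this result without proof, attributing it to Orenstein--Ron (their Corollary~17); that corollary is itself an application of exactly the Frank--Jordán min--max augmentation theorem for directed $k$-vertex-connectivity that you invoke, so your reconstruction is the intended route. Your two bookkeeping observations are the right ones to flag: (i) arc deletions can never increase vertex connectivity, so edit distance to $k$-vertex-connectivity equals insertion distance $\gamma_k(G)$, and (ii) the Frank--Jordán independence notion for one-way pairs ($X_1\cap X_2=\emptyset$ or $Y_1\cap Y_2=\emptyset$) coincides, under $L\leftrightarrow X$, $R\leftrightarrow Y$, $S=V\setminus(L\cup R)$, with the paper's definition of independent separation triples. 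The only cosmetic mismatch is the strict versus non-strict inequality against $\epsilon m$, which, as you note, is just a choice of convention for ``$\epsilon$-far'' and does not affect any application. Since this is a cited theorem rather than one the paper proves, there is no in-paper argument to diverge from; your proof fills in the derivation the paper delegates to \cite{OrensteinR11} and \cite{FrankJ99}.
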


Let $\mathcal{F}$ be a family of pairwise independent separation
triples of $G$ such that  \\ $ p(\mathcal{F}) := \sum_{(L,S,R) \in \mathcal{F}}( \max\{k-|S|,0 \} )$ is maximized. By \Cref{thm:eps-far-vertex}, we have $\sum_{(L,S,R) \in \mathcal{F}} \max\{k-|S|,0 \}  > \epsilon m$. 

We say that a left-partition $L$ of a separation triple $(L,S,R)$ is \textit{small} if $|L| \leq |R|$. Similarly, a right-partition $R$ is small if $|R| \leq |L|$. 
\begin{lemma}[\cite{FrankJ99} Lemma 7] \label{lem:LR-disjoint} 
The small left-partitions\footnote{In \cite{FrankJ99}, they use the
  term \textit{one-way pair} $(T,H)$, and define a tail $T$ of a pair $(T,H)$
if \textit{small} if $|T| \leq |H|$. Similarly, they define a head $H$ of a
pair $(T,H)$ to be \textit{small} if $|H| \leq |T|$. We only repharse
from ``tail'' to left-partition, and ``head'' to right-partition.} in $\mathcal{F}$ are pairwise disjoint, and the small right-partitions in $\mathcal{F}$ are pairwise disjoint.
\end{lemma}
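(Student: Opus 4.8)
The plan is an uncrossing argument driven by the maximality of $p(\mathcal F)$. I begin with two reductions. First, I may assume every triple $(L,S,R)\in\mathcal F$ has $|S|<k$: the others contribute $0$ to $p(\mathcal F)$, so deleting them changes neither $p(\mathcal F)$ nor pairwise independence. Second, it suffices to prove the statement about small \emph{left}-partitions only: in the reverse digraph $G^{R}$ the triple $(R,S,L)$ is again a separation triple, the pairwise-independence relation and the value $p$ are invariant under $(L,S,R)\mapsto(R,S,L)$, and a small right-partition of a triple in $G$ is exactly a small left-partition of the corresponding triple in $G^{R}$; hence the right-partition half is the left-partition half applied to $G^{R}$. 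Finally I strengthen the choice of $\mathcal F$: among all maximizers of $p$ I take one minimizing the secondary potential $\Phi(\mathcal F)=\sum_{(L,S,R)\in\mathcal F}|L|$, which will break ties.

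Suppose toward a contradiction that $(L_1,S_1,R_1),(L_2,S_2,R_2)\in\mathcal F$ satisfy $|L_i|\le|R_i|$ and $A:=L_1\cap L_2\neq\emptyset$; relabel so $|L_1|\le|L_2|$. Pairwise independence forces $R_1\cap R_2=\emptyset$. The "common'' triple is $(A,\,S^{*},\,R_1\cup R_2)$ with $S^{*}=V\setminus(A\cup R_1\cup R_2)$; it is a genuine separation triple since $A\subseteq V\setminus(R_1\cup R_2)$, there is no edge from $A$ to $R_1$ or to $R_2$, and both nonseparator sides are nonempty. Counting $V$ against the two triples (the cell $R_1\cap R_2$ being empty) gives the modular identity
\begin{equation*}
|S_1|+|S_2|\;=\;|S^{*}|+\bigl(n-|L_1\cup L_2|\bigr),
\end{equation*}
and together with $2|L_i|\le n-|S_i|$ (from $|L_i|\le|R_i|$) it forces $|S^{*}|<k$, since otherwise $|S_1|+|S_2|\ge 2k$, contradicting $|S_i|<k$. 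I then replace $(L_1,S_1,R_1),(L_2,S_2,R_2)$ in $\mathcal F$ either by $(A,S^{*},R_1\cup R_2)$ together with whichever of the two \emph{residual} triples $(L_2\setminus L_1,\,S_2\cup A,\,R_2)$ and $(L_1\setminus L_2,\,S_1\cup A,\,R_1)$ has separator of size $<k$, or---when both residuals have separator of size $\ge k$---by $(A,S^{*},R_1\cup R_2)$ alone. (The residual triples are valid separation triples whenever their left sides are nonempty; the degenerate cases $L_i\subseteq L_{3-i}$, in particular $L_1=L_2$, are handled separately, using the ambient bound $k<n/2$ to force $|L_1\cup L_2|<n-k$ and hence $\Delta p>0$.)

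Each such replacement keeps the family pairwise independent: if a third triple $(L_3,S_3,R_3)\in\mathcal F$ has $L_3\cap A\neq\emptyset$ then $L_3$ meets both $L_1$ and $L_2$, so by independence it is disjoint from $R_1$ and from $R_2$, hence from $R_1\cup R_2$; the residual triples have left side inside $L_1$ or $L_2$ and right side $R_1$ or $R_2$, so they inherit independence directly. It remains to check that the replacement does not decrease $p$, and strictly decreases $\Phi$ whenever it leaves $p$ unchanged. Using the modular identity, $|S^{*}|<k$, the bounds $2|L_i|\le n-|S_i|$, and the case hypotheses on the residuals' separators (which, when a residual collapses, force an inequality between $|S_1|$ and $|S_2|$), each of the finitely many cases yields $\Delta p=(\text{new }p)-(\text{old }p)\ge 0$; the one remaining tie case has $|L_1|=|L_2|=|R_2|$, where $\Phi$ drops by $|L_1|\ge1$. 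Either way this contradicts the choice of $\mathcal F$.

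The main obstacle I anticipate is exactly this last step: verifying $\Delta p\ge0$ uniformly across the sub-cases determined by which residual triples survive. It is elementary but fiddly, since no single estimate covers all regimes---one must use both the relabeling $|L_1|\le|L_2|$ and, in the regimes where a residual collapses to an empty right side, the forced comparison between $|S_1|$ and $|S_2|$. Once the left-partition statement is proved, the right-partition statement follows by applying it in $G^{R}$.
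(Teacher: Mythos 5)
The paper itself offers no proof here---it cites Frank and Jord\'an directly---so I'll evaluate your argument on its own terms. The overall strategy (uncross two intersecting small left-partitions, use the modular identity on separator sizes, compare $p$, and break ties with a secondary potential) is sound, and I checked the pieces you left as ``fiddly'': the modular identity $|S_1|+|S_2|=|S^*|+(n-|L_1\cup L_2|)$ holds, $|S^*|<k$ follows from it together with $2|L_i|\le n-|S_i|$, the replacement family is pairwise independent, and in each of the residual sub-cases one indeed gets $\Delta p\ge 0$ (with $\Delta p>0$ except in the genuine tie cases).

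The gap is in the very first reduction, and it is not cosmetic. You claim it suffices to prove the statement for small left-partitions and then pass to $G^R$, \emph{and} you also strengthen the choice of $\mathcal F$ by minimizing $\Phi(\mathcal F)=\sum|L|$ among maximizers of $p$. These two moves are incompatible as stated. Your uncrossing argument shows that the particular maximizer you selected (the one minimizing $\sum|L|$) has pairwise-disjoint small left-partitions; it does \emph{not} show this for every maximizer, because the tie case really does have $\Delta p=0$ so $p$ alone does not rule it out. When you then ``apply the left-partition statement in $G^R$'', you are implicitly re-choosing a maximizer that minimizes $\sum_{G^R}|L|=\sum_G|R|$---a potentially different family. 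So what you have proved is that \emph{some} maximizer has disjoint small left-partitions and \emph{some} maximizer has disjoint small right-partitions; you have not exhibited a single $\mathcal F$ enjoying both, which is what the lemma asserts. Worse, you cannot simply run your left-uncrossing and then a right-uncrossing pass, because the right-uncrossing can increase $\sum|L|$ (its residuals duplicate the $L_i$'s), so it may destroy the left-disjointness you just established.

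The fix is to replace $\Phi$ by a reversal-symmetric potential, e.g.\ $\Phi''(\mathcal F)=\sum_{(L,S,R)\in\mathcal F}\min\{|L|,|R|\}$. For a pair of intersecting small left-partitions (so $\min\{|L_i|,|R_i|\}=|L_i|$ and $|A|\le|L_1|\le|R_1|\le|R_1\cup R_2|$), the common triple contributes at most $|A|$ to $\Phi''$ and each retained residual $(L_i\setminus L_{3-i},\,S_i\cup A,\,R_i)$ contributes at most $|L_i|-|A|$, so $\Phi''$ drops by at least $|A|\ge 1$ (and by at least $\min\{|L_1|,|L_2|\}$ when at most one residual is kept). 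Since $\Phi''$ is invariant under $(L,S,R)\mapsto(R,S,L)$, choosing $\mathcal F$ to maximize $p$ and, among maximizers, minimize $\Phi''$ makes your $G^R$ reduction legitimate, and the rest of your argument goes through unchanged.
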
 

 Let $\mathcal{F}_L$ be the set of separation triples with small left-partitions in $\mathcal{F}$, and $\mathcal{F}_R$ be the set of separation triples with small-right partitions in $\mathcal{F}$.  
 By \Cref{thm:eps-far-vertex}, we have that $ \max \{ p(\mathcal{F}_L)
 , p(\mathcal{F}_R) \} > \epsilon m/2$. We assume without loss of generality that
 \begin{align} \label{eq:fl-large}
  p(\mathcal{F}_L) > \epsilon m/2. 
 \end{align}

Let $\mathcal{C}_{-1}= \{ (L,S,R) \in \mathcal{F}_L \colon k \leq |S| \}$. For $i \in \{0,\ldots,\logk\}$, let  $\mathcal{C}_i = \{ (L,S,R) \in \mathcal{F}_L \colon k - |S| \in [2^i, 2^{i+1}) \}$.  Let $\cC_{i, \textbig} = \{ (L,S,R) \in \cC_i \colon \vol^{\textout}(L) \geq  2^{i+3} \epsilon^{-1} (\logk+1) \}$, and $\cC_{i, \textsmall} = \cC_i - \cC_{i, \textbig}$. %
The following lemma is the key for the algorithm's correctness. 
\begin{lemma} \label{lemma:many-small}
There is $i$ such that $ |\cC_{i,\textsmall}| >  \epsilon n \bar{d}
/ (8k  (\logk+1)) $. If $\bar{d}$ is unknown, then there is $i$ such
that $|\cC_{i,\textsmall}| \geq  \epsilon n /  (32(\logk+1)) $ .%
\end{lemma}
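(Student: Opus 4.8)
The plan is to follow the proof of \Cref{lemma:many-small-edge} almost verbatim, with the role played there by the disjoint vertex sets $X_i$ now taken over by the \emph{small left-partitions} $L$ of the separation triples in $\mathcal{F}_L$. The one structural fact that makes this transfer work is \Cref{lem:LR-disjoint}: the small left-partitions appearing in $\mathcal{F}$, and hence in $\mathcal{F}_L$ and in every $\cC_i$ and $\cC_{i,\textbig}$, are pairwise disjoint, so $\sum \vol^{\textout}(L) \le \vol^{\textout}(V) = m$ over any subfamily. Everything else is bucketing and Markov-style counting.

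First I would do the pigeonhole step over the $\logk+1$ buckets $\cC_0,\dots,\cC_{\logk}$. Since $\mathcal{F}_L=\bigsqcup_{i=-1}^{\logk}\cC_i$, the triples in $\cC_{-1}$ contribute $0$ to $p(\mathcal{F}_L)$, and $p(\mathcal{F}_L)>\epsilon m/2$ by \eqref{eq:fl-large}, there is an $i\in\{0,\dots,\logk\}$ with $\sum_{(L,S,R)\in\cC_i}(k-|S|)>\epsilon m/(2(\logk+1))$. For that $i$, every element of $\cC_i$ has $k-|S|<2^{i+1}$, so $|\cC_i|\cdot 2^{i+1}\ge \sum_{(L,S,R)\in\cC_i}(k-|S|)$, giving $|\cC_i|>\epsilon m/(2^{i+2}(\logk+1))$.

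Next I would bound $\cC_{i,\textbig}$ as before: each $(L,S,R)\in\cC_{i,\textbig}$ has $\vol^{\textout}(L)\ge 2^{i+3}\epsilon^{-1}(\logk+1)$, i.e. $\vol^{\textout}(L)/\bigl(2^{i+2}\epsilon^{-1}(\logk+1)\bigr)\ge 2$, so summing and using disjointness of the left-partitions ($\sum\vol^{\textout}(L)\le m$) yields $2|\cC_{i,\textbig}|\le \epsilon m/(2^{i+2}(\logk+1))<|\cC_i|$, hence $|\cC_{i,\textbig}|<|\cC_i|/2$ and therefore $|\cC_{i,\textsmall}|=|\cC_i|-|\cC_{i,\textbig}|>|\cC_i|/2>\epsilon m/(2^{i+3}(\logk+1))$. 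Using $2^i\le k$ from \eqref{eq:2ileqk} and $m=n\bar d$ gives $|\cC_{i,\textsmall}|>\epsilon n\bar d/(8k(\logk+1))$; when $\bar d$ is unknown I instead plug in $m\ge nk/4$ from \eqref{eq:mgeqnk4-vertex2} to get $\epsilon m/(2^{i+3}(\logk+1))\ge \epsilon n/(32(\logk+1))$, which is the claimed bound.

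There is no genuine obstacle once \Cref{lem:LR-disjoint} is in hand; the only care needed is in the constant bookkeeping — in particular the factor-of-two slack hidden in the threshold $2^{i+3}\epsilon^{-1}(\logk+1)$ defining $\cC_{i,\textbig}$ (this is exactly what forces $|\cC_{i,\textbig}|<|\cC_i|/2$), and the extra $1/2$ in the initial budget coming from \eqref{eq:fl-large} rather than from the analogous edge-case inequality.
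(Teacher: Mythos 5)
Your proof is correct and follows essentially the same route as the paper's: pigeonhole over the buckets $\cC_0,\dots,\cC_{\logk}$ using \Cref{eq:fl-large}, then bound $|\cC_{i,\textbig}|$ by Markov-style counting over the disjoint small left-partitions guaranteed by \Cref{lem:LR-disjoint}, and finish with $2^i\le k$ together with $m=n\bar d$ (or $m\ge nk/4$). The only cosmetic difference is your cleaner choice of normalizer in the $\cC_{i,\textbig}$ bound (threshold ratio $\ge 2$ rather than the paper's $\ge 4$), which changes nothing.
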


We show that \Cref{lemma:many-small} implies the correctness of the
algorithm. By sampling $\Theta(k \log k/(\epsilon \bar{d}))$ many nodes (or
$\Theta(\log k/(\epsilon))$ vertices if $\bar{d}$ is unknown), we
get an event where $x$ belongs to some vertex set $L$ in
separation triple $(L,S,R) \in \cC_{i,\textsmall}$ with constant
probability (this follows since  $\cC_{i,\textsmall}$ contains
pairwise disjoint small left-partitions by \Cref{lem:LR-disjoint}). We run GapLocalVC for every
$i \in \{0, 1, \ldots, \logk  \}$, and  there exists $i$ such that
$|\cC_{i,\textsmall}| \geq  \epsilon n \bar{d} / (8k ( \logk+1))$ (or $|\cC_{i,\textsmall}| \geq  \epsilon n /  (32(\logk+1))$)  by
\Cref{lemma:many-small}. Therefore, by \Cref{lem:gaplocalvc},
GapLocalVC outputs a vertex-cut of size less than $k$ with constant probability.

\begin{proof}[Proof of \Cref{lemma:many-small}] 
We show that  there is $i > 0$ such that  \begin{align} \label{eq:random2} |\cC_i| > \epsilon m /
(2^{i+2} (\logk+1)). \end{align}  First, we show
that there is $i > 0$ such that 
\begin{align} \label{eq:i-satisfy-large}
\sum_{(L,S,R) \in \mathcal{C}_i} (k - |S|) > \epsilon m/ (2(\logk+1)).
\end{align}
 Suppose otherwise that for every $i$, $\sum_{(L,S,R) \in
   \mathcal{C}_i} (k - |S|) \leq \epsilon m/ (2(\logk+1))$. We have
 $\sum_{(L,S,R) \in \mathcal{F}_L}( \max\{k-|S|,0 \} )
 =\sum_{i=-1}^{\logk} \sum_{(L,S,R)\in \mathcal{C}_i} (k - |S|)
 = \sum_{i=0}^{\logk} \sum_{(L,S,R)\in \mathcal{C}_i} (k - |S|)
 \leq \epsilon m/2$. However, this contradicts 
 \Cref{eq:fl-large}. Second, we show that for any
 $i$, \begin{align} \label{eq:random-name1} |\mathcal{C}_i|2^{i+1} \geq \sum_{(L,S,R) \in \mathcal{C}_i}
 (k - |S|). \end{align} This follows trivially from that each $(L,S,R)$ in the
 set $\mathcal{C}_i$,  $k - |S| \leq 2^{i+1}$.  Therefore, for $i$ that satisfies
 \Cref{eq:i-satisfy-large}, we have 
\begin{align} \label{eq:ci-large}
|\mathcal{C}_i|  \stackrel{(\ref{eq:random-name1})} \geq
 \sum_{(L,S,R) \in \mathcal{C}_i} (k - |S|) /2^{i+1}  \stackrel{(\ref{eq:i-satisfy-large})} > \epsilon m/
 (2^{i+2} (\logk+1)). 
\end{align}

Recall that $\cC_{i, \textbig} = \{ (L,S,R) \in \cC_i \colon \vol^{\textout}(L)
\geq 2^{i+3} (\logk+1) /\epsilon \}$, and $\cC_{i, \textsmall} = \cC_i -
\cC_{i, \textbig}$. We claim that for $i$ that satisfies \Cref{eq:ci-large}, $ |\cC_{i,\textbig}| < |\cC_i|/2$. Indeed,  we have  $$  4 |C_{i,\textbig}| \leq \sum_{(L,S,R) \in \cC_{i, \textbig}} \vol^{\textout}(L)/( \gamma \epsilon^{-1} (\logk+1)) \leq m/( \gamma \epsilon^{-1} (\logk+1))  \stackrel{(\ref{eq:random2})} < 2|\cC_i|. $$
The first inequality follows because  $\vol^{\textout}(L)/( 2^{i+1} \epsilon^{-1} (\logk+1)) \geq 4$ for each $(L,S,R) \in \cC_{i,\textbig}$.The second inequality follows since left-partitions in $\cC_{i,\textbig}$ are disjoint, and $\sum_{(L,S,R) \in \cC_{i, \textbig}} \vol^{\textout}(L) \leq m$. %

Next, we have \begin{align} \label{eq:cctextsmall-vertex}
|\cC_{i,\textsmall}|
\geq |\cC_i|/2   \stackrel{(\ref{eq:random2})} > \epsilon m / (2^{i+3} (\logk+1)) \geq   \epsilon n
\bar{d} / (8k  (\logk+1)). \end{align} The first inequality follows because $|\cC_{i,\textbig}| < |\cC_i|/2$, and $ |\cC_i| =|\cC_{i,\textbig}| +|\cC_{i,\textsmall}| $. The last inequality follows because $m = n\bar{d}$, and $2^i \leq k$.  If $\bar{d}$ is unknown, the last
inequality of  \Cref{eq:cctextsmall-vertex} becomes $\epsilon m / (2^{i+3} (\logk+1))  \stackrel{(\ref{eq:mgeqnk4-vertex2})}\geq   \epsilon nk
/ (32 k  (\logk+1))=\epsilon n / (32 (\logk+1)) $. 
\end{proof}

\subsection{Testing $k$-Vertex-Connectivity: Bounded-Degree Model}

In this section, we prove \Cref{thm:main-property-testing} for testing $k$-vertex-connectivity for bounded degree model. By the same argument in \Cref{sec:test-k-edge-bounded}, we assume that $G$ is $d$-regular, and thus we can sample edge uniformly by \Cref{pro:sample-edge}.

We present an algorithm for testing $k$-edge-connectivity for
bounded-degree model assuming \Cref{lem:gaplocalvc}.
 
\paragraph{Algorithm.}  %
\begin{enumerate}
\item Sample $\Theta(1)$ vertices uniformly.
\item If any of the sampled vertex has degree less than $k$, returns its out-neighbors. 
\item For each $i \in \{0,\ldots,\logk\}$, and for each $j \in
  \{0,\ldots, \logeta\}$ where $\eta_i = 2^{i+2}\epsilon^{-1}\logk$,
\begin{enumerate}
\item Sample $\Theta (\frac{ \logk\logeta }{ \epsilon 2^{j-i} }) = \tilde \Theta( \frac{1}{\epsilon 2^{j-i}}) $  edges uniformly.   
\item let $\nu = 2^{j+1},$ and $\gap = 2^{i}-1$. 
 \item run GapLocalVC$(x, \nu , k, \gap )$ on both $G$ and $G^R$ where
   $G^R$ is $G$ with reversed edges, and $x$ is a vertex from the
   sampled edge of the form $(x,y)$.  
\end{enumerate}
\item Return a vertex-cut of size less than $ k $ if any execution of GapLocalVC returns a vertex-cut. Otherwise, declare that $G$ is $k$-vertex-connected.
\end{enumerate}

\paragraph{Query and Time Complexity.}  We first show that the number of
edge queries is at most $\ot(k/\epsilon)$. For each
vertex $x$ from the sampled edge $(x,y)$ and
for each $(i,j)$ pair in loops, by \Cref{lem:gaplocalvc}, GapLocalVC
queries $\ot( \nu k/\gap) = \ot(2^{j-i}k)$ edges, and we sample $\ot(
1/(\epsilon 2^{j-i}))$ times.  Therefore, by repeating $\ot(1)$ itereations, the total edge queries is at most $\ot( k/ \epsilon)$. 

Next, we show that the running time is $\ot(k/\epsilon^{1.5})$. This
follows from the same argument, but we use the running time for
GapLocalEC instead of edge-query complexity. For each iteration, the
running time is $\ot( (\nu/\gap)^{1.5}k \cdot 1/(\epsilon 2^{j-i})) = \ot(
(2^{j-i})^{1.5}k/(\epsilon 2^{j-i})) = \ot( k/\epsilon^{1.5} )$. The last
inequality follows because by definition $2^j \leq 2^{i+2}
\epsilon^{-1} \logk$.

\paragraph{Correctness.} If $G$ is $k$-vertex-connected, then the algorithm
never returns any vertex-cut, and we are done. Suppose $G$ is
$\epsilon$-far from being $k$-vertex-connected, then we show that the
algorithm outputs a vertex-cut of size less than $k$ with constant
probability.  Since $G$ is $d$-regular, we have $\bar{d} =
d$. Therefore,  we can use results from \Cref{sec:test-k-vertex}. Let $\mathcal{F}_L$ be the set of separation triples with small left-partitions in $\mathcal{F}$, and $\mathcal{F}_R$ be the set of separation triples with small-right partitions in $\mathcal{F}$.  
 By \Cref{thm:eps-far-vertex}, we have that $ \max \{ p(\mathcal{F}_L)
 , p(\mathcal{F}_R) \} > \epsilon m/2$. We assume without loss of generality that
 \begin{align} \label{eq:fl-large2}
  p(\mathcal{F}_L) > \epsilon m/2. 
 \end{align} 
Let $\mathcal{C}_{-1}= \{ (L,S,R) \in \mathcal{F}_L \colon k \leq |S| \}$. For $i \in \{0,\ldots,\logk\}$, let  $\mathcal{C}_i = \{ (L,S,R) \in \mathcal{F}_L \colon k - |S| \in [2^i, 2^{i+1}) \}$.  Let $\cC_{i, \textbig} = \{ (L,S,R) \in \cC_i \colon \vol^{\textout}(L) \geq  2^{i+3} \epsilon^{-1} \logk \}$, and $\cC_{i, \textsmall} = \cC_i - \cC_{i, \textbig}$. %
By \Cref{lemma:many-small}, there is $i$ such that  
\begin{align} \label{eq:csmallgepsilonnbard-vertex}
 |\cC_{i,\textsmall}| >  \epsilon n \bar{d} / (8k  (\logk+1)) = \epsilon m / (8k  (\logk+1)).
\end{align}
The last inequality follows since $n\bar{d} = nd = m$. We fix $i$ as in \Cref{eq:csmallgepsilonnbard-vertex}.  Let $\eta_i = 2^{i+3}\epsilon^{-1} \logk$. For $j \in \{0,\ldots, \logeta\},$ let $\cC_{i,\textsmall,j} = \{ (L,S,R) \in \cC_{i,\textsmall} \colon \vol^{\textout}(L) \in [2^j, 2^{j+1}) \}$. 

\begin{lemma} \label{lem:j-large-vol-vertex}
For $i$ that satisfies \Cref{eq:csmallgepsilonnbard-vertex}, there is $j$ such that $\sum_{(L,S,R) \in \cC_{i,\textsmall,j}} \vol^{\textout}(L) \geq   \epsilon m 2^{j-i}/ (8 (\logk+1)( \logeta+1))$. 
\end{lemma}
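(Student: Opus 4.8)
The plan is to adapt the proof of \Cref{lem:bigvol-small} from the edge case essentially verbatim, with the sets $X$ replaced by left-partitions $L$ of separation triples. Fix an index $i$ satisfying \Cref{eq:csmallgepsilonnbard-vertex} and recall that $\eta_i = 2^{i+3}\epsilon^{-1}\logk$ and $\cC_{i,\textbig} = \{(L,S,R)\in\cC_i \colon \vol^{\textout}(L) \geq \eta_i\}$, so every $(L,S,R)\in\cC_{i,\textsmall}$ has $\vol^{\textout}(L) < \eta_i$.

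First I would verify that the bands $\{\cC_{i,\textsmall,j}\}_{j=0}^{\logeta}$ partition $\cC_{i,\textsmall}$: each $(L,S,R)\in\cC_{i,\textsmall}$ has $\vol^{\textout}(L)\geq 1$, since $L\neq\emptyset$ and every vertex has out-degree at least $1$, and $\vol^{\textout}(L)<\eta_i$, so it falls in exactly one dyadic band $[2^j,2^{j+1})$ with $0\leq j\leq\logeta$. Hence $\sum_{j=0}^{\logeta}|\cC_{i,\textsmall,j}| = |\cC_{i,\textsmall}|$, and by averaging over the $\logeta+1$ bands there is an index $j$ with $|\cC_{i,\textsmall,j}|\geq|\cC_{i,\textsmall}|/(\logeta+1)$. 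I would fix this $j$.

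Next, using that every triple in $\cC_{i,\textsmall,j}$ contributes out-volume at least $2^j$, I would run the chain
\[
\sum_{(L,S,R)\in\cC_{i,\textsmall,j}}\vol^{\textout}(L)\;\geq\;|\cC_{i,\textsmall,j}|\cdot 2^j\;\geq\;\frac{|\cC_{i,\textsmall}|\,2^j}{\logeta+1}\;\geq\;\frac{\epsilon m\,2^j}{8k(\logk+1)(\logeta+1)}\;\geq\;\frac{\epsilon m\,2^{j-i}}{8(\logk+1)(\logeta+1)},
\]
where the third inequality is \Cref{eq:csmallgepsilonnbard-vertex} and the last invokes $2^i\leq k$ from \Cref{eq:2ileqk}, exactly the final chain in the proof of \Cref{lem:bigvol-small}. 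Pairwise disjointness of the small left-partitions (\Cref{lem:LR-disjoint}) is not needed for this counting argument; it is used only afterward, to argue that an edge sampled uniformly has its tail inside one of these left-partitions with constant probability.

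The one nontrivial point, and the main thing to get right, is this last chaining step: one must keep careful track of the dyadic factor $2^i$ against $k$ and of the two logarithmic denominators $\logk+1$ and $\logeta+1$, so that the resulting lower bound $\epsilon m\,2^{j-i}/(8(\logk+1)(\logeta+1))$ lines up with the sampling budget $\Theta(\logk\logeta/(\epsilon 2^{j-i}))$ used by the algorithm. Everything else is the same bookkeeping already carried out for \Cref{lemma:many-small} and \Cref{lem:bigvol-small}.
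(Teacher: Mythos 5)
Your proof is correct and matches the paper's argument essentially verbatim: both pick $j$ by averaging over the $\logeta+1$ dyadic volume bands and then chain $\sum\vol^{\textout}(L)\geq|\cC_{i,\textsmall,j}|2^j\geq|\cC_{i,\textsmall}|2^j/(\logeta+1)$ with \Cref{eq:csmallgepsilonnbard-vertex} and $2^i\leq k$. Your remark that pairwise disjointness of the small left-partitions is not actually needed for the first inequality (only for the later sampling step and for $\sum\vol^{\textout}(L)\leq m$ elsewhere) is a fair clarification, since the paper cites \Cref{lem:LR-disjoint} there even though the lower bound follows purely from each term being at least $2^j$.
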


We show that \Cref{lem:j-large-vol-vertex} implies the correctness.  By
sampling $\Theta (\frac{ \logk\logeta }{ \epsilon 2^{j-i} }) = \tilde
\Theta( \frac{1}{\epsilon 2^{j-i}}) $ edges, we get
an event where a sampled edge $(u,v)$ has $u \in L$ for some $L$ from a separation triple $(L,S,R) \in \cC_{i,\textsmall, j}$ with constant probability ( since  $\cC_{i,\textsmall}$ contains
pairwise disjoint small left-partitions by \Cref{lem:LR-disjoint}).  For each $(i,j)  \in
\{0, 1, \ldots, \logk\} \times \{0,\ldots, \logeta\}$, we run GapLocalVC with $\nu = 2^{j+1}$, and
$\gap = 2^i-1$; also, there exists $(i,j)$ such that $\sum_{X \in \cC_{i,\textsmall,j}}
\vol^{\textout}(X) \geq \epsilon  m 2^{j-i}/ (8 (\logk+1) (\logeta+1)) $ by
\Cref{lem:j-large-vol-vertex}. Therefore, by \Cref{lem:gaplocalvc}, 
GapLocalVC outputs an vertex-cut of size less than $k$ with constant probability. 

\begin{proof}[Proof of \Cref{lem:j-large-vol-vertex}]
We claim that there is $j$ such that  \begin{align} \label{eq:citextsmallj-vertex} |\cC_{i,\textsmall,j}| \geq |\cC_{i,\textsmall}| / (\logeta+1) \end{align}
Suppose otherwise. We have for all $j \in \{0, \ldots, \logeta\}, |\cC_{i,\textsmall,j}| < |\cC_{i,\textsmall}|/ (\logeta +1)$. Therefore, $\sum_{j \in
  \{0,\ldots, \logeta \} } |\cC_{i,\textsmall,j}| <
|\cC_{i,\textsmall}|$, a contradiction.  Now, for the same $j$,  we have
\begin{align*} 
 \sum_{(L,S,R) \in \cC_{i,\textsmall,j}} \vol^{\textout}(L)& \geq
|\cC_{i,\textsmall,j}|2^j \\&\stackrel{(\ref{eq:citextsmallj-vertex})} \geq
|\cC_{i,\textsmall}|2^j/ (\logeta+1) \\&
\stackrel{(\ref{eq:csmallgepsilonnbard-vertex})} \geq
\epsilon m 2^j/ (8 k (\logk+1)( \logeta+1)) \\&
 \geq \epsilon m 2^{j-i}/ (8 (\logk+1)(\logeta+1)) 
\end{align*} 
The first inequality is because the small left-partitions in $\cC_{i,\textsmall,j}$ are pairwise disjoint by \Cref{lem:LR-disjoint}, and that $\vol^{\textout}(L) \geq 2^j$ by definition.  The last inequality follows since $2^i \leq k$ by definition.  
\end{proof}

	\section*{Acknowledgement}  
        
        This project has received funding from the European Research
        Council (ERC) under the European Union's Horizon 2020 research
        and innovation programme under grant agreement No
        715672 and 759557. Nanongkai was also partially supported by the Swedish
        Research Council (Reg. No. 2015-04659.).

	\ifdefined\AdvCite

	\printbibliography[heading=bibintoc] 
	
	\else 
 	
	\bibliographystyle{alpha}
	\bibliography{references} 
	
	\fi

	 \appendix  
        
\section{An alternate algorithm for local edge connectivity}

\label{sec:localEC_alternate}

In this section, we show local algorithms for detecting an edge cut
of size $k$ and volume $\nu$ containing some seed node in time $O(\nu k^{2})$.
Both the algorithms and analysis are very simple. 
\begin{theorem}
\label{thm:localEC_approx}There is a randomized (Monte Carlo) algorithm
that takes as input a vertex $x\in V$ of an $n$-vertex $m$-edge
graph $G=(V,E)$ represented as adjacency lists, a volume parameter
$\nu$, a cut-size parameter $k\ge1$, and an accuracy parameter $\epsilon\in(0,1]$
where $\nu<\epsilon m/8$ and runs in $O(\nu k/\epsilon)$ time and
 outputs either
\begin{itemize}
\item the ``$\bot$'' symbol indicating that, with probability $1/2$,
there is no $S\ni x$ where $|E(S,V-S)|<k$ and $\vol^{\out}(S)\le\nu$,
or
\item a set $S\ni x$ where $S\neq V$, $|E(S,V-S)|<\left\lfloor (1+\epsilon)k\right\rfloor $
and $\vol^{\out}(S)\le10\nu/\epsilon$.\footnote{We note that the factor 10 in \Cref{thm:localEC_approx} can be improved.
We only use this factor for simplifying the analysis.}
\end{itemize}
\end{theorem}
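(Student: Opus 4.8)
The plan is to run a lightly modified version of \Cref{alg:local_ec_new} with slack $\gap=\lfloor\epsilon k\rfloor$. We perform $\lfloor(1+\epsilon)k\rfloor$ rounds; in each round we grow a search (BFS or DFS) tree $T$ from $x$ in the current residual graph $G_{\mathrm{res}}$ (which starts as $G$ and has the directions of all previously found $x$-to-$y$ paths reversed), marking every freshly scanned edge, and with probability $p=(\gap+1)/(8\nu)$ per freshly marked edge we stop the round at the current tail $a$ and reverse the tree path $P_{xa}$. If the cumulative number of marked edges ever reaches a budget $\tau=\Theta(\nu/\epsilon)$, we return $\bot$ and halt. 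If in some round the search gets stuck (it exhausts the component reachable from $x$), we let $S'=V(T)$ and return $S'$ (which, since we have not halted via the budget, has $\vol^{\out}_{G_{\mathrm{res}}}(S')\le\tau$), then halt. If all $\lfloor(1+\epsilon)k\rfloor$ rounds end with a coin flip, we return $\bot$.

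Two parts are routine. For the running time, every round's search touches at most $O(\tau)=O(\nu/\epsilon)$ edges before the budget check fires, and there are $O(k)$ rounds, so the total cost is $O(\nu k/\epsilon)$; the hypothesis $\nu<\epsilon m/8$ leaves enough room to fix the budget strictly below $m$, so it is never forced to trigger spuriously and in particular $S'\ne V$ whenever a set is returned. For a returned set $S'=V(T)$: the search got stuck, so $|E_{G_{\mathrm{res}}}(S',V-S')|=0$, and strictly fewer than $\lfloor(1+\epsilon)k\rfloor$ reversals have been performed (the terminating round reverses nothing). By \Cref{lem:reverse_cutsize_new}, undoing each reversal raises $|E(S',V-S')|$ and $\vol^{\out}(S')$ by at most $1$ each, so in $G$ we get $|E(S',V-S')|<\lfloor(1+\epsilon)k\rfloor$ and $\vol^{\out}(S')\le\tau+\lfloor(1+\epsilon)k\rfloor=O(\nu/\epsilon)$; with a careful choice of constants (and using $\epsilon\le1$ and $\nu\ge k$, which may be assumed exactly as in \Cref{thm:localEC_approx_new}) this can be made $\le 10\nu/\epsilon$.

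The substance is the guarantee attached to ``$\bot$'': assuming there exists $S\ni x$ with $|E_G(S,V-S)|<k$ and $\vol^{\out}_G(S)\le\nu$, I must show a set is returned with probability at least $1/2$. Mirroring the analysis of \Cref{alg:local_ec_new}, I would exclude two failure events by Markov's inequality. First, since $\vol^{\out}_{G_{\mathrm{res}}}(S)$ never increases (\Cref{lem:reverse_cutsize_new}) and each edge is examined for a coin flip at most once, the expected number of rounds whose stopping vertex lies in $S$ is at most $\vol^{\out}_G(S)\cdot p\le\nu p=(\gap+1)/8$, so at most $\gap$ such rounds occur with probability $\ge 7/8$. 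Second, the expected total number of marked edges is at most $\lfloor(1+\epsilon)k\rfloor/p=O(\nu/\epsilon)$, so with $\tau$ a large enough constant times this the budget survives with probability $\ge 3/4$. On the intersection of these two events, and if the algorithm has not already returned a stuck set, among the $\lfloor(1+\epsilon)k\rfloor$ rounds at most $\gap$ stop inside $S$, hence at least $k$ stop at a vertex \emph{outside} $S$; each such round strictly decreases $|E_{G_{\mathrm{res}}}(S,V-S)|$, so after at most $|E_G(S,V-S)|<k$ of them this quantity is $0$, and thereafter every search from $x$ is confined to $S$ and scans at most $\vol^{\out}_{G_{\mathrm{res}}}(S)\le\nu<\tau$ edges. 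A short counting argument — the rounds that still carry an inside-$S$ coin flip cannot fill all of the at least $\gap+1$ rounds that remain after the cut across $S$ is exhausted, because each outside-$S$ round used up one slot and one unit of ``supply'' — shows that at least one later round runs to completion, gets stuck at some $S'\subseteq S$, and is returned. A union bound over the two failure events gives probability $\ge 1/2$, with slack.

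The main obstacle is exactly that last paragraph: making the interaction between ``few stopping vertices land inside $S$'' and ``the search is eventually trapped inside $S$'' rigorous requires some bookkeeping, because as paths are reversed the set of edges currently pointing out of $S$ changes over time, so one has to argue both that the residual cut across $S$ only ever shrinks and that a given scanned edge contributes to the count of inside-$S$ stops at most once (since scanned edges are never re-examined for a coin flip). Calibrating $p$ and $\tau$ so that the volume bound comes out as the stated $10\nu/\epsilon$ rather than some larger absolute constant times $\nu/\epsilon$ is a further arithmetic chore; following the footnote in the statement I would not fight for the constant $10$, since any absolute constant suffices for all downstream uses.
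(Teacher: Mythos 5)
Your proposal is correct, but it takes a genuinely different route from the one the paper actually uses for this theorem. You adapt \Cref{alg:local_ec_new} from \Cref{sec:localEC} — the per-edge-coin-flip algorithm used to prove \Cref{thm:localEC_approx_new} — by instantiating $\gap=\lfloor\epsilon k\rfloor$ and tracking a cumulative budget $\tau=\Theta(\nu/\epsilon)$. In effect you are re-deriving \Cref{cor:localEC_approx_new}. The paper's appendix instead proves \Cref{thm:localEC_approx} via the distinct \Cref{alg:local}: each round grows a DFS for \emph{exactly} $8\nu/\epsilon$ edges (no budget bookkeeping, no per-edge Bernoulli), then samples one of the visited edges \emph{uniformly} to decide where to stop and reverse. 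The two analyses differ accordingly. Because \Cref{alg:local} visits a deterministic number of edges per round, a single Markov bound — on the number of rounds whose sampled vertex falls in $S$, where $\Pr[X_i=1]\le\vol^{\out}(S)/(8\nu/\epsilon)\le\epsilon/8$ — suffices, and the fixed per-round cost immediately gives the running time and, together with the ``reversed at most $\lfloor(1+\epsilon)k\rfloor-1$ paths'' observation from \Cref{lem:reverse_cutsize}, the clean volume bound $8\nu/\epsilon+\lfloor(1+\epsilon)k\rfloor\le 10\nu/\epsilon$. Your route needs two Markov arguments (one for the number of in-$S$ stops, one to keep the random total marked-edge count under $\tau$) and consequently lands at a larger absolute constant than $10$ in the volume bound; you correctly flag this and, following the theorem's own footnote, decline to fight for the constant. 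Both approaches are sound; the appendix algorithm is arguably the cleaner one for this particular statement because it sidesteps the budget-overflow failure event entirely, while your approach has the virtue of reusing the machinery already in the body of the paper. One small remark on your counting argument at the end: the cleanest phrasing is that on the good event at most $\gap$ rounds stop at an edge with tail in $S$ while at most $|E_G(S,V-S)|\le k-1$ rounds can stop at an edge with tail outside $S$ (each such stop decrements the residual cut, which cannot go negative), so at most $k-1+\gap$ of the $k+\gap$ rounds stop at all; hence some round runs to completion inside $S$ and returns. Your version reaches the same conclusion but the intermediate claim ``at least $\gap+1$ rounds remain after the cut is exhausted'' should be ``at least $k-|E_G(S,V-S)|\ge 1$ rounds remain,'' which is what the tally actually gives.
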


By setting $\epsilon<\frac{1}{k}$, we have that $\left\lfloor (1+\epsilon)k\right\rfloor =k$.
In particular, we obtain an algorithm for the \emph{exact} problem:
\begin{corollary}
\label{thm:localEC_exact}There is a randomized (Monte Carlo) algorithm
that takes as input a vertex $x\in V$ of an $n$-vertex $m$-edge
graph $G=(V,E)$ represented as adjacency lists, a volume parameter
$\nu$, and a cut-size parameter $k\ge1$ where $\nu<m/8k$ and runs
in $O(\nu k^{2})$ time and outputs either
\begin{itemize}
\item the ``$\bot$'' symbol indicating that, with probability $1/2$,
there is no $S\ni x$ where $|E(S,V-S)|<k$ and $\vol^{\out}(S)\le\nu$,
or
\item a set $S\ni x$ where $S\neq V$, $|E(S,V-S)|<k$ and $\vol^{\out}(S)\le10\nu k$.
\end{itemize}
\end{corollary}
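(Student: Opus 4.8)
The plan is to derive \Cref{thm:localEC_exact} as a one-line instantiation of \Cref{thm:localEC_approx}: run the very same algorithm on the same input $(x,\nu,k)$, but with the accuracy parameter fixed to a value $\epsilon$ that is $\Theta(1/k)$ and strictly below $1/k$. Concretely, I would take any
\[
\epsilon\in\Bigl(\max\Bigl\{\tfrac{8\nu}{m},\ \tfrac{1}{\,k+1/(10\nu)\,}\Bigr\},\ \tfrac1k\Bigr).
\]
Before using it I would check two sanity conditions. The interval is nonempty: $\nu<m/(8k)$ gives $\tfrac{8\nu}{m}<\tfrac1k$, and $\tfrac{1}{k+1/(10\nu)}<\tfrac1k$ holds trivially (as $1/(10\nu)>0$), so the larger of the two lower endpoints is still below $\tfrac1k$. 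And $\epsilon\in(0,1]$ since $\epsilon<\tfrac1k\le1$, as \Cref{thm:localEC_approx} demands.

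Next I would invoke \Cref{thm:localEC_approx} with this $\epsilon$ and translate each of its conclusions. The constraint $\epsilon>\tfrac{8\nu}{m}$ is precisely the precondition $\nu<\epsilon m/8$, so the theorem applies, and its ``$\bot$'' guarantee (probability $1/2$ that no $S\ni x$ with $|E(S,V-S)|<k$ and $\vol^{\out}(S)\le\nu$ exists) carries over unchanged. When a set $S$ is returned, the theorem yields $S\ni x$, $S\neq V$, $|E(S,V-S)|<\lfloor(1+\epsilon)k\rfloor$, and $\vol^{\out}(S)\le 10\nu/\epsilon$. Since $0<\epsilon k<1$ and $k$ is a positive integer, $\lfloor(1+\epsilon)k\rfloor=\lfloor k+\epsilon k\rfloor=k$, so the cut has size less than $k$ --- an exact answer. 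The constraint $\epsilon>\tfrac{1}{k+1/(10\nu)}$ rearranges to $10\nu/\epsilon<10\nu k+1$, and as $\vol^{\out}(S)$ is an integer this forces $\vol^{\out}(S)\le 10\nu k$. Finally, since $\nu\ge1$ we have $\epsilon>\tfrac{1}{k+1/(10\nu)}\ge\tfrac{1}{k+1}$, hence $1/\epsilon<k+1=O(k)$, so the running time $O(\nu k/\epsilon)$ becomes $O(\nu k^2)$.

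The only delicate point --- and it is a small one --- is that the four demands on $\epsilon$ (large enough for the precondition $\nu<\epsilon m/8$; large enough that $10\nu/\epsilon$ rounds down to at most $10\nu k$; large enough that $1/\epsilon=O(k)$; yet small enough that $\lfloor(1+\epsilon)k\rfloor=k$) pull in opposite directions, so one must confirm they are jointly satisfiable; the displayed interval does exactly that, using only $\nu<m/(8k)$ and $\nu\ge1$. I do not expect any further obstacle: the remainder is pure substitution into \Cref{thm:localEC_approx}, with no need to revisit the underlying BFS/augmenting-path procedure or its analysis.
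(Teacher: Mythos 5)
Your proof is correct and takes essentially the same route as the paper: instantiate \Cref{thm:localEC_approx} with an accuracy parameter $\epsilon$ just below $1/k$ so that $\lfloor(1+\epsilon)k\rfloor=k$. You are somewhat more careful than the paper's one-line sketch, which merely says ``set $\epsilon<1/k$'' and leaves implicit why $10\nu/\epsilon$ (which strictly exceeds $10\nu k$ for any $\epsilon<1/k$) can be replaced by $10\nu k$; your explicit interval for $\epsilon$ together with the integrality of $\vol^{\out}(S)$ closes that small gap.
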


\begin{algorithm}
\SetKwFor{RepTimes}{repeat}{times}{end}

\RepTimes{$\left\lfloor (1+\epsilon)k\right\rfloor $}
{
	Grow a DFS tree $T$ starting from $x$ and stop once exactly $8\nu/\epsilon$ edges have been visited. 
	
	Let $E_{\DFS}$ be the set of edges visited.
	
	\lIf{$|E_{\DFS}|<8\nu/\epsilon$}{\Return $V(T)$.}
	\Else{
		Sample an edge $(y',y)\in E_{\DFS}$ uniformly.
		 
		Reverse the direction of edges in the path $P_{xy}$ in $T$ from $x$ to $y$.
	}
}
\Return{$\bot$.}

\caption{$\localEC(x,\nu,k,\epsilon)$\label{alg:local}}
\end{algorithm}

The algorithm for \Cref{thm:localEC_approx} in described in \Cref{alg:local}.
We start with the following important observation.
\begin{lemma}
\label{lem:reverse_cutsize}Let $S\subset V$ be any set where $x\in S$.
Let $P_{xy}$ be a path from $x$ to $y$. Suppose we reverse the
direction of edges in $P_{xy}$. Then, we have $|E(S,V-S)|$ and $\vol^{\out}(S)$
are both decreased exactly by one if $y\notin S$. Otherwise, $|E(S,V-S)|$
and $\vol^{\out}(S)$ stay the same.
\end{lemma}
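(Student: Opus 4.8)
Fix the set $S$ with $x\in S$ and the path $P_{xy}$ (which is simple, being a path in the tree $T$). The plan is to track only the edges of $P_{xy}$, since reversing an edge that lies entirely inside $S$ or entirely inside $V-S$ changes neither $|E(S,V-S)|$ nor $\vol^{\out}(S)$: such an edge is never counted in $|E(S,V-S)|$ in either orientation, and its contribution to $\vol^{\out}(S)$ is $1$ (if inside $S$) or $0$ (if inside $V-S$) regardless of orientation. So it suffices to understand the \emph{crossing edges} of $P_{xy}$, i.e. those with one endpoint in $S$ and one in $V-S$. Call a crossing edge \emph{forward} if it is directed from $S$ to $V-S$ and \emph{backward} otherwise; along the walk from $x$ to $y$, the crossing edges alternate between forward and backward, because between two consecutive crossings the path stays on one side of the cut.

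First suppose $y\notin S$. Since $P_{xy}$ starts at $x\in S$ and ends at $y\in V-S$, the first crossing edge traversed is forward, the last is forward, and they alternate, so along $P_{xy}$ the number of forward crossing edges is exactly one more than the number of backward crossing edges. Reversing all edges of $P_{xy}$ turns every forward crossing edge into a backward one and vice versa; hence $|E(S,V-S)|$ changes by $-(\text{\# forward}) + (\text{\# backward}) = -1$. For the volume, a forward crossing edge $(a,b)$ with $a\in S$ contributes $1$ to $\vol^{\out}(S)$ before reversal and $0$ after (it becomes $(b,a)$ with $b\notin S$), while a backward crossing edge contributes $0$ before and $1$ after; edges of $P_{xy}$ lying inside $S$ or inside $V-S$ contribute invariantly as noted. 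Therefore $\vol^{\out}(S)$ also changes by $-(\text{\# forward}) + (\text{\# backward}) = -1$.

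Now suppose $y\in S$. Then $P_{xy}$ starts and ends in $S$, so its crossing edges come in an even number, alternating forward/backward and beginning and ending with forward, giving equal numbers of forward and backward crossing edges. By the same bookkeeping, reversing $P_{xy}$ leaves both $|E(S,V-S)|$ and $\vol^{\out}(S)$ unchanged. This completes both cases.

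\textbf{Anticipated difficulty.} There is no deep obstacle; the only point requiring care is the alternation claim for crossing edges and the observation that non-crossing edges of the path are neutral for both quantities—once these are set up, the parity argument and the contribution count are routine. The argument is identical in spirit to that of \Cref{lem:reverse_cutsize_new}.
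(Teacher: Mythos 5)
Your proof is correct and follows essentially the same argument as the paper's proof of \Cref{lem:reverse_cutsize_new}: track the crossing edges along $P_{xy}$, observe that they alternate between forward (out of $S$) and backward (into $S$), and count. One small slip in the $y\in S$ case: the alternating sequence of crossings begins with a forward edge and ends with a \emph{backward} edge (not forward), since the path must re-enter $S$ on its last crossing; the conclusion that the forward and backward counts are equal is unaffected.
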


It is clear that running time of \Cref{alg:local} is $\left\lfloor (1+\epsilon)k\right\rfloor \times O(\nu/\epsilon)=O(\nu k/\epsilon)$ 
because the DFS tree only requires $O(\nu/\epsilon)$ for visiting $O(\nu/\epsilon)$ edges. 
The two lemmas below imply the correctness of \Cref{thm:localEC_approx}
\begin{lemma}
If a set $S$ is returned, then $S\ni x$, $S\neq V$, $|E(S,V-S)|<\left\lfloor (1+\epsilon)k\right\rfloor $
and $\vol^{\out}(S)\le10\nu/\epsilon$.
\end{lemma}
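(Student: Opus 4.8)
The plan is to read all three conclusions off a single run of \Cref{alg:local}, paying for the edge reversals with \Cref{lem:reverse_cutsize}. Suppose a set $S$ is returned. By inspection of the algorithm this can happen only on the line that returns $V(T)$, so $S=V(T)$ for the DFS tree $T$ grown in some iteration, and since every such DFS is rooted at $x$ we have $x\in S$. That line fires only when $|E_{\DFS}|<8\nu/\epsilon$, i.e.\ the DFS halted \emph{because it got stuck} rather than because it exhausted its edge budget. Let $G_t$ denote the graph obtained from the input graph $G_0=G$ by the $t$ reversals carried out in the iterations preceding the returning one. A stuck DFS has examined every out-edge of every vertex of $V(T)=S$, and the head of each such edge lies in $S$; hence, in $G_t$, we have $|E_{G_t}(S,V-S)|=0$ and $\vol^{\out}_{G_t}(S)=|E_{\DFS}|<8\nu/\epsilon$. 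This already gives $S\neq V$: if $S=V$ then $E_{\DFS}$ would be all of $E$, so $|E_{\DFS}|=m$, contradicting $|E_{\DFS}|<8\nu/\epsilon<m$, where the final inequality is exactly the hypothesis $\nu<\epsilon m/8$.

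Next I would transfer these two estimates from $G_t$ back to $G_0$. The returning iteration performs no reversal (it exits before the reversal step), so the total number of reversals performed is $t\le\lfloor(1+\epsilon)k\rfloor-1$, each along a tree path $P_{x y_i}$ from $x$ to some $y_i$. Applying \Cref{lem:reverse_cutsize} to the fixed set $S$ (note $x\in S$) for each of these reversals, both $|E(S,V-S)|$ and $\vol^{\out}(S)$ are non-increasing as we move from $G_0$ to $G_t$, and each drops by a total of at most $t$. Therefore $|E_{G_0}(S,V-S)|\le|E_{G_t}(S,V-S)|+t=t<\lfloor(1+\epsilon)k\rfloor$, and $\vol^{\out}_{G_0}(S)\le\vol^{\out}_{G_t}(S)+t<8\nu/\epsilon+\lfloor(1+\epsilon)k\rfloor$. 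Using $\epsilon\le1$, which gives $\lfloor(1+\epsilon)k\rfloor\le2k$, together with the mild assumption $\nu\ge k$ used elsewhere in this paper for the local algorithms (cf.\ \eqref{eq:precon_ec}), we get $\lfloor(1+\epsilon)k\rfloor\le2k\le2\nu\le2\nu/\epsilon$, hence $\vol^{\out}_{G_0}(S)<10\nu/\epsilon$, completing the proof.

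The argument is short, and I expect the only places one can stumble are the bookkeeping ones: recognising that ``$S$ returned'' forces the DFS to be stuck, which is precisely what makes $|E(S,V-S)|=0$ and $\vol^{\out}(S)=|E_{\DFS}|$ hold \emph{in the current graph} $G_t$; keeping $G_t$ and the original $G_0$ apart when invoking \Cref{lem:reverse_cutsize}; and the off-by-one in the reversal count coming from the fact that the returning iteration itself does not reverse. Once these are handled, the constant $10$ comes out with slack.
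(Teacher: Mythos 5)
Your proof is correct and mirrors the paper's argument: a returned $S=V(T)$ means the DFS got stuck, giving $|E(S,V-S)|=0$ and $\vol^{\out}(S)<8\nu/\epsilon$ in the current (reversed) graph, and the at most $\lfloor(1+\epsilon)k\rfloor-1$ reversals preceding the returning iteration are undone via \Cref{lem:reverse_cutsize}. The only difference is that you explicitly flag the implicit assumption $\lfloor(1+\epsilon)k\rfloor\le 2\nu/\epsilon$ (equivalently $\nu\ge k$), which the paper invokes silently in the step $8\nu/\epsilon+\lfloor(1+\epsilon)k\rfloor\le 10\nu/\epsilon$.
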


\begin{proof}
If $S$ is returned, then the DFS tree $T$ get stuck at $S=V(T)$.
That is, $|E(S,V-S)|=0$ and $\vol^{\out}(S)\le8\nu/\epsilon$ at
the end of the algorithm.  Note that $x\in S$ and $S\neq V$ because
$8\nu/\epsilon<m$. Observe that the algorithm has reversed strictly
less than $\left\lfloor (1+\epsilon)k\right\rfloor $ many paths $P_{xy}$,
because the algorithm did not reverse a path in the iteration that
$S$ is returned. So \Cref{lem:reverse_cutsize} implies that, initially,
$|E(S,V-S)|<\left\lfloor (1+\epsilon)k\right\rfloor $ and, $\vol^{\out}(S)< 8\nu/\epsilon+\left\lfloor (1+\epsilon)k\right\rfloor \le10\nu/\epsilon$.
\end{proof}
\begin{lemma}
If $\bot$ is returned, then, with probability at least $1/2$, there
is no $S\ni x$ where $|E(S,V-S)|<k$ and $\vol^{\out}(S)\le\nu$.
\end{lemma}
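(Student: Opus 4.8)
The plan is to prove the quantitative contrapositive: if there \emph{is} a set $S\ni x$ with $|E(S,V-S)|<k$ and $\vol^{\out}(S)\le\nu$, then \Cref{alg:local} outputs $\bot$ with probability at most $1/2$ (in fact at most $1/4$). Fix such an $S$. The backbone is \Cref{lem:reverse_cutsize}: each path reversal performed by the algorithm either decreases both $|E(S,V-S)|$ and $\vol^{\out}(S)$ by exactly one (when the endpoint $y$ the path is reversed to lies outside $S$), or leaves both of them unchanged (when $y\in S$). Hence both quantities are non-increasing over the whole run; in particular $\vol^{\out}(S)\le\nu$ at every point.

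First I would observe that outputting $\bot$ forces every iteration to reach the reversal step. If at the start of an iteration $|E(S,V-S)|=0$, the DFS from $x\in S$ cannot leave $S$, so it visits at most $\vol^{\out}(S)\le\nu<8\nu/\epsilon$ edges and the algorithm returns $V(T)$, a nonempty proper subset of $V$ (proper since $\nu<\epsilon m/8$ forces $\vol^{\out}(S)<m$), which is not $\bot$. Therefore, conditioned on the output being $\bot$, all $N:=\lfloor(1+\epsilon)k\rfloor$ iterations performed a reversal and $|E(S,V-S)|\ge1$ at the start of each. Since $|E(S,V-S)|$ starts below $k$ and drops by one on each ``good'' reversal ($y\notin S$) but never otherwise, at most $k-2$ of the first $N-1$ reversals are good, so at least $N-k+1>\epsilon k$ of them are ``bad'' ($y\in S$).

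Next I would bound the probability of a bad reversal in a single iteration. Each time the reversal step is reached, $E_{\DFS}$ has exactly $8\nu/\epsilon$ edges, and at most $\vol^{\out}(S)\le\nu$ of them have their tail in $S$ because a DFS scans every out-edge at most once. The reversed path ends at the tail $y$ of a uniformly random edge of $E_{\DFS}$ (in the notation of \Cref{alg:local_ec_new}, $y$ is the tail of the sampled edge), so $\Pr[y\in S\mid \text{past}]\le\nu/(8\nu/\epsilon)=\epsilon/8$ regardless of how many reversals have already happened. Letting $B_i$ indicate a bad reversal in iteration $i$, this gives $\mathbb{E}\!\left[\sum_{i=1}^N B_i\right]\le N\epsilon/8$, while the previous paragraph shows that ``$\bot$ is output'' implies $\sum_{i=1}^N B_i\ge N-k+1$. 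Markov's inequality then yields $\Pr[\bot]\le \frac{N\epsilon/8}{N-k+1}<\frac{N}{8k}\le\frac14$, using $N\le(1+\epsilon)k\le2k$ and $N-k+1>\epsilon k$.

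The bookkeeping and the final Markov step are routine; the one place that needs care is that the bound $\Pr[y\in S\mid\text{past}]\le\epsilon/8$ holds \emph{conditionally on the entire history}, which is precisely where the monotonicity $\vol^{\out}(S)\le\nu$ gets used, so that the count of visited edges with tail in $S$ is under control in every iteration and the per-iteration estimates can be summed into an expectation bound. The other minor subtlety is the implication ``$\bot\Rightarrow$ at least $N-k+1$ bad reversals'', which relies on the fact that once $|E(S,V-S)|$ hits $0$ the algorithm is forced to return a set rather than $\bot$.
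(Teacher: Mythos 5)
Your proof is correct and takes essentially the same route as the paper's: both arguments hinge on \Cref{lem:reverse_cutsize}, bound the probability that a single reversal ends inside $S$ by $\vol^{\out}(S)/|E_{\DFS}|\le\epsilon/8$ using the monotonicity of $\vol^{\out}(S)$, and then apply Markov's inequality to the number of bad reversals, concluding that with constant probability enough good reversals occur to drive $|E(S,V-S)|$ to $0$. Your bookkeeping is a bit more direct (you apply Markov straight to $\Pr[\sum B_i \ge N-k+1]$ rather than going through the paper's intermediate claim $k'-\lfloor\epsilon k'/4\rfloor\ge k-1$) and yields the slightly sharper constant $1/4$; you are also right to read $y$ as the \emph{tail} of the sampled edge, which is what makes the $\vol^{\out}(S)/|E_{\DFS}|$ bound valid and is consistent with \Cref{alg:local_ec_new}, even though the appendix pseudocode's ordered-pair notation $(y',y)$ could be misread as taking the head.
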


\begin{proof}
Suppose that such $S$ exists. We will show that $\bot$ is returned
with probability less than $1/2$. Suppose that no set $S'$ is returned
before the last iteration. It suffices to show that at the beginning
of the last iteration, $|E(S,V-S)|=0$ with probability at least $1/2$.
If this is true, then the DFS tree $T$ in the last iteration will
not be able to visit more than $\nu$ edges and so will return the
set $V(T)$.

Let $k'=\left\lfloor (1+\epsilon)k\right\rfloor -1$ denote the number
of iterations excluding the last one. Let $X_{i}$ be the random variable
where $X_{i}=1$ if the sampled edge $(y',y)$ in the $i$-th iteration
of the algorithm is such that $y\in S$. Otherwise, $X_{i}=0$. As
$\vol^{\out}(S)$ never increases, observe that $E[X_{i}]\le\frac{\vol^{\out}(S)}{|E_{\DFS}|}\le\frac{\nu}{8\nu/\epsilon}=\epsilon/8$
for each $i\le k'$. Let $X=\sum_{i=1}^{k'}X_{i}$. We have $E[X]\le\epsilon k'/8$
by linearity of expectation and $\Pr[X\le\epsilon k'/4]\ge1/2$ by
Markov's inequality. So $\Pr[X\le\left\lfloor \epsilon k'/4\right\rfloor ]\ge1/2$
as $X$ is integral.

Let $Y=k'-X$. Notice that $Y$ is the number of times before the
last iteration where the algorithm samples $y\notin S$. We claim
that $k'-\left\lfloor \epsilon k'/4\right\rfloor \ge k-1$ (see the
proof at the end). Hence, with probability at least $1/2$, $Y\ge k'-\left\lfloor \epsilon k'/4\right\rfloor \ge k-1\ge|E(S,V-S)|$.
By \Cref{lem:reverse_cutsize}, if $Y\ge|E(S,V-S)|$, then $|E(S,V-S)|=0$
at the beginning of the last iteration. This concludes the proof.
\begin{claim}
$k'-\left\lfloor \epsilon k'/4\right\rfloor \ge k-1$ for $\epsilon\in[0,1]$
\end{claim}

\begin{proof}
If $\epsilon<4/k'$, then $\left\lfloor \epsilon k'/4\right\rfloor =0$,
so $k'-\left\lfloor \epsilon k'/4\right\rfloor =\left\lfloor (1+\epsilon)k\right\rfloor -1\ge k-1$.
If $\epsilon\ge4/k'$, then\footnote{The main reason we choose the factor $8$ in the number $8\nu/\epsilon$
of visited edges by the DFS is for simplifying the following inequalities.}
\begin{align*}
k'-\left\lfloor \epsilon k'/4\right\rfloor  & \ge(1-\epsilon/4)k'\\
 & \ge(1-\epsilon/4)((1+\epsilon)k-2)\\
 & \ge(1-\epsilon/4)(1+\epsilon)k-2\\
 & \ge(1+\epsilon/2)k-2\\
 & \ge k-1.
\end{align*}
where the last inequality is because $\epsilon k/2\ge\frac{4}{k'}\cdot\frac{k}{2}\ge1$
as $k'\le\left\lfloor (1+\epsilon)k\right\rfloor \le2k$.
\end{proof}
\end{proof}

\end{document}